\documentclass[11pt]{article}
\usepackage{amsmath,amssymb,amsfonts,amsthm,epsfig}
\usepackage[usenames,dvipsnames]{xcolor}
\usepackage{bm,xspace}
\usepackage{tcolorbox}
\usepackage{cancel}
\usepackage{fullpage}
\usepackage{liyang}
\usepackage{framed}
\usepackage{verbatim}
\usepackage{enumitem}
\usepackage{array}
\usepackage{multirow}
\usepackage{afterpage}
\usepackage{mathrsfs}
\usepackage{pifont} 
\usepackage{chngpage}
\usepackage[normalem]{ulem}
\usepackage{boxedminipage}
\usepackage{caption}
\usepackage{subcaption}
\usepackage{bold-extra}
\usepackage{forest}
\usepackage{etoolbox}
\usepackage{thm-restate}
\usepackage{microtype}
\usepackage{graphicx}

\usepackage{algorithm}
\usepackage{algorithmicx}
\usepackage{algpseudocode}

\usepackage{tikz}
\usetikzlibrary{calc,through,backgrounds,decorations.pathreplacing, calligraphy,arrows.meta}
\usetikzlibrary{positioning,chains,fit,shapes}
\usetikzlibrary{patterns}

\usepackage{tikz-cd}

\usepackage{pgfplots}
\pgfplotsset{width=8cm,compat=newest}

%
%

\def\colorful{0}

\ifnum\colorful=1
\newcommand{\violet}[1]{{\color{violet}{#1}}}

\newcommand{\red}[1]{{\color{red} {#1}}}

\fi
\ifnum\colorful=0
\newcommand{\violet}[1]{{{#1}}}

\newcommand{\red}[1]{{{#1}}}

\fi

%
%

\newcommand{\D}{\mathcal{D}}

\newcommand{\error}{\mathrm{error}}

\newcommand{\Adv}{\mathrm{Adv}}
\newcommand{\mistakes}{\mathrm{mistakes}}
\newcommand{\mrate}{\mathrm{mrate}}

\newcommand{\NP}{\mathsf{NP}}

\renewcommand{\P}{\mathsf{P}}

\newcommand{\bmcO}{\boldsymbol{\mcO}}

\newcommand{\VCdim}{\mathrm{VCdim}}

\newcommand{\Unif}{\mathrm{Unif}}

\newcommand{\pparagraph}[1]{\bigskip \noindent {\bf {#1}}}

\Crefname{theorem}{Theorem}{Theorems}
\Crefname{thmenumi}{Theorem}{Theorems}
\AtBeginEnvironment{theorem}{%
    \crefalias{enumi}{thmenumi}%
    \setlist[enumerate,1]{
        label={\textit{(\roman*)}},
        ref={\thetheorem(\roman*)}
    }%
}

\begin{document}
\title{
Samplability makes learning easier\\ \vspace{15pt}
}

\newcommand{\authcell}[2]{%
  \begin{tabular}[t]{@{}c@{}}#1\\[6pt]{\slshape #2}\end{tabular}%
}
 \author{%
 \hspace{2pt}
 \begin{tabular*}{\textwidth}{@{\extracolsep{\fill}}ccccc}
   \authcell{Guy Blanc}{Stanford} &
   \authcell{Caleb Koch}{Stanford} &
   \authcell{Jane Lange}{MIT} &
  \hspace{-10pt} \authcell{Carmen Strassle}{Stanford} &
 \hspace{-15pt}
   \authcell{Li-Yang Tan}{Stanford}
 \end{tabular*}
 \vspace{15pt}
 }

\date{\vspace{15pt}\small{\today}}

\maketitle

\begin{abstract}
    The standard definition of PAC learning (Valiant 1984) requires learners to succeed under all distributions---even ones that are intractable to sample from. This stands in contrast to samplable PAC learning (Blum, Furst, Kearns, and Lipton 1993), where  learners only have to succeed under samplable distributions. We study this distinction and show that samplable PAC substantially expands the power of efficient learners.
    
We first construct a concept class that requires exponential sample complexity in standard PAC but is learnable with polynomial sample complexity in samplable PAC. We then lift this statistical separation to the computational setting and obtain a separation relative to a random oracle. Our proofs center around a new complexity primitive, {\sl explicit evasive sets}, that we introduce and study. These are sets for which membership is easy to determine but are extremely hard to sample from.      
    
    Our results extend to the online setting to similarly show how its landscape changes when the adversary is assumed to be efficient instead of computationally unbounded. 
\end{abstract}




 \thispagestyle{empty}
 \newpage

 \thispagestyle{empty}
 \setcounter{tocdepth}{2}
 \tableofcontents
 \thispagestyle{empty}
 \newpage

 \setcounter{page}{1}

\section{Introduction}

In the PAC model~\cite{Val84}, the learner is given labeled data $(\boldsymbol{x},f(\boldsymbol{x}))$ where $\boldsymbol{x}$ is drawn from a distribution $\mathcal{D}$. \violet{The functions $f$ is promised to belong to a known, often simple, concept class, but no assumptions are made about $\mcD$.} Notably, the learner is required to succeed even under distributions that are  intractable to sample from. These are distributions $\mathcal{D}$ for which any generator $\mathcal{G}$ such that $\mathcal{G}(\mathrm{Unif}) = \mathcal{D}$ must have superpolynomial circuit size. Consequently, making even a single draw $\boldsymbol{x} \sim \mathcal{D}$ takes superpolynomial time. 

This is arguably an overly stringent requirement. PAC learners are expected to be efficient and yet the distribution, which can be viewed as an adversary in this context, is allowed to be computationally unbounded. This stacks the odds in favor of the adversary and is  one reason why efficient PAC learning algorithms have been hard to come by even for simple concept classes. Furthermore, if $\mathcal{D}$ is intractable to sample from then it alone renders the entire learning process inefficient, regardless of the efficiency of the learner. If the PAC model is to capture efficient end-to-end learning, from data collection to hypothesis generation, we may as well consider only samplable distributions. Relatedly, if one believes in the strong Church-Turing thesis, then all distributions occurring in practice are samplable. Quoting~\cite{Imp95fiveworlds}, ``Presumably, real life is not so adversarial that it would solve intractable problems just to give us a hard time." 

\paragraph{Samplable PAC.} It is therefore natural to consider a variant of the PAC model, samplable PAC, where the distribution is assumed to be samplable but is otherwise still unknown and arbitrary. This strikes the balance of imposing just enough structure on the distribution to place the learner and adversary on equal footing, while still allowing for enough expressivity to capture the complexities of real-world learning.

 The samplable PAC model was first considered by Blum, Furst, Kearns, and Lipton~\cite{BFKL93}. \violet{The focus of their paper was not on the distinction between samplable and standard PAC.} Rather, they showed that efficient learning in samplable PAC is tightly connected to the existence of  fundamental cryptographic primitives. As their main result, they proved that if one-way functions do not exist, then every concept class is {\sl average-case} learnable in samplable PAC.\footnote{This is a further relaxation of samplable PAC where there is an additional distribution, this one over target functions, and the learner is only required to succeed with respect to a random target function drawn from this distribution. We do not consider this variant in our work.} 




\subsection{This work}

\violet{We study the distinction between samplable and standard PAC. 
We are interested in formalizing the extent to which the assumption of samplability---a seemingly mild and reasonable assumption---expands the power of efficient learners. In measuring efficiency, we focus on the two most basic resources in learning: samples and runtime.}


\paragraph{Statistical separation.}  
The sample complexity of learning in standard PAC is fairly well-understood, in large part due to an elegant characterization in terms of VC dimension~\cite{VC71,BEHW89}---a foundational result now called ``The Fundamental Theorem of PAC Learning"~\cite{SSBD14}.  \violet{A VC dimension lower bound is generally viewed as an information-theoretic no-go in terms of efficient learnability: If a learning task cannot be learned with a reasonable number of samples, that trivially implies that it cannot be learned in a reasonable amount of time either.} 

Our first result is as follows: 
\smallskip 

 \begin{tcolorbox}[colback = white,arc=1mm, boxrule=0.25mm]
\begin{theorem}[See~\Cref{thm:statistical formal} for the formal version]
    \label{thm:statistical intro}
    There is a concept class with exponential VC dimension---and hence requires exponential sample complexity in standard PAC---but is learnable with polynomial sample complexity, \violet{and in fact even in polynomial time}, in samplable PAC. 
\end{theorem}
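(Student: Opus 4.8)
The plan is to obtain the concept class as a direct encoding of an explicit evasive set, so that the gap between standard and samplable PAC becomes exactly the gap between distributions that can and cannot ``find'' that set. Fix an explicit evasive set $S \subseteq \{0,1\}^n$ of size $2^{\Theta(n)}$: membership in $S$ is efficiently decidable, while $S$ is evasive in the sense that for every polynomial-size generator $\mathcal{G}$ and every $\gamma > 0$, all but a $\gamma$ fraction of the mass that $\mathcal{G}(\mathrm{Unif})$ places on $S$ is concentrated on some $\mathrm{poly}(n,1/\gamma)$ points (intuitively, a small generator cannot ``know'' enough about $S$ to spread its mass across many of its points). Over the domain $\{0,1\}^n$, define $\mathcal{C} = \{\, c_T : T \subseteq S \,\}$ with $c_T = \mathbf{1}_T$ --- equivalently, $\mathcal{C}$ is the class of all Boolean functions vanishing outside $S$. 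The restrictions of these concepts to $S$ are exactly the $2^{|S|}$ subsets of $S$, so $\mathcal{C}$ shatters $S$ and $\VCdim(\mathcal{C}) \ge |S| = 2^{\Theta(n)}$. By the Fundamental Theorem of PAC learning~\cite{VC71,BEHW89,SSBD14}, standard PAC learning $\mathcal{C}$ --- even to constant error and confidence --- requires $\Omega(\VCdim(\mathcal{C})) = 2^{\Omega(n)}$ samples; the distribution witnessing this is the uniform distribution on $S$, which is precisely the sort of (unsamplable) distribution the samplability restriction forbids.

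For the samplable-PAC upper bound, fix any samplable $\D$ and target $c_T \in \mathcal{C}$. The learner draws $m = \mathrm{poly}(n,1/\epsilon,\log(1/\delta))$ labeled examples and outputs the hypothesis $h$ that predicts each sampled point's observed label and predicts $0$ everywhere else. Since every concept vanishes off $S$ and $h$ is consistent with its samples, $h$ can disagree with $c_T$ only on points of $T \subseteq S$ that were never sampled, so $\error(h) \le \D(S \setminus \{\text{sampled points}\})$. Now invoke evasiveness: either $\D(S) < \epsilon$ and $h \equiv 0$ already suffices (with zero samples), or $\D$'s mass on $S$ is, up to an $\epsilon/3$ slack, carried by a set $P$ of $q = \mathrm{poly}(n,1/\epsilon)$ points; the at most $q$ points of $P$ of $\D$-mass below $\epsilon/(3q)$ contribute total mass $< \epsilon/3$, and the remaining at most $3q/\epsilon$ heavier points are each hit by some sample except with probability $\le \epsilon\delta/(3q)$ once $m \gtrsim (q/\epsilon)\log(q/(\epsilon\delta))$, so a union bound yields $\error(h) \le \epsilon$ with probability $\ge 1-\delta$. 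The learner runs in $\mathrm{poly}(n,1/\epsilon,\log(1/\delta))$ time and $h$ is trivially efficiently evaluable, which gives the ``and in fact even in polynomial time'' strengthening.

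The step I expect to be the main obstacle --- and which the abstract identifies as the conceptual core --- is showing that explicit evasive sets exist at all. The requirements are in tension: $S$ must be large ($|S| = 2^{\Theta(n)}$, to force exponential VC dimension), must have efficiently decidable membership (so $\mathcal{C}$ is a genuinely explicit concept class), and must be evasive (so no small generator can place non-negligible mass on it without collapsing onto polynomially many points). The natural existence argument is a counting / incompressibility bound --- a uniformly random set of the right size is evasive because any single fixed polynomial-size generator encodes vastly less information than is needed to locate many of its points --- but this produces a non-explicit object; turning it into an explicit construction is where the real work lies (for the separate computational/oracle version of the result, a random oracle conveniently supplies such a set, with one oracle query serving as the membership test and a polynomial-size oracle circuit being far too query-limited to hit the set). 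A secondary, more routine matter is settling on the exact quantitative form of evasiveness and checking it is strong enough to drive the sample-complexity estimate above.
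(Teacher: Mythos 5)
Your core construction and learner match the paper's almost exactly: the concept class is the restriction of all Boolean functions to an evasive set $H$ (the paper's $\mathcal{A}_H$), the lower bound comes from $\mathcal{A}_H$ shattering $H$ and so having VC dimension $|H|$, and the upper bound comes from a ``memorize-and-default-to-$0$'' learner whose error is bounded by the evasiveness of $H$ (the paper's Lemma 2.8). That part of the argument you have right.

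The gap is in your last paragraph, where you declare the ``main obstacle'' to be the existence of \emph{explicit} evasive sets, and assert as a requirement that ``$S$ must have efficiently decidable membership (so $\mathcal{C}$ is a genuinely explicit concept class).'' This is a phantom requirement for the statistical separation. Neither the VC dimension lower bound nor the memorizing learner ever needs to evaluate a concept, decide membership in $H$, or even have an algorithmic description of $H$: the lower bound is information-theoretic, and the learner only ever sees labeled examples, never the target function itself. The hypothesis it outputs (a small DNF of memorized terms) is efficiently evaluable regardless of how complicated $H$ is. The ``polynomial time'' in the theorem refers to the learner's running time, not to the concept class being an explicit/uniform family. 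Consequently the probabilistic-method existence proof you yourself describe (a random $\mathbf{H}$ is evasive against any fixed size-$s$ generator with failure probability decaying exponentially fast enough to union-bound over all $s^{O(n+s)}$ such generators; this is the paper's Lemma~2.4) already finishes the job for Theorem~\ref{thm:statistical intro}. Explicitness of $H$ is a genuine and nontrivial extra requirement, but only for the \emph{computational} separation (Theorem~\ref{thm:computational intro}), where the separating class must consist of polynomial-size circuits and the hardness must come from cryptography rather than from VC dimension --- and that is exactly where the paper introduces it as a conjecture and gives a random-oracle construction. In short: drop the explicitness from your final paragraph, invoke the probabilistic-method lemma directly, and your argument is complete.
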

\end{tcolorbox}
\medskip

This shows that VC dimension lower bounds can be overly pessimistic. A learning task with large VC dimension may nevertheless be efficiently learnable if one assumes that these samples are generated according to a reasonable distribution. \violet{Put differently, the VC dimension lower bound may be witnessed by an extremely complicated shattering set, one that is intractable to sample from and hence arguably will not arise in the real world---indeed, this intuition is the starting point for our proof of~\Cref{thm:statistical intro}.}

\paragraph{Computational separation.} The concept class in~\Cref{thm:statistical intro} has exponential circuit complexity, necessarily so since the class of size-$s$ circuits has VC dimension $\tilde{O}(s)$.  It is natural to then ask about samplable vs.~standard PAC learning of classes of polynomial-size circuits. Such classes are of interest because \violet{functions that arise in practice can be assumed to be efficiently computable, and relatedly,} the barriers to their efficient learnability are solely computational in nature and cannot rely on information-theoretic impossibility. 

 As already observed in~\cite{Val84}, if $\mathsf{RP} = \mathsf{NP}$ then every concept class of polynomial-size circuits is efficiently learnable in polynomial time in standard PAC. So short of proving $\mathsf{RP}\ne \mathsf{NP}$, any computational separation will have to either rely on complexity assumptions or be relativized. Implicit in the work of Xiao~\cite{Xia10} is a separation relative to a {\sl specific} oracle. We discuss~\cite{Xia10}'s result in~\Cref{sec:related}, mentioning for now that this is an oracle relative to which one-way functions do {\sl not} exist. This therefore should not be viewed as evidence as to whether such a separation exists in the unrelativized world---and~\cite{Xia10} did not claim it as such---since presumably we believe that one-way functions do  exist in the unrelativized world.

Our second result gives a computational separation conditioned on two complexity assumptions, a standard one (ironically, the existence of one-way functions) and a new one that we introduce (the existence of explicit ``evasive sets",~\Cref{key assumption intro}):

\smallskip 

 \begin{tcolorbox}[colback = white,arc=1mm, boxrule=0.25mm]
\begin{theorem}[See \Cref{thm:computational-body} for the formal version]
\label{thm:computational intro}
    Assume the existence of one-way functions and explicit evasive sets. There is a concept class of polynomial-size circuits that requires superpolynomial time to learn in standard PAC, but is learnable in polynomial time in samplable PAC. 
\end{theorem}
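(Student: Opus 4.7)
The approach is to combine a pseudorandom function family, which exists under the assumption of one-way functions, with an explicit evasive set $S \subseteq \{0,1\}^n$ to build the concept class. Let $\{\mathrm{PRF}_k\}_k$ be a PRF family of maps $\{0,1\}^n \to \{0,1\}$ computed by polynomial-size circuits, and let $\mathbb{1}_S$ denote the polynomial-size circuit testing membership in $S$. I would define
\[
\mathcal{C}_n = \bigl\{\, c_k \;:\; c_k(x) = \mathrm{PRF}_k(x) \wedge \mathbb{1}_S(x) \,\bigr\},
\]
so that each concept agrees with a PRF on the evasive set and is $0$ off of it; each $c_k$ is computed by a polynomial-size circuit, meeting the requirement of the theorem.

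\textbf{Easiness in samplable PAC.} The learner simply outputs the constant-zero hypothesis $h \equiv 0$. For any samplable distribution $\mathcal{D}$, the defining property of the evasive set forces $\Pr_{\boldsymbol{x}\sim\mathcal{D}}[\boldsymbol{x} \in S] \le \mathrm{negl}(n)$, so the error of $h$ is negligible and in particular below any prescribed constant $\varepsilon$ for $n$ large enough. This gives a trivial, polynomial-time samplable-PAC learner.

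\textbf{Hardness in standard PAC.} I would take the hard distribution to be $\mathcal{D}^\star = \mathrm{Unif}(S)$, which by evasiveness is \emph{not} samplable. Suppose for contradiction that a polynomial-time learner $L$ achieves error below $1/3$ on $(\mathcal{D}^\star, c_{\boldsymbol{k}})$ with high probability over uniformly random $\boldsymbol{k}$. I would build a \emph{non-uniform} polynomial-time distinguisher $D$ for the PRF as follows. Hard-code as advice $m + m'$ independent samples $x_1,\ldots,x_m, y_1,\ldots,y_{m'}$ from $\mathcal{D}^\star$. Given oracle access to $g \in \{\mathrm{PRF}_k, R\}$, the distinguisher queries $g$ on the $x_i$'s to produce a correctly-distributed labeled sample from $(\mathcal{D}^\star, c_k)$, runs $L$ to obtain a hypothesis $h$, and outputs ``PRF'' iff the empirical agreement between $h$ and $g$ on $y_1,\ldots,y_{m'}$ is close to $1$. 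In the PRF case, $L$'s guarantee gives agreement $\ge 1-\varepsilon$; in the random case, the $y_j$'s are distinct from the $x_i$'s with high probability (because $|S|$ can be taken super-polynomial), so each $R(y_j)$ is independent of $h$ and the agreement is $\approx 1/2$. The resulting non-negligible advantage contradicts PRF security, which follows from the assumed (non-uniform) one-way functions.

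\textbf{Main obstacle.} The principal subtlety is that the very non-samplability of $S$ that makes $\mathcal{C}$ easy in samplable PAC simultaneously obstructs a \emph{uniform} reduction in the hardness proof: the distinguisher cannot generate samples from $\mathcal{D}^\star$ on its own, so advice must be baked in and the argument must be cast against non-uniform PRF adversaries (a standard but not cost-free strengthening of the one-way functions assumption). A secondary care is ensuring the training and test halves of the advice are disjoint with high probability, handled by requiring $|S|$ to be super-polynomial so that birthday collisions over polynomially many samples are negligible.
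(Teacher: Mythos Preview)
Your concept class and your hardness argument are essentially the paper's: define $c_k(x) = \mathrm{PRF}_k(x)\wedge \mathbb{1}_S(x)$, take $\mathrm{Unif}(S)$ as the hard distribution, and hard-code samples from $\mathrm{Unif}(S)$ into a nonuniform distinguisher (the paper uses a single held-out test point rather than $m'$, but the idea is the same, and the birthday concern about $|S|$ being superpolynomial is exactly the $\alpha$ term in the paper's Lemma~\ref{lem:PRFF-to-hardnes-of-learning}).

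The genuine gap is in your samplable-PAC learner. You assert that evasiveness forces $\Pr_{\bx\sim\mathcal{D}}[\bx\in S]$ to be negligible, and therefore the all-zeros hypothesis works. But this is precisely the naive notion of evasiveness that the paper rules out: for \emph{any} set $S$, a size-$O(n)$ generator can memorize a single $x^\star\in S$ and output it always, so $\Pr_{\bx\sim\mathcal{D}}[\bx\in S]=1$; the all-zeros hypothesis then has error roughly $1/2$. The paper's actual definition of evasiveness is $(\eps,k)$-missing (Definition~\ref{def:epsk-miss intro}): there exist $k$ ``heavy'' points $H^\star\subseteq S$ such that $\mathcal{D}(S\setminus H^\star)<\eps$, but those $k$ points can carry arbitrary mass. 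Consequently the correct samplable-PAC learner is not constant-zero but \emph{memorization}: draw $O(k/\eps)$ labeled samples, output the observed label on any point you have seen and $0$ otherwise (Lemma~\ref{lem:evasiveness implies learnability intro}). The analysis then splits the error into (i) mass on $S\setminus H^\star$, bounded by $\eps$ directly from evasiveness, and (ii) mass on the $k$ heavy points you failed to sample, controlled by the coupon-collector-style bound $\sum_{x\in H^\star}\mathcal{D}(x)(1-\mathcal{D}(x))^m\le k/(em)$. Without this memorization step your upper bound fails against the simplest samplable distributions.
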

\end{tcolorbox}\medskip

The definition of an explicit evasive set is rather technical and we defer it to~\Cref{sec:technical overview}.  For now, we mention that it is a set $H\sse \zo^n$ that is {\sl explicit} in the sense that membership in $H$ can be easily verified (i.e.~the function $x \mapsto \Ind[x\in H]$ is computable in polynomial time), and yet is {\sl evasive} in the sense that any samplable distribution must ``mostly miss" it. The crux of the connection to learning lies in pinning down the appropriate notion of ``mostly miss" (\Cref{def:epsk-miss intro}).  

\violet{

\paragraph{Random oracle separation.} Proving the existence of explicit evasive sets unconditionally is likely difficult: Like in the case of one-way functions, doing so will imply $\mathsf{P}\ne \mathsf{NP}$ (\Cref{claim:hard-to-prove}). We nevertheless prove that they exist relative to a {\sl random} oracle (\Cref{thm:hard-to-sample-relative-to-random-oracle}). Since one-way functions also exist relative to a random oracle, we obtain as a corollary a computational separation of samplable PAC from standard PAC that holds relative to a random oracle, improving on~\cite{Xia10}'s separation for a specific oracle: 

\begin{corollary}[See \Cref{thm:computational-sep-with-random-oracle} for the formal version]
\label{cor:random oracle separation}
The assumptions of~\Cref{thm:computational intro}, and hence the separation between samplable and standard PAC, hold relative to a random oracle. 
\end{corollary}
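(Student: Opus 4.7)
The plan is to verify both hypotheses of~\Cref{thm:computational intro}---the existence of one-way functions and of explicit evasive sets---relative to a random oracle, and then apply the (relativizing) construction of that theorem. First, for one-way functions: Impagliazzo and Rudich showed that with probability $1$ over a random oracle $\bm{\mathcal{O}}$, the function $f^{\bm{\mathcal{O}}}(x) = \bm{\mathcal{O}}(x)$ is one-way relative to $\bm{\mathcal{O}}$. Second, for explicit evasive sets: this is exactly the content of~\Cref{thm:hard-to-sample-relative-to-random-oracle}, which I would invoke as a black box to conclude that, with probability $1$ over the oracle, an explicit evasive set $H^{\bm{\mathcal{O}}}$ exists. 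Intersecting these two probability-$1$ events shows that a single random $\bm{\mathcal{O}}$ realizes both primitives simultaneously with probability $1$.

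It then remains to observe that the proof of~\Cref{thm:computational intro} relativizes. The construction of the concept class, the samplable-PAC learning algorithm, and the standard-PAC hardness reduction all use one-way functions and evasive sets only as black boxes---the hardness reduction is a standard cryptographic transformation of a fast learner into an inverter, while the upper bound depends only on the evasiveness and explicitness of $H$. Because the random-oracle instantiations of these primitives satisfy the same black-box interfaces, the entire argument carries through verbatim in the relativized world, yielding a concept class of polynomial-size $\bm{\mathcal{O}}$-circuits that is easy to learn under samplable distributions and hard to learn under arbitrary distributions, relative to $\bm{\mathcal{O}}$ with probability $1$.

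The main obstacle is not in the corollary itself but in its ingredient~\Cref{thm:hard-to-sample-relative-to-random-oracle}. The natural candidate there is the level set $H^{\bm{\mathcal{O}}} = \{x \in \zo^n : \bm{\mathcal{O}}(x) = 0^{k(n)}\}$ for an appropriate output length $k(n)$; membership is decided by one oracle query, so explicitness is trivial. For evasiveness, a lazy-evaluation argument shows that any polynomial-size sampler $\mathcal{G}^{\bm{\mathcal{O}}}$ touches only polynomially many oracle locations, so on most executions its output $\bm{y}$ is a fresh query whose value $\bm{\mathcal{O}}(\bm{y})$ is essentially uniform, giving $\Pr[\bm{y} \in H^{\bm{\mathcal{O}}}] \approx 2^{-k(n)}$. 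A union bound over all polynomial-size oracle circuits, combined with concentration for the randomness of $\bm{\mathcal{O}}$ on un-queried inputs, then upgrades this per-sampler bound to hold almost surely over $\bm{\mathcal{O}}$ and for all samplers simultaneously---matching the formal $\varepsilon k$-miss property required by the definition of explicit evasive set. Once this heavier analysis in~\Cref{thm:hard-to-sample-relative-to-random-oracle} is in place, the present corollary is merely an assembly of the two random-oracle primitives inside the relativized construction of~\Cref{thm:computational intro}.
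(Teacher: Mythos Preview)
Your proposal is correct and matches the paper's approach: the paper likewise obtains the corollary by noting that one-way functions exist relative to a random oracle (stated as a standard fact) and that explicit evasive sets exist by~\Cref{thm:hard-to-sample-relative-to-random-oracle}, and then observing that the construction of~\Cref{thm:computational intro} relativizes. Your sketch of the evasive-set construction---the level set $\{x:\bm{\mathcal{O}}(x)=0^{k}\}$ analyzed via lazy evaluation and a union bound over oracle circuits---is also essentially the paper's argument for~\Cref{thm:hard-to-sample-relative-to-random-oracle} (the paper phrases it as $\bigwedge_{y\in S}\bm{\mathcal{O}}(x,y)$ and routes the analysis through a ``hardness of finding $k$ distinct accepting inputs'' lemma, but the content is the same).
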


As is standard in complexity theory, we view~\Cref{cor:random oracle separation} as saying that the speedups offered by samplable PAC over standard PAC hold not just for certain structured instances (that may have been tailored for such a separation), but even for generic, unstructured ones. See~\cite{AA14} for a discussion of this point and the role of random oracle separations more generally. 
}

\begin{remark}[Evasive sets and uniform generation]
    The existence of explicit evasive sets is related to, but differs from, the hardness of {\sl uniform generation}~\cite{JVV86}. In uniform generation an algorithm is given the description of a circuit $C : \zo^n\to \zo$ and is asked to sample $\mathrm{Unif}(C^{-1}(1))$ (either exactly or approximately).  The existence of explicit evasive sets  implies the hardness of uniform generation but not vice versa: It could be the case that every polynomial-size circuit $C$ {\sl does} have a corresponding polynomial-size circuit that generates $\mathrm{Unif}(C^{-1}(1))$, but such generators are just hard to {\sl construct} efficiently. 
    \end{remark}



\subsection{Extensions}


\paragraph{Separations within samplable PAC.} The  techniques we use to prove~\Cref{thm:statistical intro,thm:computational intro} extend to give finer-grained separations {\sl within} samplable PAC, showing that learning under distributions with size-$s$ generators can be much easier than under those with size-$S$ generators, even if $s$ is only slightly smaller than $S$.

\begin{theorem}[See \Cref{thm:separations within samplable PAC formal} for the formal version]
    \label{thm:separations within samplable PAC intro}
     For every $s \ge n$ there is a concept class that is learnable with polynomial sample complexity under distributions with size-$s$ generators, and yet requires exponential sample complexity under those with size-$S$ generators for $S \ge \Omega(sn \log s)$.  
\end{theorem}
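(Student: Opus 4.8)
The plan is to re-use, essentially verbatim, the concept-class construction behind \Cref{thm:statistical intro}, changing only one combinatorial object inside it. Recall that that proof builds a concept class $\mathcal{C}_H$ parametrized by a set $H \subseteq \{0,1\}^n$ --- roughly, $\mathcal{C}_H$ consists of the functions that behave arbitrarily on $H$ and trivially off $H$, so $\mathcal{C}_H$ shatters $H$ and has VC dimension $|H|$ --- together with two facts about it: (a) under any distribution whose generator $(\epsilon,k)$-misses $H$ in the sense of \Cref{def:epsk-miss intro}, $\mathcal{C}_H$ is learnable from $\mathrm{poly}(n,k,1/\epsilon)$ samples and in $\mathrm{poly}(n,k,1/\epsilon)$ time; and (b) under a distribution spread sufficiently evenly over $H$ (e.g.\ $\Unif(H)$), learning $\mathcal{C}_H$ requires $\Omega(|H|)$ samples. \Cref{thm:statistical intro} comes from exhibiting an $H$ that is $(\epsilon,k)$-missed by \emph{every} polynomial-size generator while some distribution spread over $H$ is sampled by a circuit of size $\tilde{O}(|H|\cdot n)$ (table lookup). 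To get \Cref{thm:separations within samplable PAC intro} it suffices to replace this $H$ by one whose ``samplability complexity'' sits in the prescribed window: $(\epsilon,k)$-missed by every circuit of size $s$, yet with a distribution spread over it that is generated by a circuit of size $S = O(sn\log s)$.

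So the theorem reduces to a combinatorial statement: for every $s \ge n$ there are $\epsilon = 1/\mathrm{poly}$, $k = \mathrm{poly}$, a set $H \subseteq \{0,1\}^n$ of the desired (large) size, and a size-$S$ circuit with $S = O(sn\log s)$ generating a distribution spread over $H$, such that the output distribution of \emph{every} size-$s$ circuit does $(\epsilon,k)$-miss $H$. Feeding such an $H$ into (a) and (b) yields exactly the claimed ``easy under size-$s$ generators, hard under size-$S$ generators'' dichotomy, with the sample-complexity gap as wide as the VC dimension allows.

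I would prove the combinatorial statement by the probabilistic method, drawing $H$ from a suitable distribution over sets --- concretely, as the image of a random circuit of size $S$, so that the ``hard'' generator is baked in for free and property (b) holds with high probability. The content is the $(\epsilon,k)$-miss guarantee. Fix one circuit $G$ of size $s$: its output distribution has a set $B_G$ of at most $k/\epsilon = \mathrm{poly}$ points of non-negligible probability, and --- up to the exact form of \Cref{def:epsk-miss intro} --- $G$ failing to $(\epsilon,k)$-miss $H$ forces $\Omega(k)$ points of $B_G$ to land in $H$. Since $H$ is ``random-looking'' of small density in $\{0,1\}^n$, this event has probability at most $\binom{|B_G|}{k}\left(|H|/2^{n}\right)^{k}$, which is $2^{-\Omega(kn)}$ up to lower-order terms. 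There are at most $2^{O(s\log s)}$ circuits of size $s$, so a union bound over all of them closes once $kn$ is a suitable constant factor above $s\log s$; that balance, together with the cost (a table of roughly $|H|$ strings of $n$ bits) of describing a generator that hits $H$, is what produces the threshold $S \ge \Omega(sn\log s)$. Properties (a) and (b), and the learner itself, are imported unchanged from the proof of \Cref{thm:statistical intro}.

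The main obstacle is the quantitative tightness --- keeping the ``hard'' generator within an $\tilde{O}(n)$ factor of the size $s$ that $H$ already evades --- which forces a careful calibration of the set size and of the parameters $\epsilon, k$ so that the $2^{-\Omega(kn)}$ per-circuit bound comfortably beats the $2^{O(s\log s)}$ count of size-$s$ circuits while $H$ remains samplable by a circuit of size only $O(sn\log s)$. This is, in effect, a \emph{tight} hierarchy theorem for samplability; getting the same argument to run with $|H|$ and the ratio $S/s$ at intermediate values gives the fuller trade-off recorded in \Cref{thm:separations within samplable PAC formal}.
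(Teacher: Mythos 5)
Your high-level framework is exactly the paper's: take the concept class $\mcC_H$ for a carefully built $H$, use evasiveness for the samplable-PAC upper bound via \Cref{lem:evasiveness implies learnability intro}, and use the shattering/VC lower bound for the standard-PAC side. The quantitative balance you describe---$H$ must evade every size-$s$ distribution while itself being enumerable by a circuit of size $O(sn\log s)$---is also the right target. However, the key combinatorial step is not carried out correctly.

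The gap is in ``concretely, as the image of a random circuit of size $S$.'' This does not yield a set with the two properties you then rely on. First, for the per-circuit failure bound you want the membership events $\{\Ind[x\in H]\}_x$ to be independent enough---at least $k$-wise independent---so that the probability that a small prescribed set $B_G$ intersects $H$ in $\Omega(k)$ points is $\approx \binom{|B_G|}{k}\delta^k$. The image of a uniformly random size-$S$ circuit has no such guarantee: a random circuit can be highly degenerate, its image can be tiny (constant circuits are in the support), and the events $\{x\in H\}$ are heavily correlated. Second, even granting a nice image, the generating circuit $C'$ whose image is $H$ need not sample anything close to $\mathrm{Unif}(H)$---it could map almost all seeds to one output---so the ``hard generator baked in for free'' is not actually the distribution you need for the $\Omega(|H|)$ lower bound (which, following \Cref{fact:shattering}, places $1-8\eps$ mass on one shattered point and spreads the rest; sampling that requires an injective index circuit and a small counter).

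What the paper does instead (\Cref{lem:samplability hierarchy formal}) is: fix an \emph{explicit} PRG $\mcG:\zo^\ell\to(\zo^n)^m$ for $d$-wise independent, marginally uniform points with $d=\Theta(t\log t)$ and $\ell=O(dn)$ (\Cref{fact:explicit prgs for rvs}), sample only the seed $\br$, and set $H=\{\mcG(\br)^{(1)},\dots,\mcG(\br)^{(m)}\}$. This gives exactly $d$-wise independence among the $m$ pseudorandom points, letting \Cref{fact:SSS} drive a per-distribution tail bound of $\exp(-\Omega(k\eps))$ that a union bound over the $s^{O(s)}$ size-$s$ circuits closes (\Cref{claim:samplable H is evasive}). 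Separately, \Cref{claim:H is large} shows the indexing map $i\mapsto\mcG(r)^{(i)}$ is injective w.h.p., so $i\mapsto\mcG(r)^{(i)}$ on a uniform $i$ really does generate $\mathrm{Unif}(H)$, and that index circuit has size $O(\ell\log\ell)=O(tn\log t)$ by the PRG's explicitness. So the pseudorandomness must come from a structured PRG whose seed you randomize, not from a fully random circuit; the latter has neither the independence needed for your union bound nor the injectivity needed for the hard generator.
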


While~\Cref{thm:statistical intro} shows that there are learning tasks whose sample complexity scale smoothly with the complexity of the  distribution,~\Cref{thm:separations within samplable PAC intro} shows that there are ones for which a slight increase in the complexity of the distribution results in a dramatic increase in sample complexity. See~\Cref{fig:graph2}. We also prove a computational analogue of~\Cref{thm:separations within samplable PAC intro}. See~\Cref{thm:computational analogue of separations within samplable PAC formal}.

\begin{figure}[h]
\hspace{-1.2cm}
        \begin{tikzpicture}
         \pgfplotsset{small}
        \matrix {
            \begin{axis}[
                        scale only axis,
                        width=0.25\linewidth,
                        xtick={0,0.5,1},
                        xticklabels={$1$,{\small $\poly(n)$},$2^n$},
                        ytick={0,0.5,1},
                        yticklabels={$1$,{\small $\poly(n)$},$2^n$},
                        xmin=0, xmax=1,
                        ymin=0, ymax=1,
                        xlabel = {Distribution complexity},
                        ylabel = {Sample complexity},
                        y label style = {text width=1.8cm,align=left, rotate=-90, at={(-0.4,0.5)}},
                        axis lines=left,
                        clip=false,
                        xtick style={draw=none},
                        ytick style={draw=none},
                    ]
                \addplot[color=blue,smooth,thick,-,domain=0:1, forget plot] {x};
                \addplot[gray, dashed, forget plot] coordinates {(0.5,0) (0.5,0.5)};
                \addplot[gray, dashed, forget plot] coordinates {(0,0.5) (0.5,0.5)};
            \end{axis}
            & \hspace{1cm}
            \begin{axis}[
                        scale only axis,
                        width=0.25\linewidth,
                        xtick={0,0.25,0.5,1},
                        xticklabels={$1$,{},{\small $S$},$2^n$},
                        ytick={0,0.25,1},
                        yticklabels={$1$,$\poly(s)$,$2^n$},
                        xmin=0, xmax=1,
                        ymin=0, ymax=1,
                        xlabel = {Distribution complexity},
                        ylabel = {Sample complexity},
                        y label style = {text width=1.8cm,align=left, rotate=-90, at={(-0.3,0.5)}},
                        axis lines=left,
                        clip=false,
                        xtick style={draw=none},
                        ytick style={draw=none}
                    ]
                \addplot[color=teal,smooth,thick,-,domain=0:0.5, forget plot] {x};
                \addplot[color=teal,smooth,thick,-,domain=0.5:1, forget plot] {1};
                \addplot[gray, dashed, forget plot] coordinates {(0.5,0) (0.5,1)};
                \addplot[gray, dashed, forget plot] coordinates {(0.25,0) (0.25,0.25)};
                \addplot[gray, dashed, forget plot] coordinates {(0,0.25) (0.25,0.25)};
                \filldraw[black] (.25,-.106) node[] {\small $s$};
            \end{axis} \\
            };
        \end{tikzpicture}

        \captionsetup{width=.93\linewidth}
    \caption{The left and right plots illustrate how the sample complexities of the learning tasks in~\Cref{thm:statistical intro,thm:separations within samplable PAC intro} respectively scale with the complexity of the distribution. See their formal versions for the quantitative parameters. 
    } 
    \label{fig:graph2}
\end{figure}
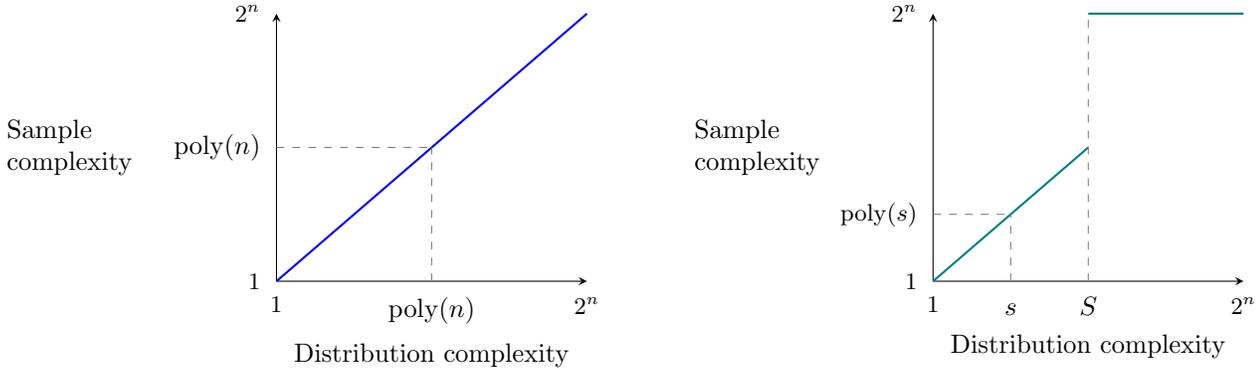



\paragraph{Online learning.} Another well-studied model of supervised learning is the online mistake bound model~\cite{Lit88}. Learning in this model proceeds in rounds. In each round, the adversary presents the learner with an unlabeled instance $x \in \zo^n$. The learner responds with its prediction $y \in \zo$ and is then told whether that is correct (i.e.~whether $y = f(x)$).  The goal of the learner is to minimize its total number of mistakes. 

Here again the standard definition allows the adversary to be computationally unbounded---it can take superpolynomial time to produce the test instance in each round. Yet, efficient online learners are expected to be efficient even against such adversaries. For the same reasons as in the PAC setting, it is therefore natural to consider the variant where the adversary is also assumed to be efficient. In~\Cref{sec:online} we show how our techniques can be extended to the online setting to similarly show how the complexity of online learning---both in terms of mistake bounds and the runtime of learners---can depend on the power of the adversary.




\section{Technical overview}
\label{sec:technical overview}

 Our proofs center around a new notion of {\sl evasive sets} that we introduce and study. These are sets that, as their name suggests, evade all samplable distributions. Let us now make this precise.


\subsection{Defining evasive sets}  

Given a distribution $\mathcal{D}$ and a  set $H$, both over $\zo^n$, we say that $\mathcal{D}$ {\sl $\eps$-misses} $H$ if it places less than $\eps$ mass on $H$: 

\begin{definition}[$\eps$-miss]
    \label{def:eps miss}
    A distribution $\mcD$ \emph{$\eps$-misses} a set $H$ if $\mcD(H) \coloneqq \Prx_{\bx \sim \mcD}[\bx \in H] < \eps$. Otherwise, we say that $\mathcal{D}$ \emph{$\eps$-hits} $H$.
\end{definition}    

A first attempt at the definition of an evasive set $H$ is one for which all samplable distributions $\eps$-miss it. However, no such $H$ can exist. For any $H$, a size-$O(n)$ circuit can memorize a specific $x \in H$ and generate the distribution that places all its mass on $x$.  This distribution $1$-hits $H$.   More generally, a size-$s$ circuit can memorize $\approx s/n$ many points in $H$. We therefore modify~\Cref{def:eps miss} to exclude the heaviest elements of the distribution: 

\medskip  

 \begin{tcolorbox}[colback = white,arc=1mm, boxrule=0.25mm]
\begin{restatable}[$(\eps,k)$-miss]{definition}{epskmiss}\label{def:epsk-miss intro}
     A distribution $\mathcal{D}$ \emph{$(\eps,k)$-misses $H$} if there exists a set $H^*$ of size $k$ such that $\mathcal{D}(H\setminus H^*) < \eps$. Otherwise, we say that $\mathcal{D}$ \emph{$(\eps,k)$-hits} $H$. 
\end{restatable}
\end{tcolorbox}\medskip

See~\Cref{fig:eps-k-missing} for an example that illustrates this definition. 

\begin{figure}[!htb]
    \centering

\begin{tikzpicture}

\draw[blue, thick] (3,3) rectangle (9,9);
\node[blue] at (9.3,9.3) {\Large$H$};

\draw[magenta, thick, fill=magenta!5] (3.3,7.5) rectangle (6.8,8.5);
\draw[<-, magenta, thick] (3.3,8.0) -- (2.5,8.0);
\node[magenta, left] at (2.5,8.0) {\footnotesize$\varepsilon$ weight};

\draw[magenta, thick, fill=magenta!15] (0,4.5) rectangle (2,7.5);
\draw[<-, magenta, thick] (1.0,4.5) -- (1.0,3.5);
\node[magenta, below] at (1.0,3.5) {\footnotesize$0.1 - \varepsilon$ weight};

\draw[magenta, thick, fill=magenta!70] (6.9,4.9) rectangle (7.6,5.6);
\draw[<-, magenta, thick] (7.25,4.9) -- (7.25,4.3);
\node[magenta, below, align=center] at (7.2,4.3) {\footnotesize$0.9$ weight on $k$ points};
\node[magenta] at (8,5.8) {\large$H^*$};

\node[magenta] at (1.0,6.0) {\Large$\mathcal{D}$};

\end{tikzpicture}

            \captionsetup{width=.93\linewidth}
    \caption{The weight distribution of $\mathcal{D}$ is illustrated by the 3 pink rectangles. The sizes of these rectangles depict the number of points and their shades depict the amount of weight. Since $\mathcal{D}$ places $0.9+\eps$ weight on $H$, it $(0.9+\eps)$-hits $H$. However, since $0.9$ amount of this weight is concentrated on the $k$ points in $H^*$, it $(\eps',k)$-misses $H$ for any $\eps' > \eps$.} 
    \label{fig:eps-k-missing}
\end{figure}
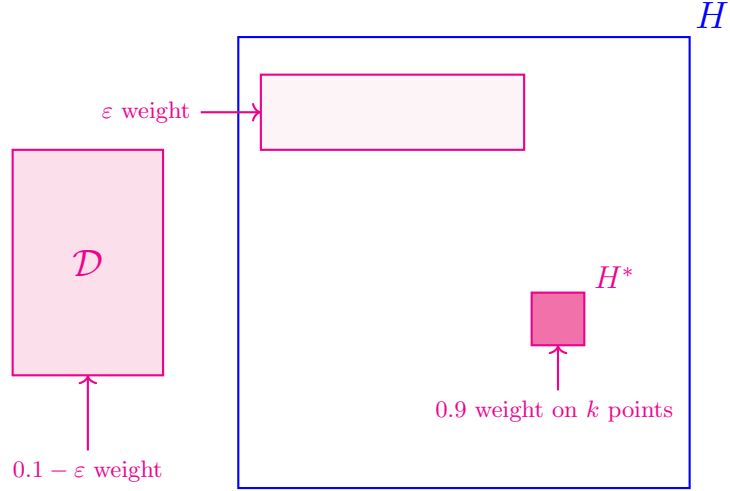

\begin{remark}[Comparison with TV distance]
\label{rem:miss vs TV}
This notion is stronger than $\mathcal{D}$ having large TV distance from $\mathrm{Unif}(H)$. If $\mathcal{D}$ $(\eps,k)$-misses $H$ then
    \[ \dist_{\mathrm{TV}}(\mathcal{D},\mathrm{Unif}(H)) \ge 1-\eps - \frac{k}{|H|}.\]
On the other hand, distributions with large TV distance from $\mathrm{Unif}(H)$ can hit $H$. For example, for any set $H$ and a distribution~$\mcD$ that is uniform on $Ck$ points within $H$, the TV distance between~$\mathcal{D}$ and $\mathrm{Unif}(H)$ is large, $1-\frac{Ck}{|H|}$,  and yet $\mcD$ $(1-\frac1{C}, k)$-hits $H$.
\end{remark}
    
We are now ready to define evasive sets. For brevity, we refer to distributions that have a size-$s$ generator as ``size-$s$ distributions". Samplable distributions are therefore size-$\poly(n)$ distributions. 

\begin{restatable}[$(\eps,k)$-evades size-$s$ distributions]{definition}{epskevades} 
     A set $H \sse \zo^n$ \emph{$(\eps,k)$-evades size-$s$ distributions} if every size-$s$ distribution $(\eps,k)$-misses $H$.
\end{restatable}

We will be interested in the regime where $\eps$ is small and $k \approx s$,  capturing the notion that  the best thing a size-$s$ distribution can do in terms of approximating $\mathrm{Unif}(H)$ is to simply memorize as many points in $H$ as its size allows and output the uniform distribution over those points. If $|H|\gg s$, as will be the case in our constructions, this is a very bad approximation of $\mathrm{Unif}(H)$.

\subsection{A conjecture about explicit evasive sets}
\label{sec:evasive}

\paragraph{A non-explicit construction.} Our statistical separation of samplable PAC from standard PAC (\Cref{thm:statistical intro}) relies on the existence of large evasive sets. Largeness will be useful for our lower bounds against standard PAC whereas the evasiveness  will be useful for our upper bounds in samplable PAC. We prove:  

\begin{restatable}[Existence of an evasive set]{lemma}{CountingArgument}
    \label{lem:non-explicit intro}
    For any $\delta \ge 2^{-n}$ there is a $\delta$-dense set $H \subseteq \zo^n$  that $(\eps, O((s\log s)/\eps))$-evades all size-$s$ distributions for all $s \ge n$ and $\eps \ge 4\delta$.
\end{restatable}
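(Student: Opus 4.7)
The plan is a probabilistic existence argument. Sample $H$ uniformly at random from all subsets of $\zo^n$ of size exactly $m \coloneqq \lceil \delta \cdot 2^n \rceil$; this automatically guarantees that $|H|/2^n \ge \delta$. I then show that with positive probability over this choice of $H$, for every $s \ge n$, every $\eps \ge 4\delta$, and every size-$s$ generator, the resulting distribution $(\eps, k)$-misses $H$ for $k = k(s,\eps) = C(s\log s)/\eps$, where $C$ is an absolute constant fixed at the end. Since the statement only needs existence of one such $H$, union bounding gives the lemma.

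Fix a size-$s$ generator $\mcG$ producing distribution $\mcD$, and fix $\eps \ge 4\delta$. Let $H^*_{\mcG} \sse \zo^n$ be the set of $k$ points carrying the largest $\mcD$-mass (breaking ties arbitrarily). Since $\mcD$ is a probability distribution, the $k$-th largest mass is at most $1/k$, so every point outside $H^*_{\mcG}$ has mass $\le 1/k$. Define the subprobability measure $\mcD'(x) \coloneqq \mcD(x)\cdot \Ind[x\notin H^*_{\mcG}]$. Choosing $H^*_{\mcG}$ as our witness set for the $(\eps,k)$-miss, it suffices to show
\[
X \coloneqq \mcD(H\setminus H^*_{\mcG}) = \sum_{x\in\zo^n} \mcD'(x)\cdot \Ind[x\in H] < \eps
\]
with high probability over $H$.

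Since $H$ is a uniform subset of fixed size, the indicators $\{\Ind[x\in H]\}_x$ are negatively associated, each with expectation $m/2^n\in[\delta,2\delta]$ (using $\delta\ge 2^{-n}$). Hence $\Ex[X]\le 2\delta\cdot \sum_x \mcD'(x)\le 2\delta\le \eps/2$. Applying a Chernoff--Bernstein upper-tail bound for weighted sums of bounded negatively associated indicators, with the maximum weight bounded by $\max_x \mcD'(x)\le 1/k$, gives
\[
\Prx_H[X \ge \eps] \le \exp(-\Omega(\eps k)) = \exp(-\Omega(Cs\log s)).
\]

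The last step is the union bound. The number of size-$s$ Boolean circuits on at most $s$ input wires is $2^{O(s\log s)}$, so there are at most $2^{O(s\log s)}$ distinct size-$s$ generators. Discretize $\eps$ to the geometric grid $\eps_i = 2^{-i}$ for $i=0,1,\dots,n$; if $H$ $(\eps_{i+1},k(s,\eps_{i+1}))$-evades every size-$s$ distribution, then for any $\eps\in[\eps_{i+1},\eps_i]$, the same witness set $H^*_{\mcG}$ shows an $(\eps,k(s,\eps))$-miss, since $k(s,\eps_{i+1})\le 2k(s,\eps)$. Summing the failure probability,
\[
\sum_{s\ge n} \sum_{i=0}^{n} 2^{O(s\log s)} \cdot \exp(-\Omega(Cs\log s)) \;\le\; n\sum_{s\ge n} 2^{-\Omega(s)} \;<\; 1,
\]
for $C$ chosen sufficiently large, so some $H$ meets the requirement simultaneously for all $s$ and all grid values of $\eps$, hence for all $\eps\ge 4\delta$. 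The main obstacle is calibrating the Chernoff constant: the tail bound must be $\exp(-\Omega(\eps k))$ rather than $\exp(-\Omega(\eps^2 k))$ in order to beat the $2^{O(s\log s)}$ circuit-counting entropy with only a linear $1/\eps$ factor in $k$. This is why the argument uses Bernstein's inequality (where the variance $\sigma^2\le b\mu$ is much smaller than $b\cdot\max X$) instead of Hoeffding, and why the threshold $\eps \ge 4\delta$ (so that $\eps \ge 2\,\Ex[X]$) enters critically.
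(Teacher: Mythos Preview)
Your proof is correct and follows essentially the same strategy as the paper: a probabilistic-method argument where one first shows that any fixed distribution $(\eps,k)$-hits a random $H$ with probability at most $\exp(-\Omega(\eps k))$, then discretizes $\eps$ to a dyadic grid and union-bounds over the $2^{O(s\log s)}$ size-$s$ generators and over $s$.

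The only real difference is in how $H$ is sampled. The paper includes each $x$ independently with probability $2\delta$, applies the standard Chernoff bound (Fact~\ref{fact:chernoff}) to get $\Pr[\mcD(\bH\setminus T)\ge \eps]\le e^{-\eps k/6}$, and then needs a separate Chernoff argument to show $|\bH|\ge \delta 2^n$ with decent probability. You instead sample $H$ uniformly among subsets of size exactly $\lceil \delta 2^n\rceil$, which makes the density condition automatic; the price is that the indicators $\Ind[x\in H]$ are now only negatively associated rather than independent, so you invoke Bernstein under negative association. Both routes give the same $\exp(-\Omega(\eps k))$ tail, and both rely on the same observation that removing the top-$k$ (or mass-$\ge 1/k$) points caps the individual weights at $1/k$, which is what makes the multiplicative-Chernoff/Bernstein exponent linear in $\eps k$ rather than quadratic. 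Your closing remark correctly identifies this as the crux of the argument.
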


We prove~\Cref{lem:non-explicit intro} using the probabilistic method. For intuition, consider the special case of flat distributions (those that are uniform over their support). For such distributions $\mathcal{D}$, if $\supp(\mathcal{D})$ is sufficiently large, we can show that $\mcD$ is highly unlikely to $\eps$-hit a randomly chosen $\bH$. The failure probability decays exponentially with $|\supp(\mathcal{D})|$, allowing us to union bound over all size-$s$ distributions with sufficiently large supports. This argument no longer works if  $\supp(\mathcal{D})$ is small, but in this case we can use the fact that $\mcD$ trivially $(0,|\supp(\mathcal{D})|)$-misses every $H$. The parameters of~\Cref{lem:non-explicit intro} are near-optimal, since every $H$ is $(1-\lambda, \Theta(\lambda s/n))$-hit by any size-$s$ distribution that is uniform on $\Theta(s/n)$ many memorized points in $H$.





\paragraph{Explicitness.} Our computational separation of samplable PAC from standard PAC (\Cref{thm:computational intro}) relies on the existence of a large set that not only evasive, but   is furthermore {\sl explicit} in the sense that membership in it (i.e.~the function $x \mapsto \Ind[x\in H]$) is easy to decide:   

\medskip  

 \begin{tcolorbox}[colback = white,arc=1mm, boxrule=0.25mm]
\begin{conjecture}[See~\Cref{conjecture:hard-to-sample} for the formal version]
    \label{key assumption intro}
   There is a set $H \sse \zo^n$ satisfying: 
    \begin{enumerate}
        \item {\sl Explicit:} Membership in $H$ is computed by a polynomial-size circuit. 
        \item {\sl Large:} $H$ has superpolynomial size.
        \item {\sl Evasive:} $H$ $(\eps,k)$-evades all size-$s$ distributions for all $s = \poly(n)$ and $\eps = 1/\poly(n)$, where $k \le \poly(s,1/\eps)$. 
    \end{enumerate}
\end{conjecture}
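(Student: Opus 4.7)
The plan is to acknowledge upfront that \Cref{key assumption intro} is almost certainly not provable unconditionally---as noted in the paragraph preceding \Cref{cor:random oracle separation}, its truth already implies $\P \ne \NP$ (cf.~\Cref{claim:hard-to-prove})---and to aim instead for a proof conditional on a strong pseudorandomness hypothesis, by attempting to derandomize the probabilistic construction in \Cref{lem:non-explicit intro}. That lemma selects each $x \in \zo^n$ independently to lie in a set $H$ with probability $\delta$, and shows that largeness and evasiveness hold with overwhelming probability; only explicitness is missing. The high-level idea is to replace the $2^n$ independent random bits defining the indicator function $\mathbf{1}[x \in H]$ by pseudorandom bits derived from a short deterministic seed, so that the resulting indicator is computable by a polynomial-size circuit.

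First I would recast evasiveness as a test on the truth table of $H$. The set $H$ fails to be $(\eps,k)$-evasive precisely when some pair $(G, H^\star)$---a size-$s$ generator circuit and a $k$-element set---witnesses $\Prx_{\by \sim G(\Unif)}[\by \in H \setminus H^\star] \ge \eps$. For any fixed such pair, this probability is estimable to additive $\eps/2$ by a size-$\poly(s,1/\eps)$ randomized circuit that draws $\poly(s,1/\eps)$ samples $\by \sim G(\Unif)$ and checks their presence in $H \setminus H^\star$ via lookups into the truth table. Enumerating $(G,H^\star)$ yields a family of at most $2^{\tilde O(s)}$ such tests, and the probabilistic argument in \Cref{lem:non-explicit intro} shows that a uniformly random truth table passes each with failure probability $2^{-\omega(s)}$. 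I would then plug the truth-table coordinates with the output of a Nisan--Wigderson-style PRG against circuits of the relevant size---available under, e.g., the standard hypothesis that $\mathsf{E}$ requires $2^{\Omega(n)}$-size circuits---and invoke a hybrid argument to conclude that the pseudorandom $H$ passes every test simultaneously. The seed of the PRG, together with the circuit computing the PRG, would then yield a polynomial-size circuit for membership in $H$.

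The main obstacle---and I expect it to be fatal for any purely derandomization-based attempt---is a severe seed-length mismatch. The distinguishers, taken as a union, read all $2^n$ coordinates of the truth table, so the PRG must stretch its seed to $2^n$ output bits while fooling a total distinguisher complexity that also scales with $2^n$. No known hardness-to-pseudorandomness conversion produces such extreme stretch, and any that did would, via \Cref{claim:hard-to-prove}, separate $\P$ from $\NP$. A more promising but highly non-obvious alternative is an explicit cryptographic or algebraic candidate: a set of the form $H = \{x : P(x) = 1\}$ with $P$ efficiently checkable, whose $1$-set can be proved---from some natural hardness assumption---to be $(\eps,k)$-missed by every small sampler. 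Designing such a predicate, or reducing its construction to a cleanly stated hardness assumption, is precisely where I would expect to get stuck; accordingly, I would not attempt to prove the conjecture directly in this paper and would instead settle for a random-oracle analogue, as suggested by \Cref{cor:random oracle separation}.
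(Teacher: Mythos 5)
Your bottom line is the right one: the statement you were given is a \emph{conjecture}, and the paper never proves it. The paper's treatment consists exactly of the two pieces of evidence you anticipate---\Cref{claim:hard-to-prove}, showing that any unconditional proof would separate $\P$ from $\NP$ (via the Jerrum--Valiant--Vazirani approximate-uniform-generation algorithm with an $\NP$ oracle), and \Cref{thm:hard-to-sample-relative-to-random-oracle}, showing the conjecture holds relative to a random oracle. So declining to prove \Cref{key assumption intro} directly and settling for the random-oracle analogue is precisely what the authors do; in that sense your proposal is consistent with the paper.

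Two points of comparison are worth noting. First, your intermediate derandomization idea---replacing the $2^n$ independent bits of \Cref{lem:non-explicit intro} by pseudorandom bits so that membership becomes explicit---is essentially realized in the paper, but only \emph{partially}: \Cref{lem:PRG evasive formal} uses $\delta$-biased $d$-wise independent generators to build an explicit $H$ whose membership circuit has size roughly $tn\log^2 t$ and which evades all distributions of size $s\le t$. The obstruction you identify (the ``tests'' collectively touch the whole truth table, so the generator/membership circuit must outgrow the samplers it fools) is exactly why this route yields only a partially evasive set, suitable for the separations \emph{within} samplable PAC, and not the full conjecture in which a single $\poly(n)$-size membership circuit must defeat all $\poly(n)$-size samplers. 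Second, if you do carry out the random-oracle fallback, be aware that the paper's proof is not a bare union bound over samplers: oracle generators of size up to $2^{O(n)}$ are too numerous for that. Instead, \Cref{lem:oracle-LB} reduces hitting to the query-complexity task of finding many distinct accepting inputs of a thinned oracle (an AND of $O(\log(1/\delta))$ oracle bits), \Cref{prop:hitting-to-many} converts $(\eps,k)$-hitting into producing $k$ distinct elements from $O(k/\eps)$ samples, and the resulting $\exp(-\Omega(k))$ failure probability---with $k=s^{1.01}$---is what makes the union bound over size-$s$ oracle circuits go through. Your sketch stops well short of these steps, so as written it is a correct assessment of the landscape rather than a proof of anything; the gap is not an error but the absence of the quantifier-swapping argument that the paper supplies.
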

\end{tcolorbox}\medskip

We study~\Cref{key assumption intro} in detail in~\Cref{sec:conjecture}. We show that it implies $\mathsf{P}\ne \mathsf{NP}$ (\Cref{claim:hard-to-prove}) and that it holds relative to a random oracle (\Cref{thm:hard-to-sample-relative-to-random-oracle}).  The proof of the latter  strengthens that of~\Cref{lem:non-explicit intro}. The key idea is to show that a samplable distribution $\mathcal{D}$ is highly unlikely to hit a randomly chosen $\bH$, even if the circuit generating $\mathcal{D}$ is allowed unit-time membership queries to $\bH$.

\subsection{The connection to PAC learning}
\label{subsec:overview}


\paragraph{Proof \red{overview} of~\Cref{thm:statistical intro}.} We first describe how~\Cref{lem:non-explicit intro} yields a statistical separation of samplable PAC from standard PAC. For a set $H \sse \zo^n$ and function $f : \zo^n\to \zo$, we write $f_H : \zo^n\to\zo$ to denote the following restriction of $f$ to $H$: 
\[ f_H(x) = 
\begin{cases}
    f(x) & \text{if $x\in H$} \\ 
    0 & \text{otherwise.}
\end{cases}
\]
For a concept class $\mathcal{C}$, we similarly write $\mathcal{C}_H$ to denote the restriction of $\mathcal{C}$ to $H$: 
\begin{equation}
    \label{eq:concept-class-restriction}
    \mathcal{C}_H = \{ f_H \colon f \in \mathcal{C}\}.
\end{equation}
Now consider $\mathcal{A}_H$ where $\mathcal{A}$ is the class of all functions $f : \zo^n\to \zo$. It is easy to check that~$H$ is the largest set shattered by $\mathcal{A}_H$ and hence the VC dimension of $\mathcal{A}_H$ is exactly $|H|$. The sample complexity of learning $\mathcal{A}_H$ in standard PAC is therefore governed by the size of $H$. In particular, if~$H$ has exponential size then learning $\mathcal{A}_H$ in standard PAC requires exponential sample complexity. (This is why we are concerned with evasive sets of large size in~\Cref{lem:non-explicit intro}.)

On the other hand, we are able to exploit the evasiveness of $H$ to design an efficient algorithm for learning $\mathcal{A}_H$ in samplable PAC: 

\begin{restatable}[Evasiveness implies efficient learners]{lemma}{LearnEvasiveH}
\label{lem:evasiveness implies learnability intro} 
Let $H \sse \zo^n$ be a set that $(\eps,k)$-evades size-$s$ distributions. Then for any concept class $\mathcal{C}$ there is an algorithm for learning $\mathcal{C}_H$ to error $O(\eps)$ under all size-$s$ distributions using $O(k/\eps)$ samples and running in time $O(nk/\eps)$. 
\end{restatable}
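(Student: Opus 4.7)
The plan is to use a pure \emph{memorization} learner that does not even need to know the class $\mathcal{C}$. Draw $m = O(k/\eps)$ labeled samples $(\bx_1,\by_1),\ldots,(\bx_m,\by_m)$ from the (size-$s$) distribution $\mcD$, store them in a lookup table, and output the hypothesis
\[
h(x) = \begin{cases} \by_i & \text{if } x = \bx_i \text{ for some } i, \\ 0 & \text{otherwise.} \end{cases}
\]
Since $h$ is just a lookup table with at most $m$ entries, the runtime is $O(nm)=O(nk/\eps)$, as claimed.

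The first observation is that $h$ never errs outside of $H \setminus \{\bx_1,\ldots,\bx_m\}$. Indeed, the target $f_H$ vanishes on $\zo^n\setminus H$, and on any such $x$ the hypothesis $h$ also outputs $0$: either $x$ is an unmemorized point, in which case $h(x)=0$ by default, or $x=\bx_i$ for some $i$, in which case $\by_i = f_H(\bx_i) = 0$. Inside $H$, every memorized point is classified correctly by construction. Hence
\[
\Prx_{\bx\sim\mcD}[h(\bx)\ne f_H(\bx)] \;\le\; \mcD\!\left(H\setminus\{\bx_1,\ldots,\bx_m\}\right).
\]

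The second step uses the evasiveness hypothesis. Because $\mcD$ is size-$s$ and $H$ $(\eps,k)$-evades size-$s$ distributions, there exists $H^\ast\sse\zo^n$ with $|H^\ast|\le k$ and $\mcD(H\setminus H^\ast) < \eps$. So it remains to bound $\mcD\!\left(H^\ast\setminus\{\bx_1,\ldots,\bx_m\}\right)$ in expectation over the draw of the samples.

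The third and core step is a clean pointwise calculation. Writing $p_x \coloneqq \mcD(x)$, independence of the samples gives
\[
\Ex\!\left[\mcD\!\left(H^\ast\setminus\{\bx_1,\ldots,\bx_m\}\right)\right] \;=\; \sum_{x\in H^\ast} p_x(1-p_x)^m \;\le\; \sum_{x\in H^\ast} \frac{1}{em} \;\le\; \frac{k}{em},
\]
where I use the elementary inequality $p(1-p)^m \le p\,e^{-pm} \le 1/(em)$ (maximized at $p=1/m$). Setting $m = Ck/\eps$ for a suitable constant $C$ makes the expected error $O(\eps)$, and a standard Markov/repetition argument upgrades this to an $O(\eps)$-error hypothesis with the usual constant success probability (amplifiable to any confidence $1-\delta$ at the cost of a $\log(1/\delta)$ factor).

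The conceptual point I expect to be the main content, rather than the main \emph{obstacle}, is recognizing that evasiveness is exactly the right structural property to make memorization suffice: the mass outside $H$ is handled by defaulting to $0$, the mass on $H\setminus H^\ast$ is handled by the $\eps$-miss guarantee, and the mass on $H^\ast$ is handled by the coupon-collector-style bound above. The only quantitative subtlety is getting $O(k/\eps)$ rather than the more naive $O((k/\eps)\log k)$ samples, which is precisely what the pointwise bound $p(1-p)^m\le 1/(em)$ delivers.
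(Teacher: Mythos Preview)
Your proof is correct and essentially identical to the paper's: both use the memorization/default-zero learner, split the error into $H\setminus H^\ast$ (bounded by $\eps$ via evasiveness) and $H^\ast$ (bounded by $k/(em)$ via the pointwise inequality $p(1-p)^m\le 1/(em)$), and finish with Markov. The only cosmetic difference is that the paper takes $H^\ast$ to be the $k$ heaviest points of $H$ under $\mcD$, whereas you invoke the existential in the definition of $(\eps,k)$-missing directly; both choices work.
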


The intuition for~\Cref{lem:evasiveness implies learnability intro} is simple. If $H$ is $(\eps, k)$-evasive, then a learner that memorizes the labels for the $k$ heaviest points in $H$ will only incur  $\eps$ error. While a learner may not see the $k$ heaviest points, or even know when it has seen them, we show that memorizing $O(k/\eps)$ samples suffices to achieve good accuracy.

\paragraph{Proof \red{overview} of~\Cref{thm:computational intro}.}  In the proof of~\Cref{thm:statistical intro}, since $H$ is not explicit and $\mathcal{A}$ is the class of all functions, there are no nontrivial upper bounds on circuit complexity of the functions in~$\mathcal{A}_H$. We now describe how we extend the proof strategy so that the separating concept class is a class of polynomial-size circuits. As mentioned, the lower bounds against standard PAC will now be computational in nature and can no longer rely on the information-theoretic arguments that underlie VC dimension lower bounds.

Consider the class $\mathcal{F}_H$ where $H$ is an explicit evasive set given by~\Cref{key assumption intro} and $\mathcal{F}$ is a pseudorandom function family~\cite{GGM86}, the existence of which follows from the existence of one-way functions.  First note that $\mathcal{F}_H$ now does in fact have polynomial circuit complexity: every function in this class can be computed by a circuit of size $O(S_1 + S_2)$, where $S_1$ is the circuit complexity of deciding membership in $H$ and $S_2$ is the circuit complexity of $\mathcal{F}$. 

It is well known that pseudorandom function families are hard to learn: an efficient learner for $\mathcal{F}$ in standard PAC can be used to break  $\mathcal{F}$'s security guarantees~\cite{Val84}. We extend this to show that as long as $H$ is sufficiently large, an efficient learner for $\mathcal{F}_H$ suffices to break $\mathcal{F}$'s security guarantees. See~\Cref{lem:PRFF-to-hardnes-of-learning}. Stated in the contrapositive, $\mathcal{F}$'s security guarantees implies hardness of learning $\mathcal{F}_H$ in standard PAC. (This is why we are concerned with evasive sets of large size in~\Cref{key assumption intro}.)


\section{Related work}
\label{sec:related}

\paragraph{Xiao's separation.} As mentioned, a separation between samplable PAC and standard PAC for a {\sl specific} oracle is implicit in the work of Xiao:

\begin{theorem}[Follows from Theorem 1.3 of~\cite{Xia10}]
\label{thm:xiao}
    There is an oracle $\mathcal{O}$ such that: 
    \begin{enumerate}
        \item There is a polynomial-time algorithm $A$ such that $A^\mathcal{O}$ learns $\mathsf{SIZE}^\mathcal{O}(n^2)$ in samplable PAC. 
        \item Any algorithm $A$ such that $A^\mathcal{O}$ learns $\mathsf{SIZE}^\mathcal{O}(n^2)$ in standard PAC must take superpolynomial time. 
    \end{enumerate}
\end{theorem}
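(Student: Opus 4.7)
My plan is to construct $\mathcal{O}$ as a composite of two components: a random-oracle-style piece $R$ used to embed a cryptographically hard concept class into $\mathsf{SIZE}^{\mathcal{O}}(n^2)$, and a ``learning helper'' piece $L$ that provides just enough extra computational power to trivialize learning \emph{precisely} when a sampler for the distribution is available. Relative to $R$ one obtains a pseudorandom function family via the GGM construction, and these PRFs can be evaluated by circuits of size $O(n^2)$ that query $R$; this gives the concept class $\mathcal{F}^R \subseteq \mathsf{SIZE}^{\mathcal{O}}(n^2)$ whose hardness under standard PAC we will exploit.

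\textbf{Upper bound (Part 1).} The helper $L$ would be designed so that on input a candidate sampler circuit $C$ together with a small labeled sample from $D_C = C(\mathrm{Unif})$, it returns a polynomial-size hypothesis circuit that PAC-solves the target on $D_C$. A natural implementation is as a PSPACE sub-oracle that exhaustively searches over size-$n^2$ circuits for one consistent with the labeled sample on $\mathrm{supp}(D_C)$. The samplable-PAC learner then proceeds by first using its samples and $L$ to identify a polynomial-size sampler $C$ whose induced distribution statistically matches the observed samples (here we use that $\mathcal{D}$ \emph{is} samplable), and then invoking $L$ on $(C, \text{sample})$ to produce a hypothesis.

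\textbf{Lower bound (Part 2).} For the hardness of standard PAC, I would exhibit an intractable adversarial distribution $\mathcal{D}^*$ under which $\mathcal{F}^R$ remains hard to learn. A natural candidate is the uniform distribution over a set $H \subseteq \zo^n$ carved out of $R$ (e.g., inputs where $R$ begins with a fixed long prefix), chosen so that $\mathcal{D}^*$ is not samplable relative to $\mathcal{O}$. Restricting $\mathcal{F}^R$ to $H$ in the sense of~\Cref{eq:concept-class-restriction} preserves pseudorandomness, and the usual Valiant-style reduction then shows that an efficient learner for $\mathcal{F}^R_H$ under $\mathcal{D}^*$ would break the underlying PRF.

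\textbf{Main obstacle.} The crux is the ``non-leakage'' property of $L$: showing that a polynomial-size adversary with $L$-access cannot (i) sample $\mathcal{D}^*$ or (ii) circumvent the PRF security via $L$ queries on spurious sampler inputs. This will likely require a two-oracle simulation argument, showing that $L$'s answers on ill-formed sampler queries can be simulated away, while well-formed queries are never useful against $\mathcal{D}^*$ because no polynomial-size $\mathcal{O}$-oracle circuit is a sampler for it. Threading this needle---so that $L$ is powerful enough for every samplable distribution yet useless against the adversarial $\mathcal{D}^*$---is the technical heart of the construction, and the step where I would expect the most delicate calibration of parameters and oracle structure.
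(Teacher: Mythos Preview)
First, note that the paper does not give its own proof of this statement; it is attributed to~\cite{Xia10}, and the only commentary the paper adds is that Xiao's oracle is one relative to which one-way functions do \emph{not} exist (distribution learning in the~\cite{KMRRSS94} sense is easy relative to it, and that task is hard whenever OWFs exist). Your proposal goes the opposite way---building PRFs in via the random-oracle piece $R$---so it is already structurally different from Xiao's route.

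More importantly, the ``main obstacle'' you flag is not a matter of delicate calibration but a structural contradiction. The class $\mathsf{SIZE}^{\mathcal{O}}(n^2)$ has polynomial VC dimension, so for \emph{any} distribution---samplable or not---a $\poly(n)$-size labeled sample $S$ already information-theoretically pins down a good hypothesis; the only barrier to PAC learning is the computational one of finding a size-$n^2$ circuit consistent with $S$. Your helper $L$, as you describe it, solves exactly this consistency search once handed a sampler whose support covers the sample points. But a standard-PAC learner can always manufacture such a sampler: given samples $x^{(1)},\ldots,x^{(m)}$ from the adversarial $\mathcal{D}^*$, the circuit that outputs a uniform element of $\{x^{(1)},\ldots,x^{(m)}\}$ has size $O(mn)=\poly(n)$ and its support is precisely the sample set. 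Feeding this empirical sampler together with $S$ into $L$ returns an $h$ consistent with all of $S$, and by the VC bound $h$ is accurate on $\mathcal{D}^*$---regardless of whether $\mathcal{D}^*$ itself is samplable. Thus any $L$ powerful enough for Part~1 automatically trivializes Part~2. This is exactly why the paper's own random-oracle separation (\Cref{thm:computational-sep-with-random-oracle}) does \emph{not} introduce a consistency helper: it instead restricts the concept class to an evasive set so that the trivial default-zero learner already handles every samplable distribution, leaving the PRF-based hardness under $\mathrm{Unif}(H)$ untouched.
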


An inspection of~\cite{Xia10}'s proof shows that $\mcO$ is an oracle relative to which one-way functions do not exist.\footnote{We sketch the justification here. As stated in Theorem 1.3 of~\cite{Xia10}, this is an oracle relative to which the learning of all distributions with polynomial-size generators, in the sense of Kearns, Mansour, Ron, Rubinfeld, Schapire, and Sellie~\cite{KMRRSS94}, is easy.  However, as shown in~\cite{KMRRSS94}, the hardness of  this task is implied by the existence of one-way functions.  Since this implication relativizes,~\cite{Xia10}'s oracle is one relative to which one-way functions do not exist.} Since we believe that one-way functions exist in the unrelativized world, this therefore does not shed much light on the relationship between samplable and standard PAC in the unrelativized world.\footnote{\violet{As in the case of~\cite{BFKL93}, the focus of~\cite{Xia10} was not on the distinction between standard and samplable PAC. Rather, the author had proved a result that only held for samplable PAC, and he obtained~\Cref{thm:xiao} on route to showing that an extension of his result to standard PAC will require nonrelativizing techniques. 
Similarly, see also~\cite{HN22}, where a separation is given relative to an oracle for which every problem in {\sf PH} is easy on average.}}  

\violet{There are several works giving learning algorithms that work in samplable PAC, and its relaxations, but are not known to work in standard PAC (see e.g.~\cite{BCKRS02,HN22,GK23,Kar24} for a few recent examples).}

\paragraph{Distribution-specific learning.}  Given the apparent difficulty of designing efficient algorithms in standard PAC, there has been a large body of work on {\sl distribution-specific} learning. Here the data is promised to be drawn from a specific distribution, e.g.~the uniform distribution. The main downside is that this is a stylized assumption that limits the practical relevance of the model: We want our algorithms to succeed for as broad a class of distributions as possible, not just a specific one. In the case of the uniform distribution in particular, it does not capture much of the richness real-world distributions that stem from correlations among features.

Samplable PAC can be viewed as a middle ground that simultaneously corrects for the overly stringent requirements of standard PAC and the overly strong assumptions of distribution-specific PAC. 

\paragraph{Lifting uniform-distribution learners.} In the same spirit of bridging this gap between standard PAC and distribution-specific PAC, recent works~\cite{BLMT23,BLST25} show how uniform-distribution PAC learners can be generically ``lifted" to also succeed under various non-uniform yet still structured classes of distributions. 

Compared to these works, our work attempts to bridge the gap ``from the opposite direction". While these lifters scale {\sl up} the distribution-specific model, the samplable PAC model scales {\sl down} the distribution-free model (i.e.~standard PAC). 

\paragraph{Computable PAC and online learning.} Another recent line of work~\cite{AABDLU20,AABDLU21,Ste22,DRKRS23}  studies the distinction between standard PAC and a variant known as {\sl computable} PAC  where learners are restricted to be computable. Among other results, these works show that there are classes with finite VC dimension that are not learnable in computable PAC. See also~\cite{HBD23,DRKS25} for the online analogue.

The focus of our work is on statistical and computational complexity in settings where computability is not an issue, rather than the distinction between computable and uncomputable learners. 

 \paragraph{Samplable distributions in average-case complexity.}  Outside of learning theory, samplable distributions are central to the study of average-case complexity~\cite{Lev86,BDCG89}. While $\mathsf{P} \ne \mathsf{NP}$ rules out the possibility of efficient algorithms that solve $\mathsf{NP}$-hard problems on  all instances, average-case complexity is concerned with the possibility of efficient algorithms that   solve most instances generated by a samplable distribution (i.e.~the possibility that $\NP$ is hard in the worst case but easy on average).   As in samplable PAC, in this context samplable distributions are taken as the formalization of distributions that actually occur in practice.

\section{Discussion and future work}
\violet{
\begin{figure}[t]
\centering
\begin{tikzpicture}[line width=0.8pt, line cap=round, line join=round]

  \def\W{12}   
  \def\H{8}    

  \def\ex{4.6} 
  \def\ey{2.7} 

  \def\cx{6}   
  \def\cy{4}   

  \draw (0,0) rectangle (\W,\H);
  \node[anchor=north west] at (0.25,{\H-0.25}) {All distributions};

  \draw (\cx,\cy) ellipse [x radius=\ex, y radius=\ey];
  \node at (\cx,{\cy+\ey-0.63}) {Samplable distributions};

  \def\blobshift{0.3} 

  \def\rbase{1.80}
  \def\aA{0.22}
  \def\aB{0.15}
  \def\aC{0.04}
  \def\kA{6}
  \def\kB{10}
  \def\kC{14}
  \def\phiA{18}
  \def\phiB{137}
  \def\phiC{-72}

  \draw
    plot[variable=\t, domain=0:360, samples=361, smooth cycle]
      ({\cx + ( \rbase
                + \aA*sin(\kA*\t + \phiA)
                + \aB*sin(\kB*\t + \phiB)
                + \aC*sin(\kC*\t + \phiC) ) * cos(\t)},
       {\cy - \blobshift
          + ( \rbase
              + \aA*sin(\kA*\t + \phiA)
              + \aB*sin(\kB*\t + \phiB)
              + \aC*sin(\kC*\t + \phiC) ) * sin(\t)});

  \node at (\cx,{\cy - \blobshift}) {\shortstack{Real-world\vspace{2pt}\\distributions}};

\end{tikzpicture}
\caption{An illustration of how samplable distributions relate to real-world distributions.}
\label{fig:bubbles}
\end{figure}
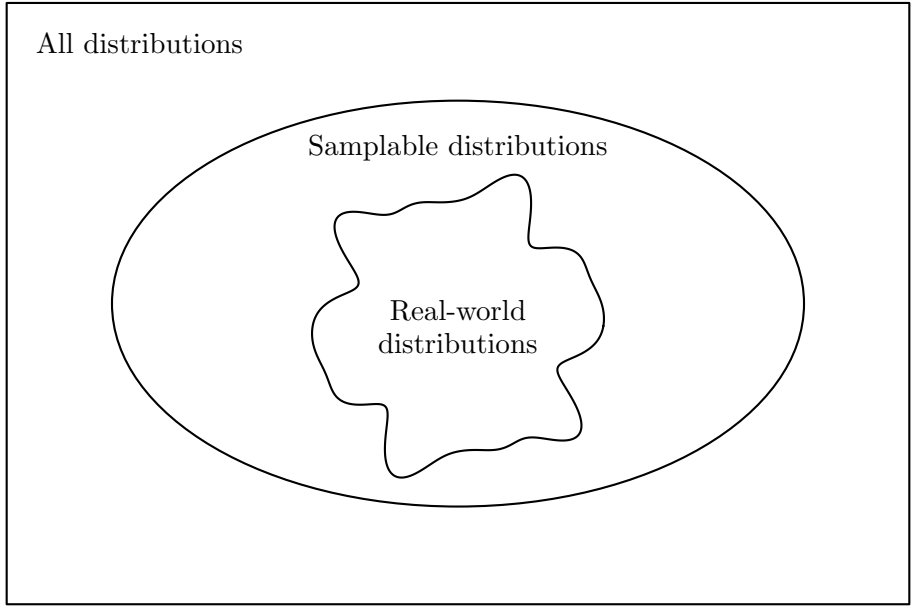
Samplability is generally viewed as a baseline requirement for real-world distributions, not a characterization. See~\Cref{fig:bubbles}. Samplable PAC is therefore only a first cut at refining the standard PAC model. 

We see two takeaways from this. First, as alluded to in the introduction, lower bounds in standard PAC---be they statistical or computational---may be overly pessimistic: While certain learning tasks may have hard instances, these hard instances themselves may be hard to find, and arguably will not occur in practice. Second, this calls for a better understanding of the actual structure of real-world distributions beyond just samplability, which can then be leveraged in the design of learning algorithms. This falls within the overall agenda of going beyond the worst-case analysis of algorithms~\cite{Rou21}. Our techniques suggest the possibility of deeper connections to the complexity of sampling and uniform generation.

A concrete open problem is that of  characterizing the sample complexity of learning in samplable PAC. Sample complexity in standard PAC is characterized by VC dimension---what is the corresponding characterization for samplable PAC? More generally, is there a characterization that takes both function and distribution complexity into account? 
}

\section{Preliminaries}

\paragraph{Basic notation and writing conventions. }
We write $[n]$ to denote the set $\{1,2,\ldots,n\}$. The length of a bitstring $x\in\zo^n$ is $|x|$. \textbf{Boldface} letters, e.g.~$\bx,\bH$, denote random variables. We write $\mathrm{Unif}(H)$ to denote the uniform distribution over the set $H$ and $\mathrm{Ber}(\delta)$ to denote the Bernoulli distribution with mean $\delta$. For $H\sse \zo^n$, the \emph{density} of $H$ is $|H|/2^n$. If $|H|/2^n \ge \delta$, we say that $H$ is $\delta$-dense. We write $\overline{H}=\zo^n\setminus H$ to denote the complement of $H$. Given a distribution $\mcD$ over $\zo^n$ and a point $x\in \zo^n$, we let $\mcD(x)\coloneqq \Prx_{\by\sim \mcD}[x=\by]$. Similarly, for a set $H\subseteq \zo^n$, we let $\mcD(H)\coloneqq \Prx_{\bx \sim \mcD}[\bx \in H]$.

We use two basic asymptotic conventions: We say a function $\sigma(n)$ is \emph{superpolynomial} if for every polynomial $p(n)$, there is some $N$ s.t.~$\sigma(n) > p(n)$ for all $n \geq N$. Similarly, we say a function $\sigma(n)$ is \emph{negligible} if $1/\sigma(n)$ is superpolynomial.

We will make use of standard concentration inequalities for sums of independent random variables.

\begin{fact}[Chernoff bound for bounded random variables]\label{fact:chernoff}
Let $\bX_1$,...,$\bX_n$ be independent random variables such that $\bX_i\in[0,1]$ for all $i$. Let $\bX=\bX_1+...+\bX_n$ denote their sum and $\mu =\Ex[\bX]$ denote that sum's expected value. Then for any $\gamma\geq 0$, 
\begin{align*}
    \Pr\big[\bX\geq (1+\gamma)\mu\big]\leq \exp\Big(-\frac{\gamma^2\mu}{2+\gamma}\Big) \quad\quad\text{and}\quad\quad \Pr\big[\bX\leq (1-\gamma)\mu\big]\leq \exp\Big(-\frac{\gamma^2\mu}{2}\Big).
\end{align*}
\end{fact}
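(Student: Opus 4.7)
The plan is to follow the standard Cram\'er/Chernoff exponential-moment method. For any parameter $t > 0$, applying Markov's inequality to the nonnegative random variable $e^{t\bX}$ gives
\[\Pr[\bX \geq (1+\gamma)\mu] \leq \frac{\Ex[e^{t\bX}]}{e^{t(1+\gamma)\mu}},\]
and independence of the $\bX_i$'s lets me factor $\Ex[e^{t\bX}] = \prod_{i=1}^n \Ex[e^{t\bX_i}]$. The single-variable step is to control each $\Ex[e^{t\bX_i}]$ using the boundedness $\bX_i \in [0,1]$: by convexity of $x\mapsto e^{tx}$ on $[0,1]$ I have $e^{t\bX_i} \leq 1 + \bX_i(e^t-1)$ pointwise, and taking expectations followed by $1+y \leq e^y$ gives $\Ex[e^{t\bX_i}] \leq \exp(\mu_i(e^t-1))$ where $\mu_i := \Ex[\bX_i]$.

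Multiplying these bounds the $\mu_i$'s collapse into $\mu$, yielding
\[\Pr[\bX \geq (1+\gamma)\mu] \leq \exp\!\big(\mu(e^t-1)-t(1+\gamma)\mu\big).\]
Next I would optimize the exponent in $t$: differentiating yields $t^\star = \ln(1+\gamma)$, and substituting back gives $\exp(-\mu\,\phi(\gamma))$ where $\phi(\gamma) = (1+\gamma)\ln(1+\gamma) - \gamma$. The lower-tail bound then follows by applying the same argument to the shifted variables $1-\bX_i \in [0,1]$, producing the analogous raw bound $\exp(-\mu \cdot \psi(\gamma))$ with $\psi(\gamma) = (1-\gamma)\ln(1-\gamma)+\gamma$.

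The only remaining work is converting the $\phi$- and $\psi$-forms into the cleaner advertised exponents. For the upper tail I would use the elementary inequality $\phi(\gamma) \geq \gamma^2/(2+\gamma)$ for $\gamma \geq 0$, and for the lower tail $\psi(\gamma) \geq \gamma^2/2$ for $\gamma \in [0,1]$; both are standard calculus facts that can be verified by checking that the two sides vanish at $\gamma = 0$ and then comparing derivatives. The proof is conceptually routine -- the only thing to be careful about is the optimization in $t$ and these two ancillary inequalities, since a suboptimal choice (e.g.\ $t = \gamma$) would give a weaker constant in the denominator, and one wants to match exactly the $2+\gamma$ vs.\ $2$ asymmetry between the two tails that is stated in the fact.
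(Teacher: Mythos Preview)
The paper states this as a standard \emph{Fact} without proof, so there is no paper proof to compare against; your outline is exactly the textbook Cram\'er--Chernoff argument and is correct. One small imprecision: obtaining the lower tail ``by applying the same argument to the shifted variables $1-\bX_i$'' would give a bound in terms of $n-\mu$ rather than $\mu$; the clean route is instead to apply Markov to $e^{-t\bX}$ for $t>0$ (equivalently, to note that the mgf bound $\Ex[e^{t\bX_i}]\le e^{\mu_i(e^t-1)}$ holds for all real $t$ by the same convexity step), optimize at $t=-\ln(1-\gamma)$, and then use $\psi(\gamma)\ge \gamma^2/2$ as you describe. With that tweak the argument is complete.
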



\subsection{Evasive sets and restrictions of functions to sets}

We restate the following definitions and notational conventions from the technical overview: 

\epskmiss*

\epskevades*

For a set $H \sse \zo^n$ and function $f : \zo^n\to \zo$, we write $f_H : \zo^n\to\zo$ to denote the following restriction of $f$ to $H$: 
\[ f_H(x) = 
\begin{cases}
    f(x) & \text{if $x\in H$} \\ 
    0 & \text{otherwise.}
\end{cases}
\]
For a concept class $\mathcal{C}$, we similarly write $\mathcal{C}_H$ to denote the restriction of $\mathcal{C}$ to $H$ (i.e.~$\mathcal{C}_H = \{ f_H \colon f \in \mathcal{C}\}).$ We let $\mcA$ denote the class of all functions $f: \zo^n\to\zo$.

\subsection{Circuits and circuits generating distributions}
We consider Boolean circuits consisting of AND, OR, and NOT gates. The size of a circuit is the number of gates contained in it. By a standard counting argument, the number of circuits of size $s$ over $n$ input bits is at most $s^{O(n+s)}$. 

\begin{fact}[The number of size-$s$ circuits]
\label{fact:circuits}
    The number of Boolean circuits of size $s$ over $n$ inputs is at most $s^{O(n+s)}$. 
\end{fact}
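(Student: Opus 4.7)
The plan is to give the standard enumeration argument, describing each circuit by recording, for each of its $s$ gates, (i)~the gate's type and (ii)~the source(s) of its input wires. The resulting bound is of the form $(n+s)^{O(s)}$, which I then rewrite in the form $s^{O(n+s)}$ claimed in the fact.

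First I would fix an ordering of the $s$ gates (say, a topological order) and observe that the circuit is fully determined once we specify, for every gate $g_i$, (a)~whether it is an AND, OR, or NOT gate (at most $3$ choices), and (b)~the at most two wires feeding into $g_i$, each of which is either one of the $n$ input literals or the output of one of the other $s-1$ gates. Thus each gate admits at most $3\cdot (n+s)^2$ descriptions, and the total number of circuits is at most
\[
\bigl(3(n+s)^2\bigr)^s \;\le\; (n+s)^{cs}
\]
for an absolute constant $c$. (Permuting the labels of gates only overcounts, which is harmless for an upper bound.)

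Next I would convert $(n+s)^{O(s)}$ into $s^{O(n+s)}$, assuming $s\ge 2$ (the case $s=1$ is trivially bounded). The identity to check is $s\log(n+s) = O\!\bigl((n+s)\log s\bigr)$, which I would verify by splitting into two cases. If $s\ge n$, then $\log(n+s)\le \log(2s)= O(\log s)$ and the inequality reduces to $s\cdot O(\log s)\le O(s\log s)$. If $s<n$, I use monotonicity of $x\mapsto x/\log x$ on $[2,\infty)$ to get $s/\log s \le n/\log n$, which rearranges to $s\log n \le n\log s$; then $s\log(n+s)\le s\log(2n) = O(s\log n) = O(n\log s) = O((n+s)\log s)$. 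Exponentiating both sides gives $(n+s)^{O(s)} \le s^{O(n+s)}$, completing the bound.

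I do not expect any real obstacle here; the only mildly fiddly step is the case analysis in the last paragraph to convert between the two asymptotic forms, and that is entirely routine.
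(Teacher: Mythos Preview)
Your argument is correct and is exactly the standard counting argument the paper alludes to; the paper itself gives no proof beyond the phrase ``by a standard counting argument.'' One minor quibble: the map $x\mapsto x/\log x$ is increasing only on $[e,\infty)$, not on $[2,\infty)$, so the step ``$s/\log s \le n/\log n$'' technically fails for $s=2,\,n=3$---but this discrepancy is only a constant factor and is absorbed into the $O(\cdot)$.
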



\begin{definition}[Size-$s$ distribution]
\label{def:distribution size}
    We say a distribution $\mcD$ over $\zo^n$ is \emph{size-$s$} if there is some size-$s$ generating circuit $\mcG:\zo^\ell \to \zo^n$ for which the distribution of $\mcG(\Unif (\zo^\ell))$ is exactly equal to $\mcD$. We refer to $\mcG$ as a \emph{generator} for $\D$.
\end{definition}


\subsection{Learning theory}

\begin{definition}[PAC learning \cite{Val84}]
    \label{def:PAC-learning}
    For any concept class $\mcC$, we say an algorithm $A$ \emph{learns $\mcC$ to error $\eps$ over distribution $\mathcal{D}$ using $m$ samples} if the following holds: For any $f \in \mcC$, given $m$ independent samples of the form $(\bx, f(\bx))$ where $\bx \sim \mathcal{D}$, $A$ returns a hypothesis $h$, that with probability at least $2/3$, satisfies $\Prx_{\bx\sim \D}[f(\bx)\neq h(\bx)]\le \eps.$
    
    We say that $A$ learns $\mcC$ to error $\eps$ using $m$ samples if \emph{for every} distribution $\D$, $A$ learns $\mcC$ to error $\eps$ over the distribution $\mathcal{D}$ using $m$ samples.
\end{definition}

\begin{definition}[VC dimension \cite{VC71}]
    Let $\mcC$ be a concept class consisting of Boolean functions. A set $S\sse \zo^n$ is \emph{shattered} if for every labeling $\ell:S\to\zo$, there is an $f\in\mcC$ such that $f(x)=\ell(x)$ for every $x\in S$. The \emph{VC dimension} of $\mcC$, $\VCdim(\mcC)$, is the size of the largest set which is shattered. 
\end{definition}

We will use the following fundamental result in learning theory which constructs a distribution over a shattering set and shows that learning over this distribution requires many samples.

\begin{fact}[Hard distribution over shattering set~\cite{BEHW89,EHKV89}]
    \label{fact:shattering}
    Let $S=\{x^{(1)},\ldots,x^{(d)}\}$ be a shattering set for a concept class $\mcC$. For any $\eps\le 1/8$, let $\D$ be the distribution defined by
    $$
    \D(x)= \begin{cases}
        1-8\eps & \text{if }x=x^{(1)}\\
        8\eps / (d-1) & \text{otherwise}
    \end{cases}
    $$
    then, any algorithm which learns $\mcC$ over $\D$ must use at least $\Omega(d/\eps)$ samples. 
\end{fact}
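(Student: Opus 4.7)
The plan is to establish this as a Yao-style lower bound: exhibit a distribution over target concepts for which every deterministic learner fails with probability at least $1/3$, which by the averaging principle transfers to randomized learners as well. Since $S=\{x^{(1)},\ldots,x^{(d)}\}$ is shattered, for every labeling $\ell:\{x^{(2)},\ldots,x^{(d)}\}\to\zo$ there exists a concept $f_\ell\in\mcC$ with $f_\ell(x^{(1)})=0$ and $f_\ell(x^{(i)})=\ell(x^{(i)})$ for $i\ge 2$. I would let the adversary pick $\boldsymbol{\ell}$ uniformly at random from the $2^{d-1}$ such labelings and use $f_{\boldsymbol{\ell}}$ as the target.

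The key observation is an information-theoretic one: if $x^{(j)}$ is \emph{not} drawn in any of the $m$ labeled samples, then conditional on everything the learner sees, the value $\boldsymbol{\ell}(x^{(j)})$ is a uniform coin flip. This is because $\boldsymbol{\ell}$ is uniform over a product domain and the samples reveal only the coordinates of $\boldsymbol{\ell}$ at the drawn points. Consequently any hypothesis $h$ agrees with $f_{\boldsymbol{\ell}}$ on each unseen informative point with probability exactly $1/2$, independently of the learner's strategy.

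Next I would quantify how many informative points are unseen. Each $x^{(j)}$ with $j\ge 2$ has mass $8\eps/(d-1)$ under $\D$, so the expected number of \emph{distinct} informative points appearing in $m$ draws is
\[
(d-1)\Big(1-\big(1-\tfrac{8\eps}{d-1}\big)^m\Big)\;\le\;8\eps\,m.
\]
Choosing $m = c(d-1)/\eps$ for a small enough constant $c$ makes this at most $(d-1)/8$, so the expected number of unseen informative points $|\boldsymbol{U}|$ is at least $7(d-1)/8$. Taking expectation over samples and the random target, the expected generalization error of $h$ is at least
\[
\Ex\Big[\,\textstyle\sum_{x^{(j)}\in \boldsymbol{U}} \D(x^{(j)})\cdot\tfrac12\,\Big]
\;=\;\Ex[|\boldsymbol{U}|]\cdot\frac{4\eps}{d-1}\;\ge\;\tfrac{7\eps}{2}.
\]

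Finally I would convert this expected-error bound into the high-probability failure statement required by Definition~\ref{def:PAC-learning}. The crucial point is that the error is deterministically bounded: the learner can always output $h(x^{(1)})=0$ correctly, so only the remaining $d-1$ points contribute, capping the error at $\D(S\setminus\{x^{(1)}\})=8\eps$. Combining $\Ex[\mathrm{error}]\ge 7\eps/2$ with $\mathrm{error}\in[0,8\eps]$ via Markov on the split $\{\mathrm{error}\le\eps\}$ versus $\{\mathrm{error}>\eps\}$ forces the failure probability above a positive constant; by shrinking $c$ this constant can be pushed past $1/3$, contradicting learnability at $m=c(d-1)/\eps$ samples and giving the claimed $\Omega(d/\eps)$ lower bound. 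The main subtlety is calibrating $c$ so that both the ``many unseen points'' step and the final Markov step have enough slack to clear the $1/3$ threshold simultaneously; this is routine bookkeeping rather than a conceptual obstacle.
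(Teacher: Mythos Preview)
The paper does not prove this statement at all; it is stated as a \textbf{Fact} with citations to \cite{BEHW89,EHKV89} and invoked as a black box. Your proposal is a correct reconstruction of the classical argument behind that fact: randomize the target over labelings of the shattered set, observe that unseen coordinates remain unbiased coins, bound the expected number of seen informative points by $8\eps m$, and finish with a reverse-Markov step using the $8\eps$ cap on the error. The constants check out: with $m=c(d-1)/\eps$ and $c$ small enough, $\Ex[\mathrm{error}]\ge 7\eps/2$ and $\mathrm{error}\le 8\eps$ give $\Pr[\mathrm{error}>\eps]\ge 5/14>1/3$. One minor point worth making explicit in a write-up: the $8\eps$ cap on the error holds only for learners that output $0$ on $x^{(1)}$ and memorize seen labels, but since you are proving a lower bound it suffices to defeat this optimal strategy, so the restriction is without loss of generality.
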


\subsection{Limited independence generators and their properties}

In this section, we recall standard definitions of limited independence and PRGs for limited independent distributions along with some basic facts that will be helpful in our proofs.

\begin{definition}[$d$-wise independent random variables]
    We say that a collection of random variables $\{\bX^{(1)},\ldots,\bX^{(m)}\}$ is $d$-wise independent if for all $S\sse [m]$ with $|S|\le d$, the random variables $\{\bX^{(i)}\}_{i\in S}$ are independent.
\end{definition}

\begin{definition}[$\delta$-biased $d$-wise independent distributions over $\zo^N$] 
    Let $\D$ be a distribution over $\zo^N$ and let $\bx\sim \D$. We say $\D$ is $d$-wise independent if the collection of random variables $\{\bx_i\}_{i\in [N]}$ are $d$-wise independent. Furthermore, we say that the distribution is $\delta$-biased if $\bx_i\sim \mathrm{Ber}(\delta)$ for all $i\in [N]$.  
\end{definition}

\begin{definition}[Explicit PRGs for marginally uniform $d$-wise independent random variables and $\delta$-biased, $d$-wise independent distributions over bitstrings]
\label{def:prgs for limited independent rvs}
    A function $\mcG:\zo^\ell\to (\zo^n)^m$ is a pseudorandom generator (PRG) for a collection of $d$-wise independent random variables over $\zo^n$ if the random variables $\{\bx^{(i)}\}_{i\in [m]}$ obtained by sampling $\br\sim \zo^\ell$ and setting $\mcG(\br) = (\bx^{(1)},\ldots,\bx^{(m)})$ are $d$-wise independent. The random variables are marginally uniform if $\bx^{(i)}\sim\mathrm{Unif}(\zo^n)$ for all $i\in [m]$. We say that $\mcG$ is explicit if there is a circuit $C:\zo^\ell\times\zo^{\log m}\to \zo^n$ of size $\poly(\ell, \log m, n)$ such that $C(r,i) = \mcG(r)^{(i)}$ for all $r\in \zo^\ell$ and $i\in [m]$.  

    For $n=1$, we say that $\mcG$ is a $\delta$-biased, $d$-wise independent PRG if $\bx^{(i)}\sim \mathrm{Ber}(\delta)$. We say $\mcG$ is explicit if, as above, there is a circuit which computes it of size $\poly(\ell, \log m, \log (1/\delta))$.
\end{definition}

\begin{fact}[Existence of explicit PRGs for $d$-wise independent random variables and for $\delta$-biased $d$-wise independent distributions over bitstrings]
\label{fact:explicit prgs for rvs}
    For all $d$, $n$, and $m\le 2^n$, there exists an explicit $\mcG:\zo^\ell\to (\zo^n)^m$ for a collection of marginally uniform $d$-wise independent random variables over $\zo^n$ with $\ell=O(dn)$. Furthermore, the size of the circuit computing $\mcG$ is $O(\ell\log \ell)$.

    Also, for any $r\le n$, there exists an explicit $\mcG:\zo^{\ell}\to \zo^{m}$ for $\ell=O(d n)$ which is a $2^{-r}$-biased, $d$-wise independent \textnormal{PRG}. Furthermore, the size of the circuit $C$ computing $\mcG$ is $O(\ell\log \ell)$.
\end{fact}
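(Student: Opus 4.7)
The plan is to establish both halves using the classical polynomial-evaluation construction over finite fields, with the second half reducing to the first via a simple post-processing step.

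For the first half, I would identify $\zo^n$ with $\mathbb{F}_{2^n}$ in the standard way and parse the seed $r \in \zo^{dn}$ as $d$ elements $a_0, \ldots, a_{d-1} \in \mathbb{F}_{2^n}$, defining the polynomial $p(x) = \sum_{j=0}^{d-1} a_j x^j$. Since $m \le 2^n$, I can embed $[m]$ injectively into $\mathbb{F}_{2^n}$ (e.g.\ by padding the binary representation of $i$ with zeros), and define $\mathbf{x}^{(i)} = p(i)$. Marginal uniformity and $d$-wise independence follow from the standard Vandermonde argument: for any distinct $i_1, \ldots, i_d$, the evaluation map $(a_0, \ldots, a_{d-1}) \mapsto (p(i_1), \ldots, p(i_d))$ is a bijection on $\mathbb{F}_{2^n}^d$, so the joint distribution of those $d$ outputs is the uniform distribution on $\mathbb{F}_{2^n}^d$. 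The circuit $C(r,i)$ evaluates $p(i)$ by Horner's rule, using $d$ additions and $d-1$ multiplications in $\mathbb{F}_{2^n}$.

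For the second half, I would use the first half as a subroutine. Instantiating part~(1) with parameters $d, n, m$ (relying on the implicit assumption $m \le 2^n$) yields $d$-wise independent marginally uniform $\mathbf{y}^{(i)} \in \zo^n$ with seed length $O(dn)$. Then set $\mathbf{x}^{(i)} = 1$ iff the first $r$ coordinates of $\mathbf{y}^{(i)}$ are all zero. Each $\mathbf{x}^{(i)}$ is a deterministic function of $\mathbf{y}^{(i)}$ alone, so the joint distribution of $(\mathbf{x}^{(i_1)}, \ldots, \mathbf{x}^{(i_d)})$ inherits independence from the corresponding joint distribution of the $\mathbf{y}$'s; and the marginal is $\mathrm{Ber}(2^{-r})$ because $\mathbf{y}^{(i)}$ is uniform on $\zo^n$ and the event is cut out by $r$ coordinates. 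The resulting circuit is the part-(1) circuit followed by an AND of $r$ negated bits, adding only $O(n)$ size.

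The main obstacle is implementing $\mathbb{F}_{2^n}$ arithmetic by circuits of size $\tilde{O}(n)$ to match the target bound $O(\ell \log \ell) = O(dn \log(dn))$. Addition is trivially bitwise XOR, but multiplication modulo a fixed irreducible polynomial of degree $n$ requires fast polynomial multiplication followed by modular reduction; this is a well-established subroutine with circuit size $O(n \log n \log \log n)$, which is comfortably absorbed by the target. A minor bookkeeping point is fixing (and hardwiring) the irreducible polynomial defining $\mathbb{F}_{2^n}$; since the construction is nonuniform in $n$, this does not affect the asymptotic circuit size. Everything else reduces to routine verification that Horner's rule composed with the field-element encoding produces a single circuit $C(r,i)$ of the claimed size.
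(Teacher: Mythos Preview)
Your proposal is correct and follows essentially the same approach as the paper: the paper cites the random-polynomial construction of Joffe for the first part and derives the second part by ANDing $r$ bits of the output, exactly as you do. Your write-up in fact supplies more detail than the paper's brief sketch (Horner's rule, field-arithmetic circuit sizes), and your direction of the AND (output $1$ iff the first $r$ bits are zero) is the correct one for obtaining bias $2^{-r}$.
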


The first PRG in \Cref{fact:explicit prgs for rvs} follows via a well-known random polynomial construction \cite{Jof74}. 

The latter construction in \Cref{fact:explicit prgs for rvs} can be derived from the first by \red{taking an $r$-wise AND:} To obtain a $2^{-r}$ biased bit from a uniform random $\bm{x}\sim\zo^n$, we can output $0$ if an only if the first $r$ bits of $\bm{x}$ are 0. It is straightforward to verify that this truncation can be performed efficiently given a circuit computing $\bm{x}$.





It is well-known that Chernoff-like concentration bounds extend to the setting of sums of $d$-wise independent random variables. We will make use of the following such bounds.

\begin{fact}[Concentration bounds for sums of $d$-wise independent random variables \cite{SSS95}]
\label{fact:SSS}
    Let $\bX_1,\ldots, \bX_n$ be $d$-wise independent random variables in the range $[0,1]$. Let $\bX = \sum_{i=1}^n\bX_i$ and $\mu= \E[\bX]$. Then, for all $\gamma>0$
    $$
    \Pr[\bX \ge \mu(1+\gamma)]\le \left(\frac{e^\gamma}{(1+\gamma)^{(1+\gamma)}}\right)^\mu
    $$
    as long as $d\ge \mu\gamma / (1 - \mu/n)$.
\end{fact}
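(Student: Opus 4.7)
The plan is to apply Markov's inequality to a degree-$d$ polynomial of the $\bX_i$'s and exploit $d$-wise independence to compute its expectation. Since $d$-wise independence is monotone --- implying $d'$-wise independence for any $d'\le d$ --- we may assume $d=\lceil \mu\gamma/(1-\mu/n)\rceil$ throughout, so the only job is to establish the bound at this one value of $d$.

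The first step is a reduction to a Bernoulli sum via the conditional coupling $\bB_i\mid\bX_i\sim\mathrm{Ber}(\bX_i)$. The $\bB_i$ inherit $d$-wise independence from $\bX$ (the marginal law of any $\le d$ of the $\bB_i$'s depends on the corresponding $\bX_i$'s through a polynomial of degree at most $d$, whose expectation factorizes by $d$-wise independence of $\bX$), and $\E[\bX^d]\le\E[\bB^d]$: expanding both sides, each length-$d$ product $\bX_{i_1}\cdots\bX_{i_d}$ involves at most $d$ distinct indices $S$ and factors by $d$-wise independence into $\prod_{j\in S}\E[\bX_j^{k_j}]\le\prod_{j\in S}p_j$ (using $\bX_j\in[0,1]$), which is the analogous quantity for $\bB$. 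In the Bernoulli regime, $e_d(\bB_1,\ldots,\bB_n)=\binom{\bB}{d}$, and by multilinearity plus $d$-wise independence $\E[\binom{\bB}{d}]=e_d(p_1,\ldots,p_n)\le\binom{n}{d}p^d$ via Maclaurin's inequality (with $p=\mu/n$). Markov then yields
$$
\Pr[\bX\ge(1+\gamma)\mu]\le\frac{\binom{n}{d}p^d}{\binom{(1+\gamma)\mu}{d}}=\prod_{j=0}^{d-1}\frac{(n-j)p}{(1+\gamma)\mu-j}.
$$

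A direct check shows that the $j$-th factor in this product equals $1$ exactly at $j=\mu\gamma/(1-p)$, is $<1$ for smaller $j$, and $>1$ for larger $j$; thus the product is minimized at $d=\mu\gamma/(1-\mu/n)$, which is the stated threshold, and the factor $(1-\mu/n)^{-1}$ emerges as a finite-population correction reflecting this crossover. The main obstacle I anticipate is the final arithmetic --- taking logarithms at $d=\mu\gamma/(1-\mu/n)$ and verifying via Stirling's approximation that the product collapses to exactly the Chernoff exponent $\mu[\gamma-(1+\gamma)\log(1+\gamma)]$, with no slack in the constants. In the sparse limit $\mu/n\to 0$ this reduces to the identity $\log\prod_{j=0}^{d-1}((1+\gamma)\mu-j)-d\log\mu=(1+\gamma)\mu\log(1+\gamma)-\gamma\mu$; the general case carries the $(1-\mu/n)$ correction through the same calculation, and the very fact that the threshold matches the binomial-product crossover point serves as a cross-check that no constants have been lost in the reduction.
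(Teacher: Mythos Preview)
The paper does not prove this statement; it is quoted as a black-box fact from \cite{SSS95}. So there is no paper proof to compare against, and your proposal should be judged on its own merits.

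Your overall strategy---Markov's inequality applied to a degree-$d$ symmetric polynomial, with $d$-wise independence used to compute the expectation---is indeed the approach of \cite{SSS95}. However, there is a genuine gap in your reduction from $[0,1]$-valued variables to Bernoulli. You correctly establish $\E[\bX^d]\le\E[\bB^d]$, but you then apply Markov to $\binom{\bB}{d}$ and claim this bounds $\Pr[\bX\ge t]$. These do not connect: the moment comparison controls $\E[\bX^d]/t^d$, not $\E[\binom{\bB}{d}]/\binom{t}{d}$, and the coupling certainly does not give $\Pr[\bX\ge t]\le\Pr[\bB\ge t]$ (take $n=1$, $\bX_1\equiv 1/2$, $t=1/2$). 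Nor can you simply replace $\bX^d$ by $\binom{\bX}{d}$ and repeat the term-by-term comparison, since the falling factorial has alternating signs and $\binom{\bX}{d}$ is not even nonnegative when $\bX<d-1$.

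The clean fix is to apply Markov to $e_d(\bX_1,\ldots,\bX_n)$ directly: multilinearity plus $d$-wise independence give $\E[e_d(\bX)]=e_d(p_1,\ldots,p_n)\le\binom{n}{d}p^d$ \emph{with equality} (no Bernoulli detour needed). The missing lemma, which is where the $[0,1]$ constraint actually enters, is that $\sum_i x_i\ge t$ with $x_i\in[0,1]$ forces $e_d(x)\ge\binom{t}{d}$. This follows from Schur-concavity of $e_d$: the minimum on $\{\sum x_i=t,\ x_i\in[0,1]\}$ is at $(1,\ldots,1,\{t\},0,\ldots,0)$, where $e_d=\binom{\lfloor t\rfloor}{d}+\{t\}\binom{\lfloor t\rfloor}{d-1}$, and convexity of $x\mapsto\binom{x}{d}$ on $[d-1,\infty)$ shows this dominates $\binom{t}{d}$. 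With this lemma in hand, Markov gives exactly your product formula, and the remainder of your sketch (crossover at $j=\mu\gamma/(1-\mu/n)$, Stirling for the final arithmetic) is on the right track, though you correctly flag that the bookkeeping there is not yet complete.
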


\subsection{Pseudorandom function families}
\begin{definition}[Pseudorandom function families (PRFFs) secure against non-uniform adversaries]
    \label{def:prff}
    A pseudorandom function family $\mathcal{F}=\{f_s:\zo^{n}\to\zo\}_{s\in\zo^n}$ is a collection of functions such that the following holds.
    \begin{enumerate}
        \item There exists a polynomial-time algorithm that given $s,x\in\zo^n$ computes $f_s(x)$.
        \item For all polynomial-time oracle algorithms with advice $A$ and a negligible function $\sigma(n)$, we have
        \begin{equation*}
        \begin{split}
        \left|
            \begin{aligned}
                    &\quad\Prx_{\bs\sim \zo^n}[A(1^n)\text{ outputs }1\text{ when given oracle access to }f_{\bs} ]\\
                    &\quad \qquad -\Prx_{\boldf:\zo^n\to\zo}[A(1^n)\text{ outputs }1\text{ when given oracle access to }\boldf ]\quad
                \end{aligned}
        \right| < \sigma(n)
        \end{split}
        \end{equation*}
        where $\boldf:\zo^n\to\zo$ is a uniformly random  function.
    \end{enumerate}
\end{definition}

A classic result in cryptography shows that one way functions (OWFs) secure against non-uniform adversaries imply the existence of PRFFs secure against non-uniform adversaries \cite{GGM86}. Since we only consider non-uniform adversaries, when we write OWFs, we mean those that give rise to the PRFFs in \Cref{def:prff}.

\section{Proof of~\Cref{thm:statistical intro}}

In this section we prove~\Cref{lem:non-explicit intro,lem:evasiveness implies learnability intro} and show how together they yield~\Cref{thm:statistical intro}. Subsequent sections will build on the proof of~\Cref{lem:non-explicit intro} and also use~\Cref{lem:evasiveness implies learnability intro}.

\begin{theorem}[Formal version of~\Cref{thm:statistical intro}]\label{thm:statistical formal}
For any $\delta\geq 2^{-n}$, there is a concept class $\mcC$ over $\zo^n$ such that
\begin{enumerate}
    \item The VC dimension of $\mcC$ is $\ge \delta   2^n$ and 
    \item $\mcC$ is learnable under size-$s$ distributions to error $O(\eps)$ with sample complexity $O((s\log s)/\eps^2)$ and runtime $ O(n(s\log s)/\eps^2)$ for all $s\geq n$ and  $\eps\geq 4\delta$.
\end{enumerate}
\end{theorem}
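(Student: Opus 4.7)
The plan is to directly combine~\Cref{lem:non-explicit intro} with~\Cref{lem:evasiveness implies learnability intro}, taking $\mcC = \mcA_H$ where $\mcA$ is the class of all Boolean functions and $H$ is the evasive set guaranteed by~\Cref{lem:non-explicit intro}. The theorem essentially asserts that the two lemmas fit together with matching quantitative parameters, so once they are in hand the proof is short.

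First, I would invoke~\Cref{lem:non-explicit intro} with the given $\delta \ge 2^{-n}$ to obtain a $\delta$-dense set $H \sse \zo^n$ that $(\eps, O((s\log s)/\eps))$-evades all size-$s$ distributions for every $s \ge n$ and every $\eps \ge 4\delta$. Define
\[
    \mcC \coloneqq \mcA_H = \{ f_H \colon f \in \mcA \}.
\]
For the VC dimension lower bound, I would observe that $H$ itself is shattered by $\mcC$: given any labeling $\ell \colon H \to \zo$, the function $f \in \mcA$ defined by $f(x) = \ell(x)$ for $x \in H$ (and arbitrarily elsewhere) satisfies $f_H(x) = \ell(x)$ for all $x \in H$. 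Hence $\VCdim(\mcC) \ge |H| \ge \delta \cdot 2^n$, giving part~(1).

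For part~(2), I would apply~\Cref{lem:evasiveness implies learnability intro} to $\mcC = \mcA_H$ with the evasiveness parameters supplied by the previous step, namely $k = O((s\log s)/\eps)$. The lemma then yields a learner achieving error $O(\eps)$ under every size-$s$ distribution using $O(k/\eps) = O((s\log s)/\eps^2)$ samples and running in time $O(nk/\eps) = O(n(s\log s)/\eps^2)$, matching the claimed bounds.

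There is no real obstacle: the only thing to check is that the $k$ in the evasiveness guarantee of~\Cref{lem:non-explicit intro} feeds directly into the sample and runtime bounds of~\Cref{lem:evasiveness implies learnability intro} without loss. The substantive content of the theorem lives entirely inside those two lemmas---the probabilistic existence argument for the dense evasive $H$ and the ``memorize the heavy hitters'' learner---so the proof here is a short composition of their statements.
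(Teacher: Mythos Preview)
Your proposal is correct and matches the paper's own proof essentially line for line: the paper also takes $\mcC = \mcA_H$ for the $H$ given by \Cref{lem:non-explicit intro}, observes that $H$ is shattered to get the VC lower bound, and plugs the evasiveness parameters into \Cref{lem:evasiveness implies learnability intro} to obtain the stated sample and time bounds.
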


\Cref{thm:statistical intro} follows immediately from \Cref{thm:statistical formal} by choosing $\delta=2^{-n/2}$. With this $\delta$, we have that $\mcC$ is learnable over size-$s$ distributions to error $O(\eps)$ using $\poly(n,s,1/\eps)$ samples and runtime. Yet~$\mcC$ requires $2^{\Omega(n)}$ samples to learn in standard PAC even just to error $0.1$.



\subsection{Proof of~\Cref{lem:non-explicit intro}}

In this section, we will find an $H$ that evades all samplable distributions. This $H$ will form the basis for the concept class $\mcC_H$ that will be learnable in samplable PAC. 
 

We prove~\Cref{lem:non-explicit intro} using the probabilistic method. We first show that any {\sl fixed} distribution, regardless of samplability, hits a random $\bH$ with small probability (\Cref{claim:fully-random misses}). Taking a union bound over all samplable distributions, we derive the existence of an $H$ that evades all of them. 


\begin{claim}
[Distributions are unlikely to hit  $\bH$]\label{claim:fully-random misses}
 Let $\delta>0$, $\eps\geq 2\delta$, and $k\in \mathds{N}$. Let $\bH\sse\zo^n$ be a random subset such that each $x\in\zo^n$ is included in $\bH$ independently with probability $\delta$. Then for any distribution $\mcD$, 
\begin{align*}
        \Prx_{\bH}[\text{$\mcD$ \emph{$(\eps, k)$-hits} $\bH$}]\leq e^{-\eps k/6}.
\end{align*}
\end{claim}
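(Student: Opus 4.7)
The natural choice is to let $\mcD$ pick the best-case $H^*$ for us: take $H^*$ to be the $k$ elements of $\zo^n$ with the largest $\mcD$-mass (ties broken arbitrarily), and note that this choice does \emph{not} depend on $\bH$. If $\mcD$ $(\eps,k)$-hits $\bH$, then in particular $\mcD(\bH\setminus H^*)\geq \eps$. So it suffices to bound
\[
    \Prx_{\bH}\!\left[\mcD(\bH\cap L)\geq \eps\right],\qquad \text{where } L\coloneqq \zo^n\setminus H^*.
\]
By our choice of $H^*$, every $x\in L$ satisfies $\mcD(x)\leq 1/k$: otherwise we could improve $H^*$ by swapping in $x$, as the top $k$ masses sum to at most $1$.

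Now I would express $\mcD(\bH\cap L)$ as a sum of independent bounded random variables and apply Chernoff. Concretely, for $x\in L$ let $\bY_x \coloneqq k\cdot \mcD(x)\cdot \mathbf{1}[x\in \bH]\in[0,1]$, and let $\bX\coloneqq \sum_{x\in L}\bY_x = k\cdot \mcD(\bH\cap L)$, so that
\[
    \mu\coloneqq \E[\bX] = k\delta\cdot \mcD(L) \leq k\delta \leq \tfrac{k\eps}{2},
\]
using the hypothesis $\eps\geq 2\delta$. We want to bound $\Pr[\bX\geq k\eps]$. Writing $k\eps = (1+\gamma)\mu$, the bound $\mu\leq k\eps/2$ gives $\gamma\geq 1$, so $\gamma/(2+\gamma)\geq 1/3$ and $\gamma\mu = k\eps - \mu \geq k\eps/2$. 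Plugging into the Chernoff bound of \Cref{fact:chernoff}:
\[
    \Pr[\bX\geq (1+\gamma)\mu] \;\leq\; \exp\!\left(-\frac{\gamma^2\mu}{2+\gamma}\right) \;=\; \exp\!\left(-\frac{\gamma}{2+\gamma}\cdot \gamma\mu\right) \;\leq\; \exp\!\left(-\tfrac{1}{3}\cdot \tfrac{k\eps}{2}\right) \;=\; e^{-k\eps/6},
\]
which is the desired bound. The degenerate case $\mu=0$ (i.e.~$\mcD(L)=0$) is trivial since then $\bX=0$ deterministically.

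\textbf{Main obstacle.} There is no deep obstacle: the argument is a one-line Chernoff bound once one fixes the right $H^*$. The only point that requires care is arranging the calculation so that the constant in the exponent comes out cleanly—specifically, exploiting $\eps\geq 2\delta$ twice (once to conclude $\gamma\geq 1$, once to conclude $\gamma\mu\geq k\eps/2$) to extract the $k\eps/6$ rate. Choosing $H^*$ independently of $\bH$ is what makes the $\mathbf{1}[x\in\bH]$ indicators independent, which is essential for applying \Cref{fact:chernoff}.
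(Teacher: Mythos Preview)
Your proof is correct and follows essentially the same approach as the paper: fix a small set of heavy elements independently of $\bH$, observe that hitting implies large mass on the remaining light elements, and apply Chernoff to the resulting sum of bounded independent variables. The only cosmetic differences are that the paper takes $T=\{x:\mcD(x)\geq 1/k\}$ rather than the top-$k$ mass elements (both choices guarantee $\mcD(x)\leq 1/k$ off the set), and the paper substitutes $\gamma=(\eps-\delta)/\delta$ explicitly whereas your route via $\gamma\geq 1$ and $\gamma\mu\geq k\eps/2$ is a bit cleaner.
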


\begin{proof}
Let $T\subseteq \zo^n$ consist of $x$ such that $\mcD(x)\geq \frac{1}{k}$. Notice that $|T|\leq k$. So if $\mcD$ $(\eps,k)$ hits $H$, then by definition $\mcD(H\setminus T)\geq \eps$. Therefore, 
\begin{align*}
    \Prx_{\bH}[\text{$\mcD$ \emph{$(\eps, k)$-\emph{hits}} $\bH$}]\leq \Prx_{\bH}[\mcD(\bH\setminus T)\geq \eps]
\end{align*}
so it suffices to bound this second probability. Unpacking this notation, we see that 
\begin{align*}
    \mcD(\bH \setminus T) = \sum_{x} \mcD(x)\cdot  \Ind[x\in \bH\setminus T]
\end{align*}
is the sum of independent random variables. Let's name these variables $\bZ_x= \mcD(x)\cdot \mathds{1}[x\in\bH\setminus T]$. By the definition of $T$, each variable is bounded $\bZ_x\in[0,1/k]$, and we can also  bound the expected value of their sum 
\begin{align*}
    \Ex_{\bH}[\mcD(\bH \setminus T)] =\Ex_{\bH}\bigg[\sum_{x} \bZ_x\bigg] \leq \sum_{x} \mcD(x) \Prx_{\bH} [x\in\bH\setminus T]\leq \delta
\end{align*}
 by the density of $\bH$. Rescaling, $k\bZ_x$ are random variables bounded in $[0,1]$ with expected sum at most $\delta k$. We can therefore apply the Chernoff bound given in~\Cref{fact:chernoff}: 
\begin{align*}
    \Prx_{\bH}[\mcD(\bH\setminus T)\geq \eps] &= \Prx_{\bH}\bigg[\sum_x k\bZ_x \geq k \eps\bigg]\\ 
    &= \Prx_{\bH}\bigg[\sum_x k\bZ_x \geq (1+(\eps-\delta)/\delta) \delta k\bigg]\\ 
    &\leq \exp\bigg(-\frac{(\eps -\delta)^2}{\delta^2(2+(\eps -\delta)/\delta)}\delta k\bigg)
    \tag{\Cref{fact:chernoff} with $\gamma =\frac{\eps-\delta}{\delta}$}\\ 
    &= \exp\bigg(-\frac{(\eps -\delta)^2}{\eps+\delta} k\bigg)\\
    &\leq \exp\bigg(-\frac{(\eps/2)^2}{3\eps/2} k\bigg)\tag{$\delta\leq \eps/2$}\\
    &=e^{-\eps k/6},
\end{align*} 
which proves the claim.  \end{proof}

We prove the following quick proposition that will allow us to give a bound that holds over all (infinitely many) values of $\eps$ by union bounding over a finite \red{set of values}. 
\newcommand{\Bad}{\mathrm{Bad}}
\begin{proposition}[Discretizing $\eps$]\label{prop:discrete_eps}
Let $\eps\geq 2^{-n}$, $k\in \mathds{N}$, and $H\subseteq \zo^n$. 
\red{Let $\mcD$ be a distribution that $(\eps, O(k/\eps))$-hits $H$.}
Then there exists $i\in[n]$ such that $\mcD$ also $(2^{-i}, O(k/ 2^{-i}))$-hits $H$.
\end{proposition}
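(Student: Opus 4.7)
The plan is to round $\eps$ down to the nearest dyadic value $2^{-i}$ and argue that the hypothesis at the real value $\eps$ already implies the conclusion at this nearby dyadic value, with the constant hidden in the $O(\cdot)$ deteriorating by at most a factor of two. Concretely, I would take $i$ to be the smallest positive integer with $2^{-i}\le \eps$. Because $\eps\in [2^{-n},1]$, such an $i$ exists and satisfies $i\in [n]$; by the minimality of $i$ we moreover have $\eps\le 2\cdot 2^{-i}$ (the edge case $i=1$ just uses $\eps\le 1$).

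Next I would unpack the hypothesis: there is a constant $C>0$ such that for every $H^*\subseteq\zo^n$ with $|H^*|\le Ck/\eps$, we have $\mcD(H\setminus H^*)\ge \eps$. I would then set $C'\coloneqq C/2$ and verify the conclusion directly from this. Given any $H^*$ with $|H^*|\le C'k\cdot 2^i$, the inequality $\eps\cdot 2^i\le 2$ yields
\[
|H^*|\;\le\; C'k\cdot 2^i \;=\; \frac{C'k\cdot(\eps\cdot 2^i)}{\eps}\;\le\;\frac{2C'k}{\eps}\;=\;\frac{Ck}{\eps},
\]
so the hypothesis applies to this $H^*$ and gives $\mcD(H\setminus H^*)\ge \eps\ge 2^{-i}$. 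This is exactly the statement that $\mcD$ $(2^{-i},\,O(k\cdot 2^i))$-hits $H$, proving the proposition.

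I do not expect any genuine obstacle here; the proposition is a purely quantitative bookkeeping step. Its role, I believe, is to set up a subsequent union bound over just the $n$ dyadic thresholds $2^{-1},2^{-2},\ldots,2^{-n}$ rather than an uncountable union over all $\eps\ge 2^{-n}$, which will let the authors upgrade \Cref{claim:fully-random misses} to a statement that holds simultaneously for every $\eps$ in this range.
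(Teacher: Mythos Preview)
Your proposal is correct and takes essentially the same approach as the paper: choose $i$ to be the least positive integer with $2^{-i}\le\eps$, use $2^{-i}\le\eps\le 2\cdot 2^{-i}$ to see that $O(k/\eps)=O(k/2^{-i})$, and then invoke the monotonicity of $(\eps',k)$-hitting in $\eps'$. Your version is a bit more explicit in unpacking the definition and tracking the constant, but the argument is the same, and your remark about the proposition's role in the subsequent union bound is exactly right.
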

\begin{proof}
    Let $i$ be the smallest integer such that $2^{-i}\leq \eps$. Since $2^{-i} \leq \eps \leq 2^{-i+1}$, we have $O(k/\eps)= O(k/2^{-i})$;
    \red{thus $\mcD$ $(\eps, O(k/2^{-i}))$-hits $H$.
    We observe that if $\mcD$ $(\eps, O(k/2^{-i}))$-hits $H$, then $\mcD$ also $(\eps', O(k/2^{-i}))$-hits $H$ for any $\eps' \le \eps$.
    Thus, $\mcD$ $(2^{-i}, O(k/ 2^{-i}))$-hits $H$.}
    
\end{proof}

We are now ready to prove~\Cref{lem:non-explicit intro}. We restate it here for convenience.

\CountingArgument*

\begin{proof}
\red{Let $\delta' = 2\delta$ and $\bH$ be a $\delta'$-biased random set.} Let $\Bad(s,\eps,k,\bH)$ be the event that there exists a size-$s$ distribution that $(\eps,k)$-hits $\bH$.  We will show that, for a random $\bH$, the probability that $\Bad(s, \eps, O((s \log s)/\eps), \bH)$ occurs for any $s \geq n$ and \red{$\eps \geq 2\delta'$} is strictly less than $1$, implying such an $H$ exists.

    \red{By~\Cref{fact:circuits}, there exists a constant $c_1$ such that there are at most $s^{c_1(n+s)}$ circuits of size $s$.}
    Then, applying~\Cref{claim:fully-random misses} and a union bound over all size-$s$ distributions, we have
    \red{
      \begin{align*}
        \Prx_{\bH}[\Bad(s,\eps,k,\bH)]\leq e^{-\eps k/6}e^{c_1(s+n)\log s}.
    \end{align*}
    We then choose $k^\star \coloneqq 6(c_1+c_2)(s+n)
\log(s)/\eps)$, for a sufficiently large constant $c_2$:
\begin{align*}   \Prx_{\bH}\Big[\Bad\Big(s,\eps,k^\star,\bH\Big)\Big]\leq e^{-c_2(s+n)\log s}.       
\end{align*}
Taking a union bound over all possible values of $s$,
\begin{align*}
     \Prx_{\bH}\Big[\exists \text{ $s\in \mathds{N}$ such that $\Bad\Big(s,\eps,k^\star,\bH\Big)$}\Big]\leq \sum_{s=1}^\infty e^{-c_2(s+n)\log s}.
\end{align*}
    }
In order to prove that we can have a single choice of $H$ for all \red{$\eps \geq 2\delta' = 4\delta$}, we also need to union bound over all values of $\eps$. 
Since $\delta\geq 2^{-n}$, we have that \red{$\eps\geq 2\delta' > 2^{-n}$}. Thus, by~\Cref{prop:discrete_eps}, it suffices to only consider discretized $\eps$ of the form $2^{-i}$ for $i\in [n]$.
Then, union bounding over these $n$ values of $\eps$ we have, 
\red{
\begin{align*}
     \Prx_{\bH}\Big[\exists  \text{ $s\in \mathds{N}$ and } \eps &\geq 2\delta' \text{ s.t. $\Bad\Big(s,\eps,k^\star,\bH\Big)$}\Big]\\
     &\leq \sum_{s=1}^\infty n\cdot e^{-c_2(s+n)\log s}\\
     &= \sum_{s=1}^\infty e^{-c_2(s+n)\log s + \ln n}.
\end{align*}
}

The above is a convergent series, so by choosing \red{$c_2$} appropriately, it is less than some small constant---$0.1$ will suffice for our application. We have therefore shown that 0.1 of the randomly chosen $H$ do indeed $(\eps, O((s\log s)/\eps))$-evade all size-$s$ distributions for all $s\geq n$ and \red{$\eps\geq 2\delta'$}. 

It remains to further show that there exists such an $H$ of $\delta$ density. To prove this, it suffices to show that the probability that $\bH$ has less than $\delta$ density is smaller than $0.9$. As we just argued, the probability that $\bH$ is not evasive is less than $0.1$, so another union bound would complete the proof. Therefore, for the remainder of the proof, we bound the probability that $\bH$ is not as dense as we'd like. We do this via a simple Chernoff bound
\red{
\[ 
    \Prx_{\bH} \big[|\bH| \le \delta 2^n] =\Prx_{\bH} \big[|\bH| \leq \delta' 2^n/2\big] = \Prx_{\bH} \Big[\sum_x \Ind[x\in\bH] \leq \delta' 2^n/2\Big]
    \leq e^{-\delta 2^n/8}.
\] }
Noting that \red{$\delta' \cdot 2^{-n} \geq 1$}, the above probability at most $e^{-1/8} \leq 0.9$. 
\end{proof}


\subsection{Proof of~\Cref{lem:evasiveness implies learnability intro}}

We have shown that there exists an $H$ that evades all samplable distributions. Recall the definition of $\mcC_H=\{f_H : f\in \mcC\}$; it is the restriction of an arbitrary concept class $\mcC$ to $H$. We now show that $\mcC_H$ is learnable in samplable PAC. 



\LearnEvasiveH*

\begin{proof}
Let $\mcD$ be any size-$s$ distribution. By assumption, this distribution $(\eps,k)$-misses $H$. Let $m= O(k/\eps)$ be the size of the sample $\bS$, and let $A$ be the learner that, given a sample $\bS\sim \mcD^m$, outputs a hypothesis $h$ with the following properties (1) $h$ correctly labels all $x$ in $\bS$ and (2) $h(x)=0$ for all $x$ not in $\bS$. This learner $A$ can be efficiently implemented by simply appending every example in $\bS$ to the end of a DNF. There are $m$ examples each of length $n$, so the total runtime is $O(mn)$. 

Let $\error_{\mcD}(A)$ denote $A$'s expected error over $\mcD$. Since the target function is constant 0 outside $H$, this learner can only err on $x$ if $x\in H$ and $x\not\in S$. Therefore, its expected error is at most 
\begin{align*}
     \error_{\mcD}(A)\leq \sum_{x \in H} \mathcal{D}(x) \cdot \Prx_{\bS\sim \mathcal{D}^m}[x \notin \bS].
\end{align*}

Let $H^*\subseteq H$ be the $k$ heaviest elements in $H$. Formally, $H^*$ is the set maximizing $\mcD(H^*)$ subject to the constraints $|H^*|\leq k$ and $H^*\subseteq H$. Then, we can split the above sum in two by considering elements in $H^*$ and not in $H^*$, 
\begin{align*}
     \error_{\mcD}(A) &= \sum_{x \in H\setminus H^*} \mathcal{D}(x) \cdot \Prx_{\bS\sim \mathcal{D}^m}[x \notin \bS]+\sum_{x \in H^*} \mathcal{D}(x) \cdot \Prx_{\bS\sim \mathcal{D}^m}[x \notin \bS] \\
     &\leq \eps +\sum_{x \in H^*} \mathcal{D}(x) \cdot \Prx_{\bS\sim \mathcal{D}^m}[x \notin \bS] \tag{Definition of $(\eps,k)$-missing}\\
      &\leq \eps +\sum_{x \in H^*} \mathcal{D}(x) \cdot (1 - \mcD(x))^m\\
      &\leq \eps +\sum_{x \in H^*} \mathcal{D}(x) e^{-\mcD(x)m}.
\end{align*}
By taking the first derivative of each term in the sum with respect to $\mcD(x)$ and setting it equal to 0, we can see that the above sum is maximized if each $\mcD(x)=1/m$, at which point this simplifies to
\begin{align*}
         \error_{\mcD}(A)&\leq \eps + \sum_{x \in H^*}\frac{1}{e m}\\
         &= \eps + \frac{k}{e m}. \tag{Size of $T$}
\end{align*}
Setting $m= O(k/\eps)$ achieves expected error $O(\eps)$. Applying a Markov bound completes the proof. 
\end{proof}


\subsection{Proof of~\Cref{thm:statistical formal}}

The proof is straightforward given the lemmas in the previous sections. Recall that $\mcA$ denotes the class of all functions $f:\zo^n\rightarrow \zo$. By~\Cref{lem:non-explicit intro}, there exists an $H$ of density $\ge \delta$ that $(\eps, O((s\log s)/\eps))$-evades all size-$s$ distributions. Consider the concept class $\mcA_{H}$. 

\paragraph{Proof of (i)}  Clearly, $H$ is shattered by $\mcA_H$. Since $H$ has size $ \ge \delta 2^n$ (and no set larger than $H$ can be shattered by $\mcA_H$),  this means $\mcA_H$ has VC dimension $ \ge \delta 2^n$. 
\paragraph{Proof of (ii)} Because $H$ does $(\eps, O((s\log s)/\eps))$-evade all size-$s$ distributions, we can apply~\Cref{lem:evasiveness implies learnability intro}, to conclude that there is an algorithm that learns $\mcA_H$ to error $O(\eps)$ with sample complexity $O((s\log s)/\eps^2)$ and runtime $ O(n(s\log s)/\eps^2)$.

\section{Pseudorandom constructions of evasive sets and the proof of~\Cref{thm:separations within samplable PAC intro}}

In this section we use bounded-independence pseudorandom generators to prove~\Cref{thm:separations within samplable PAC intro}. Our techniques will also be useful for our proof of computational analogue of~\Cref{thm:separations within samplable PAC intro} in~\Cref{sec:computational-within-samplable}.


\begin{theorem}[Formal version of \Cref{thm:separations within samplable PAC intro}]
    \label{thm:separations within samplable PAC formal}
    For every $t\ge n$ and $\delta \le  2^{-(n/2+1)}$, there exists a concept class $\mcC$ such that the following holds.
    \begin{enumerate}
        \item There is a size-$O(tn\log t)$ distribution $\D$ such that any algorithm that learns $\mcC$ over $\D$ to error $0.1$ requires $\Omega(\delta 2^n)$ many samples. 
        \item For all $s\le t$ and $\eps \ge 4\delta$, there is an efficient algorithm that learns $\mcC$ to error $\eps$ over size-$s$ distributions using $O( (s\log s)/\eps^2)$ samples.
    \end{enumerate}
\end{theorem}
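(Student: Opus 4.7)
The plan is to parallel the proof of \Cref{thm:statistical formal} but replace the non-explicit random $\bH$ of \Cref{lem:non-explicit intro} with a pseudorandomly-sampled $H$ whose membership -- and, crucially, whose uniform distribution -- can be implemented by a circuit of size $O(tn\log t)$. The sampling circuit will yield the hard distribution in part~(1), while pseudorandomness will preserve evasiveness and hence part~(2). Concretely, let $m = n - \lceil\log(1/\delta)\rceil$ so that $2^m\ge \delta 2^n$ and (using $\delta\le 2^{-(n/2+1)}$) also $m<n-m$. Using \Cref{fact:explicit prgs for rvs}, take an explicit PRG $\mcG:\zo^\ell\to (\zo^{n-m})^{2^m}$ for marginally uniform $d$-wise independent random variables, with $\ell=O(dn)$ and circuit size $O(\ell\log\ell)$, where $d$ is a polynomial in $t$ to be tuned. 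Writing $\mcG_r(y)$ for the $y$-th output on seed $r$, define
\[
H_r \;=\; \big\{(y,\mcG_r(y)) : y\in\zo^m\big\}\;\subseteq\;\zo^n,
\]
so $|H_r|=2^m\ge \delta 2^n$ holds for every $r$ (the $y$-coordinate alone distinguishes the tuples). The concept class will be $\mcC := \mcA_{H_{r^*}}$ for a well-chosen seed $r^*$.

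The main step is a pseudorandom analog of \Cref{claim:fully-random misses}. For any fixed size-$s$ distribution $\mcD$ and integer $k$, set $T = \{x : \mcD(x) \ge 1/k\}$ and observe
\[
\mcD(H_r\setminus T) \;=\; \sum_{y\in\zo^m} \mcD\big((y,\mcG_r(y))\big)\,\Ind\big[(y,\mcG_r(y))\notin T\big],
\]
which over a random seed $\boldsymbol{r}\sim\mathrm{Unif}(\zo^\ell)$ is a sum of $2^m$ random variables, each bounded in $[0,1/k]$ and with total expectation $\le \delta$ -- exactly as in the proof of \Cref{claim:fully-random misses}. Marginal uniformity and $d$-wise independence of the outputs $\mcG_{\boldsymbol{r}}(y)$ across $y\in\zo^m$ make these variables $d$-wise independent, so invoking \Cref{fact:SSS} in place of \Cref{fact:chernoff} yields a concentration bound of the same flavor, provided $d$ is taken as $\Theta(t\log t)$. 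Union-bounding over size-$s$ circuits via \Cref{fact:circuits}, over $\eps\ge 4\delta$ via \Cref{prop:discrete_eps}, and over $s\le t$ exactly as in the proof of \Cref{lem:non-explicit intro} then yields a positive fraction of seeds $r^*$ for which $H_{r^*}$ $(\eps,k)$-evades every size-$s$ distribution, for all $s\le t$, $\eps\ge 4\delta$, and $k = O((s\log s)/\eps)$.

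Given such an $r^*$, part~(2) of the theorem follows immediately by applying \Cref{lem:evasiveness implies learnability intro} to $\mcA$ and $H_{r^*}$, yielding sample complexity $O(k/\eps) = O((s\log s)/\eps^2)$ and runtime $O(nk/\eps)$ for every $s\le t$ and $\eps\ge 4\delta$. For part~(1), the set $H_{r^*}$ is trivially shattered by $\mcA_{H_{r^*}}$ and has size $\ge \delta 2^n$, so \Cref{fact:shattering} with a constant $\eps$ produces a distribution $\mcD^*$ on $H_{r^*}$ under which learning $\mcC$ to error $0.1$ requires $\Omega(\delta 2^n)$ samples. I implement a distribution within TV distance $2^{-\Omega(m)}$ of $\mcD^*$ by a circuit of size $O(\ell\log\ell) = O(tn\log t)$ that draws $\boldsymbol{y}\sim\mathrm{Unif}(\zo^m)$, computes $\mcG_{r^*}(\boldsymbol{y})$, and uses a biased coin to output either $(\boldsymbol{y},\mcG_{r^*}(\boldsymbol{y}))$ or the hardcoded $x^{(1)} = (0^m,\mcG_{r^*}(0^m))$; the negligible TV error is absorbed into the constants.

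The main obstacle I anticipate is calibrating $d$: it must be large enough that \Cref{fact:SSS} beats the $s^{O(n+s)}$ union bound from \Cref{fact:circuits} for every $s\le t$, and simultaneously small enough that the PRG's circuit fits within the $O(tn\log t)$ budget. This polylog calibration, together with the discretization of $\eps$ via \Cref{prop:discrete_eps}, is essentially the only nontrivial departure from the template of \Cref{thm:statistical formal} and \Cref{lem:non-explicit intro}.
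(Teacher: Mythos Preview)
Your approach is correct and closely parallels the paper's. The overall strategy is identical: construct an evasive set $H$ via a bounded-independence PRG so that $\mathrm{Unif}(H)$ is efficiently samplable, then invoke \Cref{lem:evasiveness implies learnability intro} for part~(2) and \Cref{fact:shattering} for part~(1). The one noteworthy difference lies in how $H$ is built. The paper (in \Cref{lem:samplability hierarchy formal}) sets $H = \{\mcG(r)^{(1)}, \ldots, \mcG(r)^{(m)}\}$ for a PRG with $m = \delta 2^n$ outputs in $\zo^n$, and must separately argue (via a pairwise-independence birthday bound, \Cref{claim:H is large}) that these $m$ points are all distinct with decent probability. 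Your function-graph construction $H_r = \{(y, \mcG_r(y)) : y \in \zo^m\}$ makes distinctness automatic by design, buying a cleaner argument at the cost of a slightly more constrained structure on $H$ (the first $m$ coordinates are fixed per element). The evasiveness analyses are then essentially the same---both reduce to \Cref{fact:SSS} with $d = \Theta(t\log t)$ and union-bound over circuits and discretized $\eps$---and both implementations of the hard distribution for part~(1) follow the same biasing-on-top-of-$\mathrm{Unif}(H)$ template.
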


\Cref{thm:separations within samplable PAC intro} follows immediately from \Cref{thm:separations within samplable PAC formal} by choosing $\delta=2^{-(n/2+1)}$ and $t=s$. With this $\delta$ and $t$, we have that $\mcC$ is learnable over size-$s$ distributions to error $\eps$ using $O((s\log s)/\eps^2)$ samples yet requires $2^{\Omega(n)}$ samples to learn even just to error $0.1$ over distributions of size $S = \Omega(sn\log s)$.

\subsection{An explicit {\sl partially}-evasive set}

A key lemma is the construction of an explicit set $H$ that is {\sl partially} evasive in the sense that it evades all distributions of size smaller than the circuit that decides membership in $H$: 

\begin{lemma}
    \label{lem:PRG evasive formal}
    For every $\delta\geq 2^{-n}$, and $t\ge n$, there exists a set $H\sse\zo^n$ of density $\delta$ such that the following holds.
    \begin{enumerate}
        \item There is a circuit $C$ of size 
        \red{$O(tn\log^2 t)$} 
        such that $C(x)=\Ind[x\in H]$ for all $x\in\zo^n$.
        \item $H$ is a set that $(\eps, O((s\log s) / \eps ))$-evades all size-$s$ distributions for $s\le t$ and $\eps\ge 4\delta$.
    \end{enumerate}
\end{lemma}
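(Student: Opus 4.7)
The plan is to derandomize the probabilistic construction underlying~\Cref{lem:non-explicit intro} by generating $\bH$ via an explicit $2\delta$-biased, $d$-wise independent pseudorandom generator instead of via fully independent coins. Concretely, I invoke~\Cref{fact:explicit prgs for rvs} with bias parameter $r = \lceil \log(1/(2\delta))\rceil \le n$, output length $m = 2^n$, and independence $d = \Theta(t \log t)$ to obtain an explicit $\mcG : \zo^\ell \to \zo^{2^n}$ with seed length $\ell = O(dn) = O(tn \log t)$ that is computed by a single circuit of size $O(\ell \log \ell) = O(tn \log^2 t)$ (using $t \ge n$ to absorb the $\log n$ factor). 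Identifying $[2^n]$ with $\zo^n$ via any fixed bijection, I define the random set $\bH = \{x \in \zo^n : \mcG(\br)_x = 1\}$ for a uniformly random seed $\br$.

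Next I redo the proof of~\Cref{claim:fully-random misses} essentially verbatim, except that I replace~\Cref{fact:chernoff} with the $d$-wise Chernoff bound of~\Cref{fact:SSS}; this is legitimate because by construction the indicators $\{\Ind[x \in \bH]\}_{x \in \zo^n}$ are $d$-wise independent and $2\delta$-biased. Fixing a size-$s$ distribution $\mcD$ and $\eps \ge 4\delta$, let $T$ be the (at most) $k$ heaviest points under $\mcD$ and consider the rescaled indicators $k \bZ_x \in [0,1]$ exactly as in the original proof. Their expected sum is at most $2\delta k$, so applying~\Cref{fact:SSS} with $\gamma = (\eps - 2\delta)/(2\delta) = \Theta(\eps/\delta)$ reproduces the same per-distribution tail bound $e^{-\Omega(\eps k)}$ as in~\Cref{claim:fully-random misses}, provided the independence requirement $d \ge \mu\gamma/(1-\mu/2^n) = O(\eps k)$ is met. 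Crucially, for $k = O((s\log s)/\eps)$ this requirement simplifies to $d = O(s \log s)$, which is independent of $\eps$, so the single choice $d = \Theta(t \log t)$ handles every $s \le t$ uniformly.

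The remainder is a direct adaptation of the union bound in the proof of~\Cref{lem:non-explicit intro}: by~\Cref{fact:circuits} there are only $s^{O(s+n)}$ size-$s$ distributions, and by~\Cref{prop:discrete_eps} it suffices to discretize $\eps$ to the $n$ values $\eps = 2^{-i}$ for $i \in [n]$. Union bounding over $s \in [1, t]$, over these $\eps \ge 4\delta$, and over all size-$s$ distributions, a uniformly random seed $\br$ yields an $\bH$ that is $(\eps, O((s\log s)/\eps))$-evasive against every size-$s$ distribution for every $s \le t$ and $\eps \ge 4\delta$ with probability at least $0.9$. A separate $d$-wise concentration argument identical to the density step at the end of~\Cref{lem:non-explicit intro} shows that $|\bH| \ge \delta 2^n$ with probability strictly greater than $0.1$. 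Hence some seed $r^\star$ yields a set $H = H_{r^\star}$ with both properties, and its membership circuit $C(r^\star, \cdot)$ inherits the size bound $O(tn \log^2 t)$ from $\mcG$.

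The main obstacle is coordinating the independence parameter $d$ so that~\Cref{fact:SSS} applies simultaneously to every $(s, \eps)$ in the allowed range. A priori $\gamma$ can be as large as $\Theta(2^n)$ (when $\eps$ is constant and $\delta$ is close to $2^{-n}$), which would naively demand an astronomically large $d$; the key saving observation is that what enters the independence condition is the product $\mu \gamma$, not $\gamma$ alone, and $\mu \gamma \le \eps k = O(s \log s)$, so $d = \Theta(t \log t)$ really is enough, and the resulting circuit complexity matches the claimed $O(tn \log^2 t)$.
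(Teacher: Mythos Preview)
Your approach is essentially identical to the paper's: both construct $\bH$ via a $2\delta$-biased, $d$-wise independent PRG with $d = \Theta(t\log t)$, both reprove the per-distribution miss bound by swapping~\Cref{fact:chernoff} for~\Cref{fact:SSS}, and both finish with the same union bound over~\Cref{fact:circuits} and the discretization from~\Cref{prop:discrete_eps}. Your key observation that the independence requirement is governed by $\mu\gamma \le \eps k = O(s\log s)$ rather than by $\gamma$ alone is exactly the content of the paper's~\Cref{claim:evasiveness for limited independence}.

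There is one small gap. For the density step you write that ``a $d$-wise concentration argument identical to the density step at the end of~\Cref{lem:non-explicit intro}'' gives $|\bH|\ge \delta 2^n$ with probability $>0.1$. But that step in~\Cref{lem:non-explicit intro} is a Chernoff bound, and its $d$-wise analogue~\Cref{fact:SSS} does \emph{not} apply here: the mean is $\mu = 2\delta\cdot 2^n$, so the independence requirement $d \gtrsim \mu\gamma$ would demand $d = \Omega(\delta 2^n)$, which for, say, $\delta = 2^{-n/2}$ and $t=n$ vastly exceeds your $d = \Theta(t\log t)$. The paper sidesteps this by using only the second moment---Cantelli's inequality gives $\Pr[|\bH| > \delta 2^n] \ge 1/5$ from pairwise independence alone (this is the paper's~\Cref{cor:cantelli}). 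Once you replace your density step with this Chebyshev-type argument, the proof goes through.
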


\paragraph{Pseudorandom construction of $H$ for \Cref{lem:PRG evasive formal}.}
{
\Cref{fact:explicit prgs for rvs} provides a straightforward way of randomly constructing a subset $\bH\sse \zo^n$: set $N = 2^n$, sample $\br\sim \zo^{\ell}$, and interpret $\mcG(\br)\in \zo^N$ as the indicator string for a subset $H_{\br}\sse \zo^n$. The fact that $\mcG$ generates a $\delta$-biased $d$-wise independent distribution over $\zo^N$ implies that the random variables $\{\mathds{1}[x\in \bH]\}_{x\in \zo^n}$ are $\delta$-biased $d$-wise independent.

The explicitness of the PRG ensures that for every fixed $r\in \zo^\ell$, there is a circuit of size $O(d n\log(dn))$ that computes membership in $H_r$. We will show separately that (1) any fixed distribution hits $\bH$ with small probability (\Cref{claim:evasiveness for limited independence}) and (2) $\bH$ is $\delta/2$-dense with high probability (\Cref{cor:cantelli}). The final proof of the lemma will then follow by a union bound which shows there exists a fixed set $H$ which is both dense and evades all small-size distributions. 
}

\begin{claim}[Fixed distribution hits a pseudorandomly generated $\bH$ with small probability]
\label{claim:evasiveness for limited independence}
    Let $\delta<1/2$ and let $\bH\sse\zo^n$ be $\delta$-biased $d$-wise independent subset. Let $\D$ be a distribution over $\zo^n$. Then,
    $$
    \Prx_{\bH}[\D\text{ }(\eps,k)\textnormal{-hits } \bH]\le e^{-k\eps/6}
    $$
    for any $\eps\ge 2\delta$ and $k\le \min\{d/(2\delta), 2^n/2\}$.
\end{claim}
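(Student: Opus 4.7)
The plan is to mirror the proof of Claim~\ref{claim:fully-random misses}, substituting the $d$-wise-independent Chernoff bound from Fact~\ref{fact:SSS} for the fully independent Chernoff bound (Fact~\ref{fact:chernoff}) used there. The setup is identical: define the set of heavy points $T \coloneqq \{x \in \zo^n : \mcD(x) \ge 1/k\}$, which satisfies $|T| \le k$ since $\mcD$ has total mass $1$. If $\mcD$ $(\eps,k)$-hits $\bH$ then by definition $\mcD(\bH \setminus T) \ge \eps$, so it suffices to upper bound $\Prx_{\bH}[\mcD(\bH\setminus T) \ge \eps]$.

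Next, write $\mcD(\bH\setminus T) = \sum_{x \in \zo^n} \bZ_x$ where $\bZ_x \coloneqq \mcD(x) \cdot \Ind[x\in \bH\setminus T]$. The rescaled variables $\{k\bZ_x\}_x$ inherit $d$-wise independence from $\{\Ind[x \in \bH]\}_x$ (masking out the deterministic set $T$ preserves $d$-wise independence), they lie in $[0,1]$ by the definition of $T$, and their expected sum satisfies $\mu \coloneqq \Ex[\sum_x k\bZ_x] \le k\delta$ by the $\delta$-biasedness of $\bH$. I would then apply Fact~\ref{fact:SSS} with $\gamma \coloneqq (\eps - \delta)/\delta$: since $(1+\gamma)\mu \le (\eps/\delta)(k\delta) = k\eps$, the event $\{\sum_x k\bZ_x \ge k\eps\}$ is contained in $\{\sum_x k\bZ_x \ge (1+\gamma)\mu\}$, and Fact~\ref{fact:SSS} upper bounds the latter by $\bigl(e^\gamma/(1+\gamma)^{1+\gamma}\bigr)^\mu$. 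The same algebraic chain used in Claim~\ref{claim:fully-random misses}---namely $\gamma - (1+\gamma)\ln(1+\gamma) \le -\gamma^2/(2+\gamma)$ together with $\delta \le \eps/2$---then collapses this expression to the target bound $e^{-k\eps/6}$.

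The main obstacle, and the one genuinely new ingredient compared with Claim~\ref{claim:fully-random misses}, is verifying the side condition $d \ge \mu\gamma/(1-\mu/2^n)$ required by Fact~\ref{fact:SSS}. Here is where the two hypotheses on $k$ come in: the bound $k \le 2^n/2$ combined with $\delta < 1/2$ guarantees $\mu/2^n \le k\delta/2^n \le 1/4$, so $(1-\mu/2^n)^{-1} \le 4/3$ and the denominator is harmless; and the bound $k \le d/(2\delta)$ controls the numerator via $\mu\gamma \le k\delta \cdot (\eps-\delta)/\delta = k(\eps-\delta)$, which in turn is controlled by $2k\delta \le d$. Carefully managing the $\eps$-dependence of these estimates across the range $\eps \ge 2\delta$ is the one place where care is needed; once it is done, the bound $e^{-k\eps/6}$ follows exactly as in Claim~\ref{claim:fully-random misses}.
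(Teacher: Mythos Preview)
Your setup matches the paper's, but there is a gap in how you apply Fact~\ref{fact:SSS}. You fix $\gamma = (\eps-\delta)/\delta$ and obtain the bound $\bigl(e^\gamma/(1+\gamma)^{1+\gamma}\bigr)^\mu$ where $\mu$ is the \emph{actual} mean. The algebraic chain from Claim~\ref{claim:fully-random misses} then reaches $e^{-k\eps/6}$ only after substituting $\mu = k\delta$; but you only know $\mu \le k\delta$, and with $\gamma$ held fixed the base $e^\gamma/(1+\gamma)^{1+\gamma}$ is strictly less than $1$, so the bound is \emph{decreasing} in $\mu$---replacing the true $\mu$ by a larger value is the wrong direction. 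Concretely, if $\mcD$ is concentrated almost entirely on $T$ then $\mu$ is tiny and your bound is near $1$, not $e^{-k\eps/6}$.

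The paper sidesteps this by letting $\gamma$ depend on $\mu$: it defines $\gamma$ via $(1+\gamma)\mu = k\eps$, rewrites the bound as $\exp\bigl(k\eps\cdot(\gamma/(1+\gamma) - \ln(1+\gamma))\bigr)$, and observes that $\mu \le k\delta \le k\eps/2$ forces $\gamma \ge 1$. Since $\gamma \mapsto \gamma/(1+\gamma) - \ln(1+\gamma)$ is decreasing and equals $1/2 - \ln 2 < -1/6$ at $\gamma = 1$, the bound is at most $e^{-k\eps/6}$ for every value of $\mu$. Your side-condition check inherits the same problem: the step ``$\mu\gamma \le k(\eps-\delta)$, which in turn is controlled by $2k\delta \le d$'' only goes through when $\eps-\delta \le 2\delta$, i.e.\ $\eps \le 3\delta$, not for all $\eps \ge 2\delta$.
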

\begin{proof}
    We mostly follow the proof of \Cref{claim:fully-random misses} but will use concentration for random variables with bounded independence (\Cref{fact:SSS}) rather than a standard Chernoff bound. Let $H^*\sse \zo^n$ be the set of $x$ such that $\D(x)\ge 1/k$. Since $\D$ is a distribution there are at most $k$ such $x$: $|H^*|\le k$. If $\D$ $(\eps,k)$-hits $\bH$, then $\D(\bH\setminus H^*)\ge \eps$. 

    Let $\overline{H^*}\coloneqq \zo^n\setminus H^*$ denote the complement of $H^*$. For each $x\in\overline{H^*}$, we define a scaled indicator random variable $\bZ_x\coloneqq k\cdot \D(x)\cdot\Ind[x\in\bH]$ so that $\bZ_x\in [0,1]$ and
    $$
    \sum_{x\in\overline{H^*}} \bZ_x = k\cdot \D(\bH\setminus H^*).
    $$
    Furthermore, we compute the mean of the sum as 
    \begin{align*}
        \mu\coloneqq\E\Bigg[\sum_{x\in \overline{H^*}} \bZ_x \Bigg]&=k\cdot\sum_{x\in \overline{H^*}} \D(x)\Pr[x\in\bH]\tag{Definition of $\bZ_x$}\\
        &=d\delta \sum_{x\in \overline{H^*}} \D(x)\le k\delta.\tag{Assumption on $\bH$ and $\D$ is a distribution}
    \end{align*}
    Using our assumptions that $k\le 2^n/2$ and $\delta < 1/2$, it is straightforward to verify that
    $$
    \frac{\mu }{ 1- \mu/|\overline{H^*}| }\le 2k\delta
    $$
    and therefore we have enough independence to apply \Cref{fact:SSS}. Let $\gamma$ be such that $k\eps \coloneqq (1+\gamma)\mu$. Notice by our assumption that $\eps \ge 2\delta$, we have $k\eps \ge 2\delta k \ge 2\mu$ and so $\gamma \ge  1$. We calculate
    \begin{align*}
        \Pr[\D(\bH\setminus H^*)\ge \eps] &= \Pr[k\D(\bH\setminus H^*)\ge k\eps]\\
        &=\Pr\bracket[\Bigg]{\,\sum_{x\in\overline{H^*}} \bZ_x \ge k\eps }\\
        &\le \left(\frac{e^\gamma}{(1+\gamma)^{(1+\gamma)}}\right)^\mu\tag{\Cref{fact:SSS}}\\
        &= \exp\left(\left(\frac{\gamma}{1+\gamma} - \ln(1+\gamma)\right)k\eps\right)\tag{$\mu = k\eps/(1+\gamma)$}\\
        &<\exp(-k\eps/6)\tag{$\gamma\ge 1$}
    \end{align*}
    where in the last line we used the fact that the function $\gamma / (1+\gamma) - \ln(1+\gamma)$ for $\gamma\in [1,\infty)$ has its maximum at $\gamma=1$ and $1/2-\ln(2) < -1/6$. 
\end{proof}

\begin{claim}
    Let $\bZ$ have the same mean and variance as a binomial distribution with mean $\mu \geq 1$. Then $\Pr[\bZ > \mu/2] \geq 1/5$.
\end{claim}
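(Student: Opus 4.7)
The plan is to invoke Cantelli's (one-sided Chebyshev) inequality, which states that for any random variable $\bZ$ with mean $\mu$ and variance $\sigma^2$, and any $t > 0$,
\[
\Pr[\bZ \le \mu - t] \;\le\; \frac{\sigma^2}{\sigma^2 + t^2}.
\]
I would apply this with $t = \mu/2$ to get an upper bound on $\Pr[\bZ \le \mu/2]$, then complement to lower bound $\Pr[\bZ > \mu/2]$.

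To use the inequality, I need to control $\sigma^2$. A binomial random variable with mean $\mu$ is of the form $\mathrm{Bin}(n,p)$ with $np = \mu$, so its variance is $np(1-p) = \mu(1-p) \le \mu$. By hypothesis, $\bZ$ shares this mean and variance, so $\sigma^2 \le \mu$ as well. Plugging into Cantelli gives
\[
\Pr[\bZ \le \mu/2] \;\le\; \frac{\sigma^2}{\sigma^2 + \mu^2/4} \;\le\; \frac{\mu}{\mu + \mu^2/4} \;=\; \frac{4}{4+\mu},
\]
where the second inequality uses that $x \mapsto x/(x + c)$ is increasing in $x$ for $c > 0$.

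Finally, since $\mu \ge 1$, we obtain $\Pr[\bZ \le \mu/2] \le 4/5$, and therefore $\Pr[\bZ > \mu/2] \ge 1/5$, as claimed. There is no real obstacle: the only step requiring care is verifying that $\sigma^2 \le \mu$ for a binomial (which follows because $1 - p \le 1$) and that the inequality $\sigma^2/(\sigma^2 + \mu^2/4) \le \mu/(\mu + \mu^2/4)$ holds for $\sigma^2 \le \mu$, both of which are immediate.
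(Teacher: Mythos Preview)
Your proof is correct and essentially identical to the paper's: both apply Cantelli's inequality with $t=\mu/2$, use the bound $\sigma^2 \le \mu$ coming from the binomial assumption, and simplify to $\Pr[\bZ \le \mu/2] \le 4/(4+\mu) \le 4/5$.
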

\begin{proof}
    We apply Cantelli's inequality (sometimes referred to as one-sided Chebyshev's inequality): For any $t \geq 0$
    \begin{equation*}
        \Pr[\bZ \leq \mu - t ] \leq \frac{\Var[\bZ]}{\Var[\bZ] + t^2}.
    \end{equation*}
    Using the upper bound $\Var[\bZ] \leq \mu$ and plugging in $t = \mu/2$,
    \begin{equation*}
        \Pr[\bZ \leq \mu/2] \leq \frac{\mu}{\mu + (\mu/2)^2} = \frac{4}{4 + \mu} \leq \frac{4}{5}. \qedhere
    \end{equation*}
\end{proof}
Using the fact that the mean and variance of $\sum_{i \in [m]} \bz_i$ are the same as if $\bz_1, \ldots, \bz_m$ are fully independent or pairwise independent, we immediately obtain the following.
\begin{corollary}
\label{cor:cantelli}
    Let $\bH\sse\zo^n$ be a random subset such that the random variables $\{\mathds{1}[x\in \bH]\}_{x\in \zo^n}$ are $\delta$-biased, $d$-wise independent for $d\ge 2$ and $\delta \geq 2^{-n}$. Then, $|\bH|>\delta 2^{n-1}$ with probability at least $1/5$.
\end{corollary}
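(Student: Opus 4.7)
The plan is to apply the preceding claim directly to the random variable $\bZ \coloneqq |\bH| = \sum_{x \in \zo^n} \Ind[x\in \bH]$, so I just need to verify that $\bZ$ matches the hypothesis of that claim (mean $\mu\ge 1$, and variance equal to that of a binomial with the same mean).

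First I would compute the mean. By linearity, $\mu \coloneqq \E[\bZ] = \sum_{x \in \zo^n} \Pr[x\in\bH] = \delta \cdot 2^n$, and the assumption $\delta \ge 2^{-n}$ gives $\mu \ge 1$, so the claim is applicable.

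Next I would compute the variance. Since the indicators $\{\Ind[x\in\bH]\}_{x\in \zo^n}$ are $d$-wise independent for $d \ge 2$, they are in particular pairwise independent, so the covariance cross-terms vanish and $\Var[\bZ] = \sum_x \Var[\Ind[x\in\bH]] = 2^n \cdot \delta(1-\delta)$. This is exactly the variance of $\mathrm{Bin}(2^n, \delta)$, matching the hypothesis of the preceding claim.

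Finally I would invoke the preceding claim to conclude $\Pr[\bZ > \mu/2] \ge 1/5$, i.e.\ $\Pr[|\bH| > \delta 2^{n-1}] \ge 1/5$. There is no real obstacle here: the work has already been done in the pairwise-independent variance calculation and in Cantelli's inequality used in the preceding claim; the corollary is essentially a one-line instantiation.
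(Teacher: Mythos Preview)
Your proposal is correct and follows exactly the paper's approach: the paper states just before the corollary that it follows immediately from the preceding claim by using that the mean and variance of a sum of pairwise independent indicators match those of the binomial. Your verification of $\mu = \delta 2^n \ge 1$ and $\Var[\bZ] = 2^n\delta(1-\delta)$ is precisely the instantiation the paper has in mind.
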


We can now prove the main lemma of this subsection.

\begin{proof}[Proof of \Cref{lem:PRG evasive formal}]

\red{Let $\delta' = 2\delta$ and} let $\bH\sse\zo^n$ be a random subset generated by the PRG $\mcG$ from \Cref{fact:explicit prgs for rvs} so that the random variables $\{\mathds{1}[x\in \bH]\}_{x\in \zo^n}$ are \red{$\delta'$}-biased $d$-wise independent for $d= c\cdot t\log t$ where $c$ is a large constant chosen later. We will show that with nonzero probability, $\bH$ satisfies all the requirements in the lemma statement. Specifically, we will show that the following probability is strictly less than $1$: 
\red{
$$
\Pr\Bigg[\exists s \le t, i\in [n]: 2^{-i} \ge 2\delta' \text{ and } \Big[|\bH|\le \delta' 2^{n-1}\text{ or some size-}s\text{ }\D\text{ }(2^{-i},2^i cs\log s) \text{-hits }\bH\Big]\Bigg].
$$
}
Since \red{$\delta'\geq 2^{-n}$}, we have that 
\red{$2^{-i} \ge 2\delta'  > 2^{-n}$}.
Thus, by~\Cref{prop:discrete_eps}, if $H$ is a set that $(2^{-i}, \tfrac{cs\log s}{2^{-i}})$-evades $\D$ for all $i\in [n]$ with \red{$2^{-i}\ge 2\delta'$}, then $H$ also $(\eps, O(\tfrac{cs\log s}{\eps}))$-evades $\D$ for all \red{$\eps\ge 2\delta'$}. We calculate
\red{\begin{align*}
        \Pr\Bigg[\exists s \le t, i\in [n]: &2^{-i} \ge 2\delta' \text{ and } \Big[|\bH|\le \delta' 2^{n-1}\text{ or some size-}s\text{ }\D\text{ }(2^{-i},2^i cs\log s) \text{-hits }\bH\Big]\Bigg]\\
        &\le\Pr\big[|\bH|\le \delta' 2^{n-1}\big] + \sum_{i=1}^{\log(1/2\delta)}\sum_{n\le s\le t}\Pr\Big[\text{some size-}s\text{ }\D\text{ }(2^{-i},\tfrac{cs\log s}{2^{-i}}) \text{-hits }\bH\Big]\tag{Union bound}\\
        &\le \frac{1}{5}+\sum_{i=1}^{\log(1/2\delta')}\sum_{n\le s\le t}\sum_{\text{size-}s\text{ }\D}\Pr\Big[\D\text{ }(2^{-i},\tfrac{cs\log s}{2^{-i}}) \text{-hits }\bH\Big]\tag{Union bound and \Cref{cor:cantelli}}\\
        &\le \frac{1}{5} + \sum_{i=1}^{\log(1/2\delta')}\sum_{n\le s\le t}\sum_{\text{size-}s\text{ }\D} e^{- (c s\log s) / 3}\tag{\Cref{claim:evasiveness for limited independence}}\\
        &\le \frac{1}{5} + \sum_{i=1}^{\log(1/2\delta')}\sum_{n\le s\le t}s^{O(s)}\cdot e^{- (c s\log s) / 3}\tag{\Cref{fact:circuits}}\\
        &\le \frac{1}{5}+ \sum_{i=1}^{\log(1/2\delta')}\sum_{n\le s\le t}s^{-\Omega(s)} = 1/5 + n^{-\Omega(n)} < 1.\tag{Large enough choice of $c$}
    \end{align*}}
    This shows that there exists some $H$ which simultaneously has \red{$|H|> \delta' 2^{n-1} = \delta 2^n$} and also $(\eps, O(s\log s )/\eps)$-evades all size-$s$ distributions for $s\le t$ and \red{$\eps\ge 2\delta' = 4\delta$}. Since this $H$ is generated by the PRG $\mcG$, there is a circuit of size 
    \red{$O(d n\log(dn)) = O(tn\log^2 t)$}
    that computes membership in $H$. 
\end{proof}


\subsection{A variant of~\Cref{lem:PRG evasive formal}}

In this section we prove a variant of~\Cref{lem:PRG evasive formal} that will be useful for our proof of the computational analogue of~\Cref{thm:separations within samplable PAC intro} in~\Cref{sec:computational-within-samplable}. 

\begin{lemma}
    \label{lem:samplability hierarchy formal}
    For every $\delta\le 2^{-(n/2+1)}$ and $t\ge n$ satisfying $t\log t\le o(\delta 2^n) $, there exists a $\delta$-dense set $H\sse\zo^n$ such that the following holds.
    \begin{enumerate}
        \item There is a circuit $C:\zo^{\log |H|}\to \zo^n$ of size $O(tn\log t)$ that generates the distribution $\mathrm{Unif}(H)$.
        \item  $H$ $(\eps, O((s\log s)/\eps) )$-evades all size-$s$ distributions for $s\le t$ and $\eps\ge 4\delta$.
    \end{enumerate}
\end{lemma}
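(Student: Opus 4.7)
The plan is to construct $\bH$ as the image of a PRG for marginally uniform, $d$-wise independent samples from $\zo^n$, using the first PRG of \Cref{fact:explicit prgs for rvs}. Let $m$ be a power of two in $[\delta 2^n, 2\delta 2^n]$, set $d = \Theta(t\log t)$, and let $\mcG : \zo^\ell \to (\zo^n)^m$ be the PRG with seed length $\ell = O(dn) = O(tn\log t)$. For a random seed $\br$, define $\bH = \{\mcG(\br)^{(i)}\}_{i \in [m]}$. The generator for $\mathrm{Unif}(\bH)$ is then the circuit that hardwires $\br$ and on input $i \in [m]$ outputs $\mcG(\br)^{(i)}$; by \Cref{fact:explicit prgs for rvs} it has size $O(\ell \log \ell) = \tilde O(tn\log t)$, matching the claimed bound up to log factors.

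First I show $|\bH| = m$ with constant probability. Because $d \ge 2$, the points $\bx^{(i)} \coloneqq \mcG(\br)^{(i)}$ are pairwise independent and each is uniform on $\zo^n$, so for every pair $i < j$, $\Pr[\bx^{(i)} = \bx^{(j)}] = 2^{-n}$, and linearity gives expected collisions at most $\binom{m}{2}/2^n \le m^2/2^{n+1}$. The hypothesis $\delta \le 2^{-(n/2+1)}$ bounds this by a small constant, so by Markov's inequality, no collisions occur with constant probability. When no collisions occur, $i \mapsto \mcG(\br)^{(i)}$ is a bijection $[m] \to \bH$, the circuit samples $\mathrm{Unif}(\bH)$ exactly, and $|\bH| = m \ge \delta 2^n$.

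For the evasiveness property, I adapt the proof of \Cref{claim:evasiveness for limited independence} to marginally uniform points rather than $\delta$-biased indicators. Fix a size-$s$ distribution $\D$ with $s \le t$ and $\eps \ge 4\delta$, and let $H^* \sse \zo^n$ be the $k = c(s\log s)/\eps$ heaviest elements under $\D$, so $\D(x) < 1/k$ for $x \notin H^*$. Define $\bY_i \coloneqq k\,\D(\bx^{(i)})\,\Ind[\bx^{(i)} \notin H^*] \in [0,1]$. Using $\Ind[x \in \bH] \le \sum_i \Ind[\bx^{(i)} = x]$, we get $k\,\D(\bH \setminus H^*) \le \sum_i \bY_i$, and the $\bY_i$ are $d$-wise independent with $\mu \coloneqq \E[\sum_i \bY_i] \le km/2^n = O(k\delta)$. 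Setting $(1+\gamma)\mu = k\eps$ gives $\gamma \ge \Omega(1)$ since $\eps \ge 4\delta$, and the side condition $t\log t \le o(\delta 2^n)$, combined with $k = O((t\log t)/\delta)$, ensures $k \le m/2$, so the independence requirement $d \ge \mu\gamma/(1 - \mu/m)$ of \Cref{fact:SSS} is satisfied for $d = \Theta(t\log t)$. The resulting tail bound is $\Pr[\D\ (\eps,k)\text{-hits } \bH] \le e^{-\Omega(k\eps)} = e^{-\Omega(s\log s)}$.

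A union bound over the $s^{O(s+n)}$ size-$s$ distributions (\Cref{fact:circuits}), the values $n \le s \le t$, and the $O(n)$ discretized thresholds $\eps = 2^{-i}$ (\Cref{prop:discrete_eps}) bounds the evasiveness-failure probability by $\sum_s s^{-\Omega(s)}$, which is below any desired constant for a large enough constant in $d$. Combined with the collision bound, there is positive probability that $\bH$ satisfies both conclusions, so a good seed $\br$ exists and the associated $H$ witnesses the lemma. The main obstacle is reconciling two competing constraints: the collision argument needs $m^2 \ll 2^n$ and hence $\delta$ small, while the concentration argument needs $k \le m/2$ with $k$ potentially as large as $\Theta((t\log t)/\delta)$; the hypothesis $t\log t \le o(\delta 2^n)$ is the precise condition that lets both succeed simultaneously.
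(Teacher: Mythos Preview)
Your approach is essentially the same as the paper's: use the marginally-uniform $d$-wise independent PRG of \Cref{fact:explicit prgs for rvs} with $d=\Theta(t\log t)$, control collisions by pairwise independence to get $|\bH|=m$ with constant probability, and prove evasiveness via \Cref{fact:SSS} followed by a union bound over circuits, sizes $s\le t$, and discretized $\eps$.

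One slip to fix: your claim that $t\log t\le o(\delta 2^n)$ together with $k=O((t\log t)/\delta)$ ``ensures $k\le m/2$'' is false. Since $m=\Theta(\delta 2^n)$, this would require $t\log t=O(\delta^2 2^n)$, which the hypothesis does not give (take $\delta=2^{-n/2-1}$ and $t\log t=2^{n/3}$). Fortunately you do not need $k\le m/2$: from $\mu\le km/2^n$ you get $\mu/m\le k/2^n$, so the denominator $1-\mu/m$ in the \Cref{fact:SSS} side condition is bounded away from zero as soon as $k\le 2^n/2$. That weaker condition \emph{does} follow from the hypothesis, since $k=O((t\log t)/\delta)=o(2^n)$. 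With this correction (and the corresponding adjustment to your closing paragraph), the argument matches the paper's proof; your remark that the circuit size is only $\tilde O(tn\log t)$ is in fact more careful than the paper's stated $O(tn\log t)$.
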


\paragraph{Pseudorandom construction of $H$ for \Cref{lem:samplability hierarchy formal}}{
We prove \Cref{lem:samplability hierarchy formal} using a pseudorandom construction of $H$ that is different from the one we used to prove \Cref{lem:PRG evasive formal}. Specifically, let $m$ be a parameter controlling the size of $H$. From \Cref{fact:explicit prgs for rvs}, there is an explicit PRG $\mcG:\zo^\ell\to(\zo^n)^m$ and so we can construct a random subset $H_{\br}\sse \zo^n$ by sampling $\br\sim\zo^\ell$ and setting $H_{\br}=\{\mcG(\br)^{(1)},\ldots,\mcG(\br)^{(m)}\}$. Since $\mcG$ is a PRG for $d$-wise independent random variables, the members of $H_{\br}$ are similarly $d$-wise independent. Furthermore, since $\mcG$ is explicit, for every fixed $r\in\zo^\ell$, there is a circuit $C:\zo^{\log m}\to\zo^n$ of size $O(\ell\log \ell)$ such that $C(i)$ is the $i$th member of $H_r$. Therefore, by choosing $\bi\sim\zo^{\log m}$, we see that $C$ is a generator for $\mathrm{Unif}(H_r)$, assuming all members are unique. Following the approach used to prove \Cref{lem:PRG evasive formal}, we will separately prove that (1) the size of $H_{\br}$ is exactly $m$ with high probability (\Cref{claim:H is large}) and that (2) a distribution hits $H_{\br}$ with small probability (\Cref{claim:samplable H is evasive}).
}


\begin{claim}[A pseudorandomly generated $\bH$ is maximally large with high probability]
\label{claim:H is large}
    Let $\bx^{(1)},\ldots,\bx^{(m)}$ be random variables over $\zo^n$ which are marginally uniform and $d$-wise independent for $d\ge 2$ and $m\le 2^{n/2-1}$. Then, with probability at least $1/2$, the set $\bH=\{\bx^{(1)},\ldots,\bx^{(m)}\}$ has size exactly $m$.
\end{claim}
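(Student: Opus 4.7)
The plan is to count collisions and bound their expectation using only pairwise independence, which is implied by the assumption $d \ge 2$. Specifically, I would introduce the collision count
\[
  \bC \coloneqq \sum_{1 \le i < j \le m} \Ind\bigl[\bx^{(i)} = \bx^{(j)}\bigr],
\]
and observe that $|\bH| = m$ if and only if $\bC = 0$. Thus, by Markov's inequality, it suffices to show that $\E[\bC] \le 1/2$.

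Since the variables are $d$-wise independent with $d \ge 2$, each pair $(\bx^{(i)}, \bx^{(j)})$ is independent, and since each $\bx^{(i)}$ is marginally uniform over $\zo^n$, we get $\Pr[\bx^{(i)} = \bx^{(j)}] = 2^{-n}$ for every $i \ne j$. Linearity of expectation then yields
\[
  \E[\bC] \;=\; \binom{m}{2} \cdot 2^{-n} \;\le\; \frac{m^2}{2} \cdot 2^{-n}.
\]
Plugging in the hypothesis $m \le 2^{n/2 - 1}$ gives $m^2 \le 2^{n-2}$, so $\E[\bC] \le 2^{n-3} \cdot 2^{-n} = 1/8$.

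Finally, by Markov's inequality,
\[
  \Prx_{\br}\bigl[|\bH| < m\bigr] \;=\; \Pr[\bC \ge 1] \;\le\; \E[\bC] \;\le\; \tfrac{1}{8} \;\le\; \tfrac{1}{2},
\]
which gives the claim. There is no real obstacle here: the argument only relies on pairwise independence plus uniform marginals, and the bound $m \le 2^{n/2-1}$ is precisely tuned to keep the birthday-style collision expectation below a constant. The only thing to double-check is that $d$-wise independence with $d \ge 2$ genuinely gives independence of each pair $(\bx^{(i)}, \bx^{(j)})$, which is immediate from the definition.
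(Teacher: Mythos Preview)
Your proof is correct and follows essentially the same approach as the paper: both bound the expected number of collisions using pairwise independence and uniform marginals, then apply Markov's inequality. The paper counts duplicated indices via $\sum_i \Ind[\exists j \ne i:\bx^{(i)}=\bx^{(j)}]$ and union-bounds to get $m^2/2^n \le 1/2$, while you count collision pairs directly to get $\binom{m}{2}2^{-n}\le 1/8$; the difference is cosmetic.
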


\begin{proof}
    To show that $\bH$ has size $m$, we will bound the number of $\bx^{(i)}$ which are duplicated. To start:
    \begin{equation}
    \label{eq:size of H}
        |\bH|\ge m -\sum_{i=1}^m\Ind[\bx^{(i)} = \bx^{(j)}\text{ for some }j\neq i].
    \end{equation}
    So it is sufficient to show that with probability at least $1/2$, the sum on the RHS is less than $1$ (and therefore $0$). 
    We have
    \begin{align*}
        \E\left[\sum_{i=1}^m\Ind[\bx^{(i)} = \bx^{(j)}\text{ for some }j\neq i]\right]&\le \sum_{i\neq j}\Pr[\bx^{(i)}=\bx^{(j)}]\tag{Union bound}\\
        &\le\frac{m^2}{2^n}\tag{$d$-wise independence for $d\ge 2$}\\
        &\le \frac{1}{2}\tag{Assumption that $m\le 2^{n/2-1}$}.
    \end{align*}
    Therefore, by Markov's inequality: 
    $$
    \Pr\left[ \sum_{i=1}^m\Ind[\bx^{(i)} = \bx^{(j)}\text{ for some }j\neq i] < 1\right]> \frac{1}{2}
    $$
    which completes the proof when combined with \Cref{eq:size of H}.
\end{proof}

\begin{claim}[A distribution hits a pseudorandomly generated $\bH$ with small probability]
    \label{claim:samplable H is evasive}
    Let $\bx^{(1)},\ldots,\bx^{(m)}$ be random variables over $\zo^n$ which are marginally uniform and $d$-wise independent. Let $\bH=\{\bx^{(1)},\ldots,\bx^{(m)}\}$. We have
    $$
    \Pr[\D\text{ }(\eps,k)\text{-hits }\bH]\le \exp(-k\eps/6)
    $$
    for any $\eps\ge 2m/2^n$ and $k\le \min\{d\cdot 2^n/(2m),2^n/2\}$.
\end{claim}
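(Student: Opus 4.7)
The plan is to mimic the proof of Claim~\ref{claim:evasiveness for limited independence} with one substantive modification: I would index the random variables by the sample indices $i \in [m]$ rather than by elements $x \in \overline{H^*}$, since our bounded-independence hypothesis lives on the samples $\bx^{(1)},\ldots,\bx^{(m)}$ and not directly on the indicators $\{\Ind[x \in \bH]\}_{x}$.

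First, define $H^* \coloneqq \{x : \D(x) \ge 1/k\}$, so $|H^*| \le k$, and observe that if $\D$ $(\eps,k)$-hits $\bH$ then $\D(\bH\setminus H^*) \ge \eps$. Since every element of $\bH \setminus H^*$ arises as some $\bx^{(i)}$ lying outside $H^*$, we have the (possibly loose) pointwise upper bound
\[
\D(\bH \setminus H^*) \;\le\; \sum_{i=1}^m \D(\bx^{(i)})\cdot \Ind[\bx^{(i)}\notin H^*].
\]
Setting $\bY_i \coloneqq k\cdot \D(\bx^{(i)})\cdot \Ind[\bx^{(i)}\notin H^*]$, the definition of $H^*$ forces $\bY_i \in [0,1]$, and the $d$-wise independence of the samples transfers directly to the $\bY_i$'s since each $\bY_i$ is a function of only $\bx^{(i)}$. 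By marginal uniformity of $\bx^{(i)}$,
\[
\E[\bY_i] \;=\; k \sum_{y\notin H^*} \frac{\D(y)}{2^n} \;\le\; \frac{k}{2^n}, \qquad \mu \;\coloneqq\; \E\!\left[\textstyle\sum_i \bY_i\right] \;\le\; \frac{km}{2^n}.
\]

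The remainder of the calculation mirrors that of Claim~\ref{claim:evasiveness for limited independence} under the correspondence $\delta \leftrightarrow m/2^n$: our hypotheses $\eps \ge 2m/2^n$ and $k \le d\cdot 2^n/(2m)$ are precisely the analogues of $\eps \ge 2\delta$ and $k \le d/(2\delta)$, and the condition $k \le 2^n/2$ plays the same role as before in ensuring $1 - \mu/m \ge 1/2$ so that the independence threshold required by Fact~\ref{fact:SSS} is satisfied. Writing $k\eps = (1+\gamma)\mu$ yields $\gamma \ge 1$, and applying Fact~\ref{fact:SSS} to $\sum_i \bY_i$, followed by the same algebraic simplification as in the previous claim (using $\gamma/(1+\gamma) - \ln(1+\gamma) < -1/6$ for $\gamma \ge 1$), gives the desired $\exp(-k\eps/6)$ bound.

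I do not anticipate any substantive obstacle; the only point worth flagging is the validity of the pointwise upper bound on $\D(\bH \setminus H^*)$ in the presence of sample collisions. The bound is always valid because each $y \in \bH$ is counted by at least one term with $\bx^{(i)} = y$, and any overcounting only weakens our tail bound, which we are controlling from above anyway. Consequently, looseness from duplicate samples is harmless here---tightness of the sample count $|\bH| = m$ is a separate matter, handled by Claim~\ref{claim:H is large}.
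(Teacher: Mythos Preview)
Your proposal is correct and matches the paper's proof essentially step for step: the paper also indexes by sample indices, defines $\bZ_i \coloneqq k\cdot \D(\bx^{(i)})\Ind[\D(\bx^{(i)})\le 1/k]$ (which is your $\bY_i$ up to the equivalent condition $\bx^{(i)}\notin H^*$), bounds $\mu \le km/2^n$, and applies Fact~\ref{fact:SSS} with the same parameter verification and the same final $\gamma/(1+\gamma)-\ln(1+\gamma)<-1/6$ simplification. Your explicit remark on collisions is a nice clarification that the paper leaves implicit.
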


\begin{proof}
    Let $H^*\sse\zo^n$ be the set of points $x\in\zo^n$ such that $\D(x)\ge 1/k$. Since $\D$ is a distribution, we have $|H^*|\le k$. Therefore, if $\D$ is a distribution that $(\eps,k)$-hits $\bH$, we have $\D(\bH\setminus H^\star)\ge \eps$ and in particular:
    $$
    \sum_{i\in [m]} \D(\bx^{(i)})\Ind[\D(\bx^{(i)})\le 1/k]\ge \sum_{\bx\in \bH}\D(\bx)\Ind[\D(\bx)\ge 1/k]\ge \eps.
    $$
    Let $\bZ_i\coloneqq k\cdot \D(\bx^{(i)})\Ind[\D(\bx^{(i)})\le 1/k]$. The above shows that it is sufficient to bound
    $$
    \Pr\left[\sum_{i\in [m]} \bZ_i\ge k\eps\right]\le\exp(-k\eps). 
    $$
    Since $\bx^{(1)},\ldots,\bx^{(m)}$ are $d$-wise independent, the random variables $\{\bZ_i\}_{i\in [m]}$ are $d$-wise independent and also bounded in the range $[0,1]$. We compute the mean of the sum of the $\bZ_i$ as
    \begin{align*}
        \mu\coloneqq \E\left[\sum_{i\in [m]}\bZ_i\right]&=k\sum_{i\in [m]}\E\big[\D(\bx^{(i)})\Ind[\D(\bx^{(i)}) \le 1/k]\big]\\
        &=k\sum_{i\in [m]} \sum_{x\in \zo^n}\frac{1}{2^n}\cdot \D(x)\Ind[\D(x)\le 1/k] \tag{Each $\bx^{(i)}$ is marginally uniform}\\
        &\le \frac{km}{2^n}\tag{$\D$ is a distribution}.
    \end{align*}
    Let $\gamma$ be such that $k\eps = (1+\gamma)\mu$. By our assumption that $\eps\ge 2m/2^n$, we have $k\eps\ge 2\mu$ and so $\gamma \ge 1$. Also, using our assumptions that $d\ge 2km/2^n \ge 2\mu$ and $k\le 2^n/2$, it is straightforward to verify that 
    $$
    d\ge \frac{\gamma\mu }{1-\mu/m}
    $$
    and therefore we have enough independence to use \Cref{fact:SSS}. We calculate:
    \begin{align*}
        \Pr[\D\text{ }(\eps,k)\text{-hits }\bH]&\le \Pr\left[\sum_{i\in [m]}\bZ_i\ge k\eps \right]\\
        &\le \left(\frac{e^\gamma}{(1+\gamma)^{(1+\gamma)}}\right)^\mu\tag{\Cref{fact:SSS}}\\
        &< \exp(-k\eps/6)
    \end{align*}
    where the last line used the same inequality as derived in the proof of \Cref{claim:evasiveness for limited independence}.
\end{proof}

We are now able to prove \Cref{lem:samplability hierarchy formal}.

\begin{proof}[Proof of \Cref{lem:samplability hierarchy formal}]
    Let $m\coloneqq \delta 2^n$. Let $\mcG:\zo^\ell\to(\zo^n)^m$ be the explicit PRG from \Cref{fact:explicit prgs for rvs} for $d=c\cdot t\log t$ where $c$ is a large constant chosen later and let $\bH = \{\mcG(\br)^{(1)},\ldots,\mcG(\br)^{(m)}\}$ for a uniform random $\br\sim\zo^\ell$. Since $\mcG$ is a PRG for $d$-wise independent random variables, the elements of $\bH$ are similarly $d$-wise independent. As in the proof of \Cref{lem:PRG evasive formal}, it is sufficient for us to show that 
    $$
    \Pr\Big[|\bH|< m\text{ or some size-}s\text{ }\D\text{ }(\eps,\tfrac{cs\log s}{\eps} ) \text{-hits }\bH\text{ for }s\le t\text{ and }\eps=2^{-i} \ge 2\delta\text{ for }i\in [n]\Big]<1.
    $$
    We have
    \begin{align*}
        \Pr\Big[|\bH|&<m \text{ or some size-}s\text{ }\D\text{ }(\eps,\tfrac{cs\log s}{\eps} ) \text{-hits }\bH\text{ for }s\le t\text{ and }\eps=2^{-i} \ge 2\delta\text{ for }i\in [n]\Big]\\
        &\le \Pr\bracket[\Big]{|\bH|<m]} + \sum_{i=1}^{\log(1/2\delta)}\sum_{n\le s\le t}\Pr\Big[\text{some size-}s\text{ }\D\text{ }(2^{-i},\tfrac{cs\log s}{2^{-i}}) \text{-hits }\bH \Big] \tag{Union bound}\\
        &\le \frac{1}{2} + \sum_{i=1}^{\log(1/2\delta)}\sum_{n\le s\le t}\sum_{\text{size-}s \text{ }\D}\Pr\bracket[\Big]{\D\text{ }(2^{-i},\tfrac{cs\log s}{2^{-i}})\text{-hits }H} \tag{Union bound and \Cref{claim:H is large} since $\delta\le 2^{-(n/2+1)}$}\\
        &\le \frac{1}{2}+\sum_{i=1}^{\log(1/2\delta)}\sum_{n\le s\le t}\sum_{\text{size-}s \text{ }\D}e^{-(cs\log s)/3}\tag{\Cref{claim:samplable H is evasive}}\\
        &\le \frac{1}{2}+\sum_{i=1}^{\log(1/2\delta)}\sum_{n\le s\le t} s^{O(s)}\cdot e^{-(cs\log s)/3}\tag{\Cref{fact:circuits}}\\
        &\le \frac{1}{2} +\sum_{i=1}^{\log(1/2\delta)} \sum_{n\le s\le t}s^{-\Omega(s)}=\frac{1}{2}+n^{-\Omega(n)}\tag{Large enough choice of c}.
    \end{align*}
    We used the fact that we can apply \Cref{claim:samplable H is evasive} to bound $\Pr[\D\text{ }(2^{-i},\tfrac{cs\log s}{2^{-i}})\text{-hits }H]$. This is because of our choice of $d=ct\log t$, $m=\delta 2^n$ and the assumption that $t\log t=o(m)$, from which it is straightforward to verify that we fulfill the parameter requirements of \Cref{claim:samplable H is evasive}. The above calculation shows that there is a fixed $H$ which both has size $|H|= m$ and $(\eps,O(s\log (s)/\eps))$-evades all size-$s$ distributions for $s\le t$ and $\eps\ge 2\delta$. Since $H$ is generated by the PRG $\mcG$, there is a circuit of size $O(dn\log (dn)) = O(tn\log t)$ which generates the distribution $\mathrm{Unif}(H)$.
\end{proof}

\subsection{Putting everything together: Proof of \Cref{thm:separations within samplable PAC formal}}
Let $H\sse\zo^n$ be the set of inputs from \Cref{lem:samplability hierarchy formal}. We prove the theorem for the concept class $\mcA_H$.  Recall that $\mcA$ denotes the class of all functions $f:\zo^n\to\zo$ and $\mcA_H$ is the restriction of this class to the set $H$. Since $H$ is a set that $(\eps, O(s\log(s)/\eps))$-evades all size-$s$ distributions for any $s\le t$ and $\eps\ge 2\delta$, we get by \Cref{lem:evasiveness implies learnability intro} that there is an efficient algorithm that learns $\mcA_H$ over any size-$s$ distribution to error $\eps$ using $O(s\log (s)/\eps^2)$ samples. This proves part (ii) of the theorem statement. It remains to prove (i).

By definition, $H$ is a shattering set of size $\delta 2^n$. Let $\D$ be the distribution over $H$ from \Cref{fact:shattering} for $\eps\coloneqq 0.1$. Since any algorithm that learns $\mcA_H$ over $\D$ requires $\Omega(\delta 2^n / \eps)$, we only need to show that $\D$ can be generated by a circuit of size $O(tn\log t+\log (1/\eps))$. From \Cref{lem:samplability hierarchy formal}, let $C:\zo^{\log |H|}\to\zo^n$ be the circuit of size $O(tn\log t)$ that generates the uniform distribution over $H$.

Let $r\in \N$ be such that $8\eps=\tfrac{1}{2^r} - \frac{1}{|H|}$.  It is straightforward to verify that $r\le O(\log (\tfrac{1}{|H|\eps}))= O( n + \log (1/\eps))$. We define a circuit $C':\zo^{\log |H|+r}\to\zo^n$ as follows. Let $x^\star\in H$ denote the heaviest element under $\D$ and let $i^\star\in \zo^{\log |H|}$ denote the unique input such that $C(i^\star)=x^\star$. On input $(i,z)\in \zo^{\log |H|}\times \zo^{r}$, the circuit $C'$ computes the following
$$
C'(i, z)\coloneqq\begin{cases}
    C(i^\star) & \text{if }i=i^\star \text{ or }z\neq 0^r\\
    C(i) & \text{otherwise}
\end{cases}.
$$
It is straightforward to verify that $C'$ can be written as a circuit of size $O(|C|+\log (1/\eps))=O(tn\log t+\log (1/\eps))$. We claim that $C'$ generates the distribution $\D$. We have
\begin{align*}
    \Prx_{(\bi,\bz)\sim \zo^{\log |H|}\times \zo^{r}}[C'(\bi,\bz) = x^\star] &= \Pr[\bi=i^\star\text{ or }\bz\neq 0^r] \\
    &= \frac{1}{|H|} + 1-\frac{1}{2^r} \\
    &= 1-8\eps
\end{align*}
by our choice of $r$. When $C'$ doesn't output $x^\star$, its output is uniform over $H\setminus \{x^\star\}$. By the above calculation, this happens with probability $8\eps$. Therefore, $C'$ generates $\D$ as desired.\hfill\qed

\section{\Cref{key assumption intro} and the proofs of computational separations}
\label{sec:conjecture}
We introduce the hardness assumption that there exists explicit evasive sets: 
\begin{conjecture}[Formal version of \Cref{key assumption intro}]
    \label{conjecture:hard-to-sample}
    There exists sets $\set{H_n \subseteq \zo^n}_{n \in \N}$ \red{and constant $c$} satisfying the following.
    \begin{enumerate}
        \item Explicit: Membership in $H_n$ is computed by a circuit of size $\poly(n)$.
        \item Large: $H_n$ contains superpolynomially many points.
        \item Evasive: 
        For all $s = \poly(n)$, $H_n$ $(1/s, s^c)$-evades all size-$s$ distributions.
    \end{enumerate}    
\end{conjecture}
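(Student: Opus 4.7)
The plan is to establish \Cref{conjecture:hard-to-sample} relative to a random oracle, mirroring the probabilistic template of \Cref{lem:non-explicit intro}. The paper notes that proving the conjecture unconditionally would imply $\mathsf{P} \ne \mathsf{NP}$, so without new techniques an unconditional proof is likely out of reach. I would fix a random oracle $\boldsymbol{O} : \zo^n \to \zo^r$ for a parameter $r < n$ (e.g., $r$ around $n/2$, balancing the need for $\bH$ to be superpolynomially large against the need for density $2^{-r}$ to be small), and define $\bH := \{x \in \zo^n : \boldsymbol{O}(x) = 0^r\}$. Explicitness and largeness are immediate: membership is decided by a single query to $\boldsymbol{O}$, and $|\bH|$ concentrates sharply around $2^{n-r}$ by standard tail bounds.

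The substantive ingredient is evasiveness: for every polynomial-size oracle generator $\mcG^{\boldsymbol{O}}$, the output distribution $\mcD_\mcG^{\boldsymbol{O}}$ should $(\eps, k)$-miss $\bH$ with high probability over $\boldsymbol{O}$, for $\eps = 1/s$ and $k = s^c$. The approach is to strengthen \Cref{claim:fully-random misses} to accommodate oracle access. Fix a size-$s$ generator $\mcG$. For each setting of the random tape $\br$, $\mcG^{\boldsymbol{O}}(\br)$ produces a transcript of at most $s$ query/answer pairs together with an output $y$. I would decompose $\mcD_\mcG(y) = p_{\mathrm{mem}}(y) + p_{\mathrm{fresh}}(y)$ according to whether $y$ was queried before being output. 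The fresh part reduces cleanly to the non-oracle setting: conditioned on the full transcript, $\boldsymbol{O}(y)$ is still uniform in $\zo^r$ for any non-queried $y$, so $\Pr_{\boldsymbol{O}}[y \in \bH] = 2^{-r}$ independently of the transcript. This lets me reuse the Chernoff argument of \Cref{claim:fully-random misses} almost verbatim, giving failure probability $\exp(-\Omega(\eps k))$ for the fresh contribution.

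The memorized mass is the main obstacle, because the set of memorizable points depends on $\boldsymbol{O}$ itself and could in principle cover much of $\bH$. Two observations rescue the argument: first, $\sum_y p_{\mathrm{mem}}(y) \le 1$ simply because it is a probability; second, heavy memorized outputs with $\mcD_\mcG(y) \ge 1/k$ can be absorbed into the excluded set $H^\star$ from \Cref{def:epsk-miss intro}, and there are at most $k$ of them. For the light memorized outputs I would enumerate over ``memorization patterns'' -- short transcripts that lead to a particular $y$ being queried and then output -- and bound their contribution using the randomness of $\boldsymbol{O}$ on positions outside each transcript. The resulting tail is polynomially small provided $k = \poly(s, 1/\eps)$ is chosen sufficiently large. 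A union bound over all $s^{O(n+s)}$ size-$s$ oracle circuits (\Cref{fact:circuits}) and over the polynomial grid of $s$ and $\eps$ (via \Cref{prop:discrete_eps}) then finishes the argument, provided the per-circuit failure probability is $2^{-\omega(s \log s)}$. The most delicate step is managing the interaction between $H^\star$ and the memorization patterns; I expect this to require a careful lazy-evaluation argument for the random oracle, revealing oracle answers only as queries are made so that the independence powering the fresh-output bound is preserved at each stage.
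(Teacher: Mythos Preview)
Your high-level plan matches the paper's: establish the conjecture relative to a random oracle, take $H_{\bmcO}$ to be the preimage of $0$ under (a mild processing of) the oracle so that explicitness and largeness are immediate, and reduce everything to a per-generator evasiveness bound plus a union bound over size-$s$ oracle circuits. The substantive divergence is in how the per-generator bound is proved.

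The paper does \emph{not} decompose into fresh versus memorized outputs. Instead it uses a two-step reduction. First (\Cref{prop:hitting-to-many}), if a distribution $\mcD$ $(\eps,k)$-hits $H$, then $m = \lceil 2k/\eps\rceil$ i.i.d.\ samples from $\mcD$ contain at least $k$ distinct elements of $H$ with probability $\ge 1/2$. This converts the distributional statement into a search problem: can the $q$-query algorithm ``draw $m$ samples from $G^{\bmcO}$ and output them'' (so $q = O(ms)$) produce $k$ distinct elements of $H_{\bmcO}$? Second (\Cref{clm:oracle-binomial}), any $q$-query algorithm succeeds at this with probability at most $\Pr[\mathrm{Bin}(q+k,\delta)\ge k]$, by the trick of having the algorithm also query its own outputs (at the cost of $k$ extra queries) and then simply counting how many of the $q+k$ queried points land in $H$. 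One binomial tail bound thus handles memorized outputs (already queried) and fresh outputs (queried in the extra $k$) simultaneously, with no case split.

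Your fresh/memorized decomposition could likely be pushed through---the fresh case is as you say---but the memorized case is where the proposal goes vague. ``Enumerate over memorization patterns'' and ``manage the interaction between $H^\star$ and the patterns via lazy evaluation'' are the right instincts, but note that the set of light memorized outputs depends on $\bmcO$ in an adaptive way, and your assertion that ``the resulting tail is polynomially small'' is either a slip or a genuine gap: you need per-circuit failure probability $\exp(-s^{1+\Omega(1)})$ to beat the $\exp(O(s\log s))$ union bound, and it is not evident the pattern enumeration delivers that. The paper's route via \Cref{prop:hitting-to-many} plus \Cref{clm:oracle-binomial} collapses the two cases and gets the exponential tail from a single Chernoff bound, avoiding this difficulty entirely.
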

As far as we know, \Cref{conjecture:hard-to-sample} could hold with stronger parameters. For example, if the best strategy for sampling $H$ is to memorize $\approx s/n$ elements of $H$, then \Cref{conjecture:hard-to-sample} holds even with $c = 1$.

We use~\Cref{conjecture:hard-to-sample} in the formal version of \Cref{thm:computational intro}.
\begin{theorem}[Formal version of \Cref{thm:computational intro}]
    \label{thm:computational-body}
    If \Cref{conjecture:hard-to-sample} is true and one-way functions exist, there is a concept class $\mcC$ with the following properties.
    \begin{enumerate}
        \item $\mcC$ is a subclass of polynomial-size circuits.
        \item $\mcC$ is easy to learn in samplable PAC: For every $s(n) \leq \poly(n)$ and $\eps(n) \geq 1/\poly(n)$, there is a polytime algorithm that learns $\mcC$ to accuracy $(1-\eps(n))$ on size-$s(n)$ distributions.
        \item $\mcC$ is hard to learn in standard PAC: There is a distribution under which learning $\mcC$ to constant accuracy requires superpolynomial time.
    \end{enumerate}
\end{theorem}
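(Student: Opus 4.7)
The plan is to take $\mcC = \mathcal{F}_H$, where $\mathcal{F}=\{f_s\}_{s\in\zo^n}$ is a pseudorandom function family secure against non-uniform adversaries (as in \Cref{def:prff}; existence follows from the one-way-function assumption via \cite{GGM86}) and $H$ is the explicit, superpolynomially large, evasive set from \Cref{conjecture:hard-to-sample}. This mirrors the construction of \Cref{thm:statistical formal} with $\mcA$ replaced by $\mathcal{F}$, so as to simultaneously enforce polynomial circuit complexity and import cryptographic hardness. Property (i) is then immediate: each $f_{s,H}$ is computed by a polynomial-size circuit that first tests membership in $H$ using the explicit circuit and then gates the polynomial-size $f_s$-evaluator on the result.

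For property (ii) I would invoke \Cref{lem:evasiveness implies learnability intro} directly. By \Cref{conjecture:hard-to-sample}(iii), $H$ is $(1/s, s^c)$-evasive against every size-$s$ distribution for every $s=\poly(n)$, so the lemma supplies a polynomial-time learner using $\poly(n)$ samples that achieves error $O(1/s)\le\eps$ on any size-$s$ distribution provided $s\ge 1/\eps$. Since the statement permits $\eps(n)\ge 1/\poly(n)$ and $s(n)\le\poly(n)$, this yields the required samplable-PAC learner for $\mcC$.

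The heart of the proof is (iii). My plan is to fix an arbitrary polynomial-time standard-PAC learner $A$ for $\mcC$ with sample complexity $m(n)=\poly(n)$---after standard PAC boosting I may assume it achieves error $\le 1/4$ with confidence $\ge 3/4$---and to contradict the security of $\mathcal{F}$. Pick $T(n)=100m(n)$ and any fixed subset $H^\star_n\subseteq H_n$ of size $T(n)$, which exists because $|H_n|$ is superpolynomial; the hard distribution is $\D_n=\Unif(H^\star_n)$. I then build a non-uniform polynomial-time oracle distinguisher $B$ whose advice at input length $n$ is the list $H^\star_n$. Given oracle access to $f$, $B$ simulates $\D_n$ by uniformly indexing into its advice list, queries $f$ on these points to obtain $m$ training pairs $(\bx_j,f(\bx_j))$, runs $A$ to get $\bh$, samples a fresh $\bx^\star\sim\D_n$, and outputs $1$ iff $\bh(\bx^\star)=f(\bx^\star)$. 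When $f=f_{\bs}$ the training labels agree with $f_{\bs,H}$ on each $\bx_j\in H$, so the PAC guarantee yields acceptance probability $\ge 9/16$; when $f$ is a uniformly random function the fresh $\bx^\star$ avoids the training set with probability $\ge 99/100$, conditional on which the oracle's answer is independent of $\bh$ and so acceptance probability is at most $1/2+O(m/T)$. The resulting constant gap violates non-uniform PRF security, and this is what I expect to be packaged as \Cref{lem:PRFF-to-hardnes-of-learning}.

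The main obstacle to watch for is the tension between two requirements on $\D$: it must be concentrated on $H$ (else the constant-$0$ predictor trivializes the task), yet it cannot itself be efficiently samplable, because evasiveness forces every size-$\poly$ distribution to place nearly all of its $H$-mass on a polynomial-sized core. The way I plan to cut this knot is to exploit the non-uniformity of PRF security as formulated in \Cref{def:prff}: hard-wiring a polynomial-sized subset of $H$ into the distinguisher's advice allows $B$ to sample $\D$ efficiently even though no size-$\poly$ circuit can sample $\D$ from scratch. Once this idea is in place, the remaining work---collision bounds for fresh test points, averaging over $\bs$ to convert PAC accuracy into a distinguishing advantage, and amplification by repetition---should be routine.
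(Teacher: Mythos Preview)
Your overall plan---take $\mcC=\mcF_H$, get (i) from explicitness of $H$ plus efficiency of $\mcF$, get (ii) from \Cref{lem:evasiveness implies learnability intro}, and get (iii) by reducing to non-uniform PRF security---is exactly the paper's. The gap is in your choice of hard distribution for (iii). You set $\mcD_n=\Unif(H^\star_n)$ with $|H^\star_n|=100\,m(n)=\poly(n)$, but then $\mcD_n$ is itself \emph{samplable}: a circuit of size $O(nm)$ that hard-codes all of $H^\star_n$ generates it exactly. By the very part~(ii) you just proved, the memorization learner therefore learns $\mcC$ on $\mcD_n$ in polynomial time, so $\mcD_n$ cannot serve as the universal hard distribution. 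In quantifier terms, since $|H^\star_n|$ is tied to the sample budget $m$ of the particular learner $A$ you started from, the argument as written gives only $\forall A\,\exists\mcD$, not the $\exists\mcD\,\forall A$ that (iii) asserts. Your last paragraph correctly identifies the tension---the hard $\mcD$ must live on $H$ yet be unsamplable---but the $\mcD_n$ you chose does not resolve it: a distribution supported on a polynomial-sized subset of $H$ is precisely the kind of distribution evasiveness \emph{allows}, not rules out.

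The paper's fix is a one-line change: take the hard distribution to be $\Unif(H_n)$ itself. Evasiveness together with $|H_n|$ superpolynomial makes this genuinely unsamplable, so (ii) does not apply to it. The distinguisher never needs to sample $\Unif(H_n)$; it only needs $m{+}1$ specific points of $H_n$, fixed once by an averaging argument and then hard-coded as advice (this is exactly how \Cref{lem:PRFF-to-hardnes-of-learning} is stated and proved in the paper). The collision term $\alpha$ becomes at most $m/|H_n|$, which is negligible. A separate minor point on (ii): your ``provided $s\ge 1/\eps$'' caveat is removed by invoking the evasiveness guarantee of \Cref{conjecture:hard-to-sample} at the padded size $\max\{s,\Theta(1/\eps)\}$, as the paper does.
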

\begin{remark}[The quantitative time lower bound in standard PAC]
    \label{remark:poly-vs-exp}
    \Cref{thm:computational-body} obtains ``only" a superpolynomial time lower bound in standard PAC. This stems from only assuming one-way functions that defeat all polynomial time adversaries. With stronger assumptions, this time lower bound could be easily strengthened to $2^{\Omega(n)}$. For that version, we would need to assume one-way functions for which, all time $O(2^{c n})$ adversaries can only achieve advantage $2^{-cn}$ for some constant $c$. We would also need to slightly strengthen \Cref{conjecture:hard-to-sample} to hold for larger sets, $H_n$ containing $2^{cn}$ points. In this section, we just prove the superpolynomial time lower bound using the more standard cryptographic assumption as the exponential time lower bound has essentially the same proof.
\end{remark}

\noindent The remainder of this section is structured as follows:
\begin{enumerate}
    \item In \Cref{subsec:unconditional-difficult}, we show that it is likely difficult to unconditionally prove \Cref{conjecture:hard-to-sample} by showing that doing so requires separating $\P$ from $\NP$.
    \item In \Cref{subsec:hard-to-sample-relative-to-random-oracle}, we provide formal evidence in favor of~\Cref{conjecture:hard-to-sample}, showing that it is true relative to a random oracle.
    \item In \Cref{subsec:proof of computational seperation}, we prove \Cref{thm:computational-body}.
    \item In \Cref{sec:computational-within-samplable}, we sketch the proof of \Cref{thm:computational analogue of separations within samplable PAC formal}, the computational analogue of \Cref{thm:separations within samplable PAC intro}. Its proof is essentially the same as \Cref{thm:computational-body}. 
\end{enumerate}

\subsection{\Cref{conjecture:hard-to-sample} implies $\P \neq \NP$}
\label{subsec:unconditional-difficult}


\begin{claim}
    \label{claim:hard-to-prove}
    If \Cref{conjecture:hard-to-sample} is true then $\P \neq \NP$.
\end{claim}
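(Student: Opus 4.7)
The plan is to argue the contrapositive: assuming $\P = \NP$, I will construct, for any candidate family $\{H_n\}$ satisfying parts (1) and (2) of \Cref{conjecture:hard-to-sample}, a polynomial-size distribution that hits $H_n$ in the sense forbidden by part (3). Let $C_n$ be the polynomial-size membership circuit for $H_n$ promised by part (1). The goal is to exhibit a generator of size $\poly(n)$ for a distribution $\mcD_n$ that is close in total variation distance to $\mathrm{Unif}(H_n)$.

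The key tool is the Stockmeyer--Jerrum--Valiant--Vazirani framework: Stockmeyer's approximate counting places approximate $\#\P$ in $\mathsf{BPP}^{\NP}$, and Jerrum--Valiant--Vazirani show that almost uniform generation for self-reducible search problems (such as finding satisfying inputs of the polynomial-size circuit $C_n$) reduces to approximate counting with only polynomial overhead. Applied to $C_n$, this yields a randomized polynomial-time algorithm with an $\NP$ oracle that samples from a distribution within total variation distance $1/\poly(n)$ of $\mathrm{Unif}(H_n)$. Under $\P = \NP$ we have $\mathsf{PH} = \P$, and since $\mathsf{BPP} \subseteq \mathsf{PH}$ by the classical inclusion, we conclude $\mathsf{BPP}^{\NP} \subseteq \mathsf{BPP} = \P$. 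Hardwiring appropriate nonuniform advice and converting the resulting polynomial-time randomized algorithm into a sampling circuit then produces a polynomial-size generator for a distribution $\mcD_n$ that is $1/\poly(n)$-close to $\mathrm{Unif}(H_n)$.

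Finally I check that $\mcD_n$ violates evasiveness. Say $\mcD_n$ has size $s = \poly(n)$, and fix any $H^* \subseteq \zo^n$ with $|H^*| \le s^c$. Because $|H_n|$ is superpolynomial while $s^c$ is polynomial, $\mathrm{Unif}(H_n)(H_n \setminus H^*) \ge 1 - s^c/|H_n| = 1 - o(1)$, and the total variation closeness propagates this to $\mcD_n(H_n \setminus H^*) \ge 1 - o(1) > 1/s$. Hence $\mcD_n$ $(1/s, s^c)$-hits $H_n$, contradicting the evasiveness required by \Cref{conjecture:hard-to-sample}. The main obstacle is a clean invocation of the JVV/Stockmeyer pipeline together with the collapse $\mathsf{BPP}^{\NP} \subseteq \P$; the polynomial-vs.-superpolynomial gap between $s^c$ and $|H_n|$ makes the error analysis for the approximate-uniform sampler essentially free.
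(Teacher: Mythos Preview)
Your proof is correct and follows essentially the same route as the paper: invoke JVV approximate uniform generation with an $\NP$ oracle, use $\P = \NP$ to eliminate the oracle, convert the resulting randomized polynomial-time sampler into a polynomial-size generating circuit, and then verify that this distribution hits $H_n$ because $|H_n|$ is superpolynomial while $s^c$ is polynomial. The detour through $\mathsf{BPP}^{\NP} \subseteq \P$ is unnecessary (sampling, not deciding, is at stake---once the oracle is in $\P$ the sampler is already a randomized poly-time algorithm, and Cook--Levin converts it to a circuit with the random bits as input), but this does not affect correctness.
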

To prove \Cref{claim:hard-to-prove}, we use the following classic result of Jerrum, Valiant, and Vazirani.
\begin{fact}[Approximate sampling with an $\NP$ oracle~\cite{JVV86}]
    \label{fact:approx-sampling-NP-oracle}
    There exists a polynomial time (randomized) Turing machine equipped with an $\NP$ oracle that, given as input a circuit $\phi$, outputs a random variable $\bx$ satisfying, for any input $x$ that $\phi$ accepts,
    \begin{equation*}
        \frac{0.9}{|\phi^{-1}(1)|} \leq \Prx[\bx = x] \leq \frac{1.1}{|\phi^{-1}(1)|}.
    \end{equation*}
\end{fact}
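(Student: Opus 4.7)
The plan is to implement the classical Jerrum--Valiant--Vazirani strategy that combines Stockmeyer-style approximate counting with pairwise independent hashing. At a high level: use the $\NP$ oracle to pin down $\lfloor \log_2 |\phi^{-1}(1)| \rfloor$ up to an additive constant, then use a random pairwise independent hash to carve $\zo^n$ into $2^m$ buckets for $m$ slightly larger than this estimate, use the oracle (via prefix self-reduction) to find the satisfying assignment in the zero-bucket if it is unique, and restart otherwise.

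First I would set up the approximate counting module. Fix a family $\mathcal{H}_k$ of pairwise independent hash functions $\zo^n\to\zo^k$. For each $k\in\{0,1,\dots,n\}$, use the $\NP$ oracle to decide the predicate $Q_k(\bh) := \exists x\,[\phi(x)=1 \wedge \bh(x)=0^k]$ on a freshly sampled $\bh\sim\mathcal{H}_k$, and repeat $O(n)$ times to estimate $p_k := \Prx_{\bh}[Q_k(\bh)]$. By pairwise independence and Chebyshev, if $|\phi^{-1}(1)|\ge 4\cdot 2^k$ then $p_k\ge \tfrac{3}{4}$, whereas if $|\phi^{-1}(1)|\le \tfrac{1}{4}\cdot 2^k$ then $p_k\le \tfrac{1}{4}$. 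Take $k^*$ to be the largest $k$ whose empirical frequency exceeds $\tfrac{1}{2}$; a Chernoff bound then ensures that with high probability $2^{k^*-2}\le |\phi^{-1}(1)|\le 2^{k^*+2}$.

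Next, for the sampling step, set $m=k^*+c$ for a constant $c$ to be chosen. Draw $\bh\sim\mathcal{H}_m$. Use $n$ sequential $\NP$ queries (prefix self-reduction) to recover the lexicographically smallest $x^\star\in\zo^n$ with $\phi(x^\star)=1$ and $\bh(x^\star)=0^m$, if one exists; then ask the oracle whether there is any \emph{other} satisfying $x$ with $\bh(x)=0^m$. If exactly one preimage was found, output it; otherwise resample $\bh$ and retry.

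The main analytical step, and the main potential obstacle, is showing that the output distribution lies in $[0.9/N,\,1.1/N]$ on $\phi^{-1}(1)$, where $N=|\phi^{-1}(1)|$. Fix any target $x^\star\in\phi^{-1}(1)$. Pairwise independence gives
\[
\Prx_{\bh}[\bh(x^\star)=0^m]=2^{-m}
\quad\text{and}\quad
\Prx_{\bh}[\bh(x')=0^m\mid \bh(x^\star)=0^m]=2^{-m}
\]
for every $x'\neq x^\star$, so a union bound yields
\[
\Prx_{\bh}\!\left[x^\star\text{ is the unique satisfying preimage of }0^m\right]
\;\in\;\Big[2^{-m}\big(1-(N-1)2^{-m}\big),\,2^{-m}\Big].
\]
Because $(N-1)2^{-m}\le 2^{k^*+2-m}=2^{2-c}$, choosing $c$ a sufficiently large constant makes the probabilities for different $x^\star$ agree up to a $(1\pm 0.05)$ factor; conditioning on the constant-probability event ``exactly one preimage found'' then gives the desired $[0.9/N,\,1.1/N]$ bounds after normalization. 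Since each trial succeeds with constant probability, the expected number of restarts is $O(1)$ and the whole procedure runs in expected polynomial time. The subtlety I would have to handle carefully is propagating the (small) error in the approximate-count $k^*$ through this analysis; I would condition on the high-probability event that $k^*$ is correct, absorb the residual failure probability into the slack between the computed $(1\pm 0.05)$ bound and the required $(1\pm 0.1)$ bound, and use independent randomness for the counting and sampling phases so that the conditioning is clean.
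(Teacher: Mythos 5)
The paper does not prove this statement at all: it is imported verbatim as a black-box Fact from Jerrum--Valiant--Vazirani \cite{JVV86}, so there is no internal proof to compare against. Your sketch is essentially a reconstruction of the standard JVV/Valiant--Vazirani argument (pairwise-independent hashing for approximate counting, then isolation at a slightly larger hash length with prefix self-reduction and restarts), and the core calculations are sound: the Chebyshev/Markov dichotomy for the counting phase, the bound $\Pr[x^\star \text{ unique}] \in [2^{-m}(1-(N-1)2^{-m}),\,2^{-m}]$, and the normalization after conditioning on success all go through.

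Two points deserve care if you were to write this out in full. First, the restart loop as described gives an \emph{expected} polynomial-time machine, whereas the Fact asserts a polynomial-time machine; you should truncate after $\Theta(n)$ independent trials, which costs only an exponentially small additive deviation. Second, your plan to ``absorb the residual failure probability into the slack'' is only valid because the required bounds are multiplicative relative to $1/N$, and $1/N$ can be as small as $2^{-n}$; so both the counting-phase failure probability and the truncation failure probability must be driven below roughly $0.01\cdot 2^{-n}$ (achievable with $O(n)$ repetitions per hash length and a union bound over the $n+1$ lengths, but the constant in the repetition count matters and should be stated). Finally, a cosmetic slip: with your thresholds the counting phase guarantees $2^{k^*-2}\le N\le 2^{k^*+3}$ rather than $2^{k^*+2}$, which only shifts the choice of the constant $c$. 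None of these affects the viability of the approach.
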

We now prove the main result of this subsection.
\begin{proof}[Proof of \Cref{claim:hard-to-prove}]
    We will prove that if $\P = \NP$, then \Cref{conjecture:hard-to-sample} is false. The desired result follows by contrapositive.

    If $\P = \NP$, then in \Cref{fact:approx-sampling-NP-oracle}, the Turing machine need not have an $\NP$ oracle (as it can simulate this oracle itself). Hence, there is a randomized Turing machine $T$ that on input $\phi$ outputs an approximately uniform accepting input to $\phi$.

    Now, consider any sequence of sets $\set{H_n \subseteq \zo^n}_{n \in \N}$. We will show this sequence does not satisfy the requirements of \Cref{conjecture:hard-to-sample} by showing that if the sequence is explicit and large it is not evasive.

    Since $H_n$ is assumed to be explicit, it is the set of accepting inputs of some polynomial-sized circuit $\phi$. Then, $T(\phi)$ samples an approximately uniform element of $H_n$ in the sense that, for all $x \in H_n$,
    \begin{equation*}
        \Pr[T(\phi) = x] \geq \frac{0.9}{|H_n|}.
    \end{equation*}
    Using the Cook--Levin reduction, we may transform $T(\phi)$ into a circuit with input equal to the random bits that $T$ uses (and $\phi$ hard coded into the circuit). Hence, there is a poly-sized generator $\mcG$, that given as input a uniformly random seed $\br$ satisfies for all $x \in H_n$,
    \begin{equation*}
        \Pr[\mcG(\br) = x] \geq \frac{0.9}{|H_n|}.
    \end{equation*}
    We claim that the distribution $\mcG$ generates $(0.45, |H_n|/2)$-hits $H_n$. This is because, for any set $H^\star$ of size at most $|H_n|/2$, we bound
    \begin{equation*}
        \Pr[\mcG(\br) \in H_n \setminus H^{\star}] \geq \frac{0.9}{|H_n|} \cdot |H_n \setminus H^{\star}| \geq 0.45.
    \end{equation*}
    Let $s = \poly(n)$ be the size of the generator $\mcG$. Then, since $|H_n|$ grows superpolynomially and $s$ only grows polynomially, for any constant $c$ and large enough $n$, we have the distribution $\mcG$ generates $(0.45, s^c)$-hits $H_n$, which also implies it $(1/s, s^c)$-hits $H_n$. This contradicts the evasiveness requirement of \Cref{conjecture:hard-to-sample}.
    
    Therefore, if $\P = \NP$, \Cref{conjecture:hard-to-sample} is false.
\end{proof}

\subsection{\Cref{conjecture:hard-to-sample} is true relative to a random oracle}
\label{subsec:hard-to-sample-relative-to-random-oracle}
In this subsection, we prove the following.
\begin{theorem}[\Cref{conjecture:hard-to-sample} is true relative to a random oracle]
    \label{thm:hard-to-sample-relative-to-random-oracle}
    For large enough $n$, with probability at least $1 - o(1)$ over a uniform random oracle $\bmcO$, there is a set $H_{\bmcO}$ satisfying:
    \begin{enumerate}
        \item Explicit: There is an oracle circuit $C^{\bmcO}$ of size $O(n^2)$ that computes membership in $H_{\bmcO}$.
        \item Large: $H_{\bmcO}$ contains $2^{\Omega(n)}$ many unique inputs.
        \item Evasive: For every $n \leq s \leq 2^{O(n)}$, there is no size-$s$ oracle generator $G^{\bmcO}$ that $(1/s, s^{1.01})$-hits $H_{\bmcO}$.
    \end{enumerate}
\end{theorem}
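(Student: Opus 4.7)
The plan is to construct $H_{\bmcO}$ directly from the random oracle and then adapt the Chernoff + union-bound strategy used to prove \Cref{claim:fully-random misses,lem:PRG evasive formal} to the setting where the generator has oracle access.

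\textbf{Construction.} Viewing $\bmcO$ as a function $\zo^n \to \zo^m$ for some $m = \Theta(n)$ chosen strictly below the hidden constant in the exponent of ``$s \le 2^{O(n)}$'', define
\[
H_{\bmcO} := \{x \in \zo^n : \bmcO(x) = 0^m\}.
\]
Membership is decided by a single oracle query followed by an equality check against $0^m$, giving an oracle circuit of size $O(n^2)$. Each $x \in \zo^n$ lies in $H_{\bmcO}$ independently with probability $\delta := 2^{-m}$, so a Chernoff bound yields $|H_{\bmcO}| = \Theta(\delta \cdot 2^n) = 2^{\Omega(n)}$ with probability $1 - 2^{-\Omega(n)}$, handling the ``explicit'' and ``large'' requirements.

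\textbf{Per-generator evasion.} Fix an oracle circuit $\mcG^{\cdot} : \zo^\ell \to \zo^n$ of size $s$ and execute it under lazy sampling of $\bmcO$. For each $x$, decompose the output mass as $p_x(\bmcO) = p_x^{\mathrm{unc}}(\bmcO) + p_x^{\mathrm{cert}}(\bmcO)$, separating the seeds whose execution did not (resp.~did) query $\bmcO(x)$. The key observation is that $p_x^{\mathrm{unc}}(\bmcO)$ depends only on $\bmcO|_{\neq x}$, so $\Ind[x \in H_{\bmcO}] = \Ind[\bmcO(x) = 0^m]$ behaves as an \emph{independent} $\delta$-biased coin relative to the weight $p_x^{\mathrm{unc}}(\bmcO)$. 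Take $T^\star$ to consist of the uncertified heavy hitters together with every certified point that happens to lie in $H_{\bmcO}$; a Chernoff bound over the independent uniform responses to $\mcG$'s queries shows that the certified-in-$H_{\bmcO}$ count, and hence $|T^\star|$, is at most $s^{1.01}$ except with probability $e^{-\Omega(s^{1.01})}$. Re-running the Chernoff calculation of \Cref{claim:fully-random misses} on the residual uncertified mass---using $p_x^{\mathrm{unc}}(\bmcO) \le 1/s^{1.01}$ for $x \notin T^\star$ and the independence of $\Ind[x \in H_{\bmcO}]$ from $p_x^{\mathrm{unc}}$---bounds the probability that the residual reaches $1/s$ by $e^{-\Omega(s^{1.01})}$ as well. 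A union bound via \Cref{fact:circuits} over the $s^{O(s+n)}$ size-$s$ oracle circuits and the $O(n)$ dyadic scales of $s \in [n, 2^{O(n)}]$ then closes the proof, provided $m/n$ is chosen small enough relative to the hidden $O(n)$-exponent of $s$ so that $e^{-\Omega(s^{1.01})}$ dominates $s^{O(s+n)}$.

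\textbf{Main obstacle.} The delicate work is in pushing both Chernoff bounds through uniformly when $\ell$ can be as large as $s$. For the certified count, the naive bound $s \cdot 2^\ell$ on the number of distinct query inputs across all seeds can dwarf $s^{1.01}$; I would circumvent this by observing that only certified points receiving mass $\ge 1/s^{1.01}$ under $\mcG^{\bmcO}$ can actually survive into a heavy-hitter slot, and a size-$s$ circuit can produce at most $s^{O(1)}$ such high-mass outputs regardless of $\ell$, while the remaining ``light'' certified points have total mass controlled by the same independence-based Chernoff used for the uncertified residual. The second subtlety is that the uncertified-mass terms are weighted by $p_x^{\mathrm{unc}}(\bmcO)$, which itself depends on $\bmcO$ through queries at positions other than $x$; carefully conditioning on $\bmcO$ outside the position $x$ for each coordinate in turn (or equivalently running a martingale over the oracle coordinates) preserves the required independence structure. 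Balancing the density parameter $m$, the maximum generator size $s = 2^{O(n)}$, and the two Chernoff tail exponents so that everything fits together is the principal technical challenge.
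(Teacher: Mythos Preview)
Your construction of $H_{\bmcO}$ is essentially the paper's (they AND together $\Theta(n)$ oracle bits per input, which is equivalent to your ``$\bmcO(x)=0^m$'' test), and the explicit/large items go through as you say. The gap is in your handling of the \emph{certified} mass. Your first version of $T^\star$ includes every certified point that lands in $H_{\bmcO}$, so you must bound the \emph{count} of such points by $s^{1.01}$; you correctly observe that with $\ell$ as large as $s$ the number of distinct queries can be $s\cdot 2^{\ell}$, making this hopeless directly. Your proposed fix switches to putting only heavy certified points into $T^\star$ and controlling the total mass of \emph{light} certified points ``by the same independence-based Chernoff used for the uncertified residual.'' But that independence is exactly what fails for certified mass: by definition the generator queried $\bmcO(x)$ on those seeds, so $p_x^{\mathrm{cert}}(\bmcO)$ can (and in adversarial constructions will) be strongly correlated with $\Ind[x\in H_{\bmcO}]$---e.g., a generator that outputs a queried point \emph{only when it finds it in $H$}. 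So neither the coordinate-wise conditioning nor the martingale over oracle coordinates gives you a product of $\delta$-biased coins here, and the Chernoff step does not go through. The statement ``a size-$s$ circuit can produce at most $s^{O(1)}$ high-mass outputs'' is just the trivial mass bound and does nothing for the light-certified residual.

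The paper sidesteps this entirely by not analyzing the output distribution of $G^{\bmcO}$ at all. Instead it (i) shows that if $\mcD$ $(\eps,k)$-hits $H$ then $m=O(k/\eps)$ i.i.d.\ samples from $\mcD$ contain $k$ distinct elements of $H$ with probability $\ge 1/2$ (\Cref{prop:hitting-to-many}), (ii) turns ``draw $m$ samples from $G^{\bmcO}$ and check membership'' into a single $q=O(m(s+n\log n))$-query algorithm to $\bmcO$, and (iii) proves a clean query lower bound: any $q$-query algorithm outputs $k$ distinct accepting inputs of $C^{\bmcO}$ with probability at most $\exp(-\Omega(k))$ whenever $q\le k/(16\delta)$ (\Cref{lem:oracle-LB}, via the simple observation that after augmenting the algorithm to also query its outputs, success requires $\ge k$ of its $\le q+k$ distinct queries to land in $H$, a $\mathrm{Bin}(q+k,\delta)$ tail). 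This packages the certified/uncertified dichotomy into one binomial tail and never needs to separately control light certified mass; the union bound over $s^{O(s)}$ oracle circuits and over $s\in[n,2^{0.24n}]$ then closes as you intended.
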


\subsubsection{A helper lemma: Hardness of finding many satisfying assignments}
\begin{lemma}[Hardness of finding many distinct satisfying assignments]
\label{lem:oracle-LB}
For every $n$ and every $2^{-n} \leq \delta \leq 1/16$, there exists a size-$O(n \log(1/\delta))
)$ oracle circuit $C:\zo^n \to \zo$ such that, the following hold over the randomness of a uniform oracle $\bmcO:\zo^{n} \times \zo^{O(\log\log(1/\delta))} \to \zo$:
\begin{enumerate}
    \item For any (possibly randomized) $q$-query algorithm $\bT$ and $q \leq \frac{k}{16\delta}$, the probability that $\bT(\bmcO)$ outputs $k$ distinct accepting inputs of $C^{\bmcO}$ is at most $\exp(-\Omega(k))$.
	\item With probability at least $1 - \exp(-\Omega(\delta 2^n))$, $C^{\bmcO}$ accepts at least $\delta \cdot  2^n$ distinct inputs.
\end{enumerate}
\end{lemma}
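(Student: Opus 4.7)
The plan is to take $C^{\bmcO}(x)$ to be the AND of $2^m$ oracle bits $\{\bmcO(x,y)\}_{y\in\zo^m}$, choosing $m=\Theta(\log\log(1/\delta))$ so that each input's per-oracle acceptance probability $\delta':=2^{-2^m}$ lies in $[2\delta,4\delta]$. The circuit has size $O(n\cdot 2^m)=O(n\log(1/\delta))$, and the oracle takes $n+O(\log\log(1/\delta))$ bits of input, matching the specification. Since the events $\{C^{\bmcO}(x)=1\}_{x\in\zo^n}$ are independent $\mathrm{Ber}(\delta')$ with expected count $\delta'2^n\geq 2\delta\cdot 2^n$, part (ii) follows immediately from \Cref{fact:chernoff}: the count falls below $\delta 2^n$ with probability at most $\exp(-\Omega(\delta 2^n))$.

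For part (i) the plan is to reduce bit queries to biased-coin queries via coupling. Define coins $\bm{C}_x:=\Ind[\bmcO(x,\cdot)=0^{2^m}]$; these are i.i.d.\ $\mathrm{Ber}(\delta')$ across $x$, and conditional on $\bm{C}_x$ the block $\bmcO(x,\cdot)$ is either deterministically $0^{2^m}$ (when $\bm{C}_x=1$) or uniform over the $2^{2^m}-1$ nonzero strings (when $\bm{C}_x=0$). Given any $q$-query algorithm $\bT$, simulate it as follows: the first time $\bT$ touches a new input $x$, query the coin $\bm{C}_x$; if $\bm{C}_x=1$ answer $0$ to every query of the form $(x,\cdot)$, and otherwise privately sample a nonzero block for $x$ once and answer all future $(x,\cdot)$ queries from that cache. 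This simulator is distributionally identical to running $\bT$ on the true $\bmcO$, uses at most $q$ coin queries, and preserves $\bT$'s output set, so it suffices to bound the success probability of the resulting $q$-coin-query algorithm.

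The coin-query bound is then a direct Chernoff calculation. Let $Q$ be the random set of coins queried by the simulator, let $\bm{Z}:=|\{i:X_i\notin Q\}|$ be the number of outputs whose coin was never queried, and let $\bm{A}:=|\{x\in Q:\bm{C}_x=1\}|$, which is bounded above in distribution by $\mathrm{Binom}(q,\delta')$. An output in $Q$ is accepting iff its coin is $1$, while an output outside $Q$ is accepting independently with probability $\delta'$, so
\begin{equation*}
\Pr[\text{all $k$ outputs accepting}]\;\le\;\sum_{z=0}^{k}\Pr\!\big[\bm{A}\ge k-z\big]\,(\delta')^{z}.
\end{equation*}
Since $\E[\bm{A}]\le q\delta'\le(k/(16\delta))\cdot 4\delta=k/4$, Chernoff yields $\Pr[\bm{A}\ge k/2]\le\exp(-\Omega(k))$, which handles the range $z\le k/2$. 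For $z>k/2$, the bound $\delta'\le 4\delta\le 1/4$ gives $(\delta')^z\le 4^{-k/2}=\exp(-\Omega(k))$. Summing these two ranges (with an $O(k)$-factor loss) yields the claimed $\exp(-\Omega(k))$ bound.

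The main obstacle I anticipate is verifying that the coupling genuinely yields an exact distributional equality---so that the coin-query lower bound transfers to bit queries without slack, and so that the adaptivity of $\bT$ does not introduce subtle correlations between the identity of $Q$ and the coin values within it. Once that is nailed down, the remainder is a routine Chernoff plus constant-tuning exercise: $\delta'$ must be bounded above (for the concentration in (i)) and bounded below (for the concentration in (ii)), both of which the choice $\delta'\in[2\delta,4\delta]$ achieves.
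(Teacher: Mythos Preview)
Your approach is essentially the paper's: build $C^{\bmcO}(x)$ as an AND of $\Theta(\log(1/\delta))$ oracle bits so that acceptance is i.i.d.\ $\mathrm{Ber}(\delta')$ with $\delta'\in[2\delta,4\delta]$, then reduce bit queries on $\bmcO$ to ``coin queries'' on the $\delta'$-biased oracle $x\mapsto C^{\bmcO}(x)$ and finish with Chernoff. Two small glitches to fix. First, taking the AND over \emph{all} of $\zo^m$ forces $\delta'=2^{-2^m}$, and there is no guarantee a power of two lands in $[\log_2(1/4\delta),\log_2(1/2\delta)]$; the paper ANDs over a subset $S\subseteq\zo^r$ of size exactly $\ell=\lfloor\log_2(1/\delta)\rfloor-1$, which you should do too. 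Second, your coin $\bm{C}_x=\Ind[\bmcO(x,\cdot)=0^{2^m}]$ indicates the all-\emph{zeros} block, whereas acceptance (AND$=1$) is the all-\emph{ones} block; your later claim ``an output in $Q$ is accepting iff its coin is $1$'' is therefore inverted. Swap to $\bm{C}_x=\Ind[\bmcO(x,\cdot)=1^{|S|}]$ and have the simulator answer $1$ (and sample a non-all-ones block otherwise).

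Once those are fixed, your case-split analysis on $z$ is correct but heavier than necessary: the paper simply observes that at most $q+k$ coins are ever relevant (the $\le q$ queried plus the $k$ outputs), so the success probability is at most $\Pr[\mathrm{Bin}(q+k,\delta')\ge k]$, which is $\exp(-\Omega(k))$ since the mean is $\le k/2$. This subsumes both of your ranges in one line.
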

When we apply \Cref{lem:oracle-LB}, we will union bound over $\exp(O(k))$ many size-$S$ circuits, which will allow us to ``swap the quantifiers;" i.e.~show that with high probability, $\bmcO$ is hard for every circuit of interest. We will do so in a setting where $S \gg k$, which means we could not simply union bound over all size-$S$ circuits.

\begin{claim}
\label{clm:oracle-binomial}
    Fix a randomized $q$-query algorithm $\bT$. If $\bmcO$ is a $\delta$-biased random oracle, then: 
    $$
    \Prx_{\bmcO}[\bT(\bmcO) \text{ outputs distinct } x^{(1)},\ldots,x^{(k)}\in \bmcO^{-1}(1)  ]\le \Pr[\mathrm{Bin}(q + k, \delta) \ge k]. 
    $$
\end{claim}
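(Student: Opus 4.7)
The plan is to use a standard coupling / deferred-decisions argument. Fix any realization of $\bT$'s internal randomness; it suffices to prove the bound conditionally, since averaging over the randomness preserves it. Consider the augmented procedure $\bT'$ that first simulates $\bT$ on $\bmcO$ using its $q$ queries, and then additionally queries $\bmcO$ at each of the $k$ (distinct) points $x^{(1)},\dots,x^{(k)}$ that $\bT$ outputs. The event in question, that $\bT(\bmcO)$ outputs $k$ distinct points all lying in $\bmcO^{-1}(1)$, is exactly the event that these $k$ ``verification'' queries all return $1$. In particular, this event is contained in the event that $\bT'$ sees at least $k$ answers equal to $1$ among its distinct queries.

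Next I would bound the number of distinct positions of $\bmcO$ that $\bT'$ probes. The original $\bT$ contributes at most $q$ distinct positions, and the $k$ output points contribute at most $k$ more (they are distinct from each other by hypothesis, though they may coincide with earlier queries). So $\bT'$ probes at most $q+k$ distinct positions in total.

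Now I would invoke the principle of deferred decisions for random oracles: since $\bmcO$ is $\delta$-biased with independent coordinates, every time $\bT'$ makes a query to a previously unqueried position the answer is a fresh $\mathrm{Ber}(\delta)$ draw, independent of the transcript so far. Consequently, the number of $1$-answers among $\bT'$'s distinct queries is distributed as $\mathrm{Bin}(m,\delta)$ where $m \le q+k$ is the (possibly random, adaptively determined) number of distinct queries; this is stochastically dominated by $\mathrm{Bin}(q+k,\delta)$. Combining with the previous paragraph,
\[
\Prx_{\bmcO}\bigl[\bT(\bmcO)\ \text{outputs distinct}\ x^{(1)},\dots,x^{(k)} \in \bmcO^{-1}(1)\bigr] \le \Prx_{\bmcO}\bigl[\bT'\ \text{sees}\ \ge k\ \text{ones}\bigr] \le \Pr[\mathrm{Bin}(q+k,\delta)\ge k].
\]

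The only mildly delicate step is the deferred-decisions argument in the presence of adaptive queries, which I would state precisely (the post-query oracle values remain independent of the transcript of earlier answers), but this is standard. Everything else is a direct union-of-events / domination calculation.
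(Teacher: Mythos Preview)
Your proposal is correct and follows essentially the same approach as the paper: both augment $\bT$ to a $\bT'$ that additionally queries its $k$ outputs, observe that $\bT'$ probes at most $q+k$ distinct oracle positions, and then use independence of the oracle bits to bound the number of $1$-answers by $\mathrm{Bin}(q+k,\delta)$. Your explicit invocation of deferred decisions and stochastic domination is a slightly more careful phrasing of the same idea the paper uses.
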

\begin{proof}

We will first prove a bound assuming that $\bT$ makes $q+k$ queries to $\bmcO$, that all of these $q+k$ queries are distinct elements of $\zo^n$, and also that all of $\bT(\bmcO)$ outputs are also among the $q +k$ queries that $\bT$ makes. In this case, the number of queries $\bT$ makes on which $\bmcO$ evaluates to $1$ is exactly distributed according to $\mathrm{Bin}(q+k, \delta)$. Since we assumed that $\bT$ only outputs elements it queried, to succeed, it must query at least $k$ distinct elements of $\bmcO^{-1}(1)$ which occurs with probability exactly $\Pr[\mathrm{Bin}(q + k, \delta) \ge k].$

We reduce any $q$-query algorithm $\bT$ to a $
\leq q+k$-query algorithm $\bT'$ with the desired structure. $\bT'$ first runs the $q$-queries of $\bT$, then queries the $k$ outputs of $\bT$, and finally returns the same outputs. This guarantees that $\bT'$ queries all of its outputs. Finally, we observe that if $\bT'$ queries the same input more than once, later queries are uninformative and can be removed. This gives a $\bT'$ with at most $q+k$ queries. Note that $\Pr[\mathrm{Bin}(m, \delta) \ge k]$ is increasing in $m$, so using $m = q+k$ as a conservative upper bound suffices.\end{proof}




\begin{proof}[Proof of \Cref{lem:oracle-LB}]
Let $\ell \coloneqq \floor{\log_2(1/\delta)} - 1$ and $\delta' = 2^{-\ell}$. This guarantees that $\delta' \in [2\delta, 4\delta]$. We will construct $C$ so that it accepts every input independently with probability exactly $\delta'$. Since $\bmcO$ accepts every element independently with probability $1/2$, we can accomplish this by having $C^{\bmcO}$ accept some $x$ iff $\ell$ distinct inputs are accepted by $\bmcO$.

Formally, let $r = \ceil{\log_2(\ell)}$. We have access to a uniformly random function $\bmcO:\zo^n \times \zo^r \to \zo$. Let $S \subseteq \zo^r$ be any fixed set of size $\ell$ (which is possible since $r \geq \log_2(\ell)$). Then we construct $C$ as
\begin{equation*}
    C^{\bmcO}(x) \coloneqq \bigwedge_{y \in S} \bmcO(x,y).
\end{equation*}
We briefly analyze the size of $C$: It consists of the input $x$ (which has size $n$) being fed into $\ell$ many instances of the subcircuit calling $\bmcO(x,y)$ for all $y \in S$. Each such call requires $n$ wires (for routing $x$) as well as $r$ many hardcoded values (for $y$). The \textsf{AND} gate at the top is a single gate, so the size of $C$ is $O(\ell(n+r))$. Since $\delta \geq 2^{-n}$ we have that $r \leq n$, so the size of $C$ is $O(\ell n) = O(n \log(1/\delta))$.

We also observe a key property about $C^{\bmcO}$. It independently accepts each $x$ with probability exactly $2^{-\ell} = \delta'$: This follows from $\bmcO$ being a uniformly random oracle and that $C$ queries $\ell$ distinct inputs to determine whether to accept each $x$. Using that observation, we proceed to show that $C^{\bmcO}$ has the desired properties:

\textbf{Item (1).} Consider any (possibly randomized) $q$-query algorithm $\bT$. We will strengthen $\bT$ in the following way: For any $x$, it may query $\bmcO(x,y)$ for all possible $y$ at a net cost of $1$ query. Since this only provides more power to $\bT$, our result will carry over to algorithms that can only query individual inputs to $\bmcO$ at one cost.

With this strengthening, we observe that one query of $\bT$ simulates one query to $C^{\bmcO}$, and gives no information about $C^{\mcO}$'s behavior on any other $x$. Furthermore, $C^{\bmcO}$ is exactly a $\delta'$-biased random oracle. Therefore, by \Cref{clm:oracle-binomial}
\begin{equation*}
    \Pr[\bT(\bmcO) \text{ outputs $k$ distinct accepting inputs to }C^{\bmcO}] \leq \Pr[\mathrm{Bin}(q + k, \delta') \geq k].
\end{equation*}
Since $\delta' \leq 4\delta \leq 1/4$ and $q \leq \frac{k}{16\delta}$, the expectation of $\mathrm{Bin}(q + k, \delta')$ is at most $k/4 + k/4 = k/2$. Therefore, by a standard Chernoff bound (\Cref{fact:chernoff}),
\begin{equation*}
     \Pr[\bT(\bmcO) \text{ outputs $k$ distinct accepting inputs to }C^{\bmcO}] \leq e^{-k/6}.
\end{equation*}



\textbf{Item (2)}.
As discussed earlier, each element of $\zo^n$ is included in $(C^{\bmcO})^{-1}(1)$ independently with probability $\delta' \geq 2\delta.$ Therefore, we wish to show that
\begin{equation*}
    \Pr[\mathrm{Bin}(2^n, \delta') < \delta 2^n] \leq \exp(-\delta \cdot 2^n / 4). 
\end{equation*}
which also follows by \Cref{fact:chernoff} and that $\delta' \geq 2\delta$.
\end{proof}

\subsubsection{Proof of~\Cref{thm:hard-to-sample-relative-to-random-oracle}: \Cref{conjecture:hard-to-sample} relative to a random oracle}



The key transition from \Cref{thm:hard-to-sample-relative-to-random-oracle} to \Cref{lem:oracle-LB} is the following simple result:
\begin{proposition}
   \label{prop:hitting-to-many}
    Let $\mcD$ be a distribution that $(\eps, k)$-hits a set $H$. Then, for \red{$m \coloneqq \ceil{2k/\eps}$}, if we draw $\by^{(1)}, \ldots, \by^{(m)} \iid \mcD$, then
    \begin{equation*}
        \Prx\bracket[\bigg]{\,\sum_{x \in H} \Ind[x \in \set{\by^{(1)}, \ldots, \by^{(m)}}] \geq k} \geq \frac{1}{2}.
    \end{equation*}
\end{proposition}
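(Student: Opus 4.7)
}
The plan is to view the sampling process sequentially and lower-bound the probability of collecting a ``new'' element of $H$ at each step by $\eps$, then stochastically dominate the hitting time by a sum of geometric random variables and finish with Markov. More concretely, for $i \in \{0,1,\ldots,m\}$ let $\bT_i \coloneqq H \cap \set{\by^{(1)},\ldots,\by^{(i)}}$, with $\bT_0 = \emptyset$, and define the stopping time $\btau \coloneqq \min\set{i : |\bT_i| \geq k}$. The goal becomes showing $\Prx[\btau \leq m] \geq \tfrac{1}{2}$.

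The first key step is the observation that whenever $|\bT_i| < k$, the conditional probability that $\by^{(i+1)}$ lands in a previously unseen element of $H$ is at least $\eps$. Indeed, $\bT_i$ has size strictly less than $k$, so we may extend it to an arbitrary set $H^\star \supseteq \bT_i$ with $|H^\star| = k$; since $\mcD$ $(\eps,k)$-hits $H$, by definition $\mcD(H \setminus H^\star) \geq \eps$, and therefore $\mcD(H \setminus \bT_i) \geq \mcD(H\setminus H^\star) \geq \eps$. Conditioned on the history $\by^{(1)},\ldots,\by^{(i)}$, the event $\{\by^{(i+1)} \in H \setminus \bT_i\}$ then occurs with probability at least $\eps$, and exactly on this event does $|\bT|$ increment.

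The second step is the stochastic-domination argument: writing $\btau = \sum_{j=1}^{k}(\btau_j - \btau_{j-1})$ where $\btau_j$ is the first time $|\bT_i| = j$, the previous paragraph implies that each increment $\btau_j - \btau_{j-1}$ is dominated (in distribution, even conditioned on the past) by a geometric random variable with success parameter $\eps$. Summing over $j$ gives $\Ex[\btau] \leq k/\eps$, and Markov's inequality yields $\Prx[\btau > 2k/\eps] \leq 1/2$. Since $m \geq 2k/\eps$, we obtain $\Prx[\btau \leq m] \geq 1/2$, which is exactly the claim.

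The only place where any care is needed is the verification that the conditional ``new-hit'' probability is bounded below by $\eps$ \emph{uniformly in the history}, so that the geometric domination is literally a domination of random variables rather than just of expectations; this is handled cleanly by padding $\bT_i$ to a size-$k$ witness set $H^\star$ as above. Everything else is a standard geometric/Markov calculation.
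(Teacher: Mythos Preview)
Your proposal is correct and is essentially the same argument as the paper's: both view the process as a coupon-collector, use the $(\eps,k)$-hitting definition (via padding the current set of seen elements to size $k$) to lower-bound the per-step probability of a fresh $H$-element by $\eps$, dominate each waiting time by a $\mathrm{Geom}(\eps)$ variable to get $\Ex[\btau]\le k/\eps$, and finish with Markov. Your write-up is in fact slightly more explicit about the padding step and the uniform-in-history conditioning than the paper's proof.
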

\begin{proof}
    For the sake of the argument, suppose rather than taking exactly $m$ samples $\mcD$, we continued taking i.i.d. samples until we have exactly $k$ unique elements of $H$. Let $\bz$ refer to the number of samples we take before stopping. We will argue that $\Ex[\bz] \leq \frac{k}{\eps}$. By Markov's inequality, this gives that $\bz \leq \frac{2k}{\eps}$ with probability at least $1/2$.

    To bound the expectation of $\bz$, we can write $\bz = \sum_{i \in [k]} \bz_i$ where $\bz_i$ is the number of additional samples it takes to find the $i^{\text{th}}$ unique element of $\bH$ after finding $i-1$ unique elements. We will show that $\Ex[\bz_i] \leq 1/\eps$ for all $i \in [k]$, giving the desired bound.

    For this, observe that whenever we take a sample $\by^{(j)}$, we are guaranteed that $\bH^{\star} \coloneqq \set{\by^{(1)}, \ldots, \by^{(j-1)}} \cap H$ contains at most $k$ elements. Therefore, by the definition of $(\eps, k)$-hitting, we have that $\mcD(H \setminus \bH^{\star}) \geq \eps$ and so each new sample has at least an $\eps$ probability of being a new unique element of $H$. This means that $\bz_i$ is upper bounded (stochastically dominated) by a geometric distribution with parameter $\eps$, and so its mean is upper bounded by $1/\eps$.
\end{proof}

\begin{proof}[Proof of \Cref{thm:hard-to-sample-relative-to-random-oracle}]
    We set $\delta \coloneqq 2^{-0.5n}$. Then, for any random oracle $\bmcO$, we set $H_{\bmcO}$ to be the accepting inputs of $C^{\bmcO}$ where $C$ is as defined in \Cref{lem:oracle-LB}. This immediately guarantees explicitness since $C$ has size $O(n^2)$. For largeness, \Cref{lem:oracle-LB} guarantees that with probability at least $1 - \exp(\Omega(2^{-0.5n}))$,  $C^{\bmcO}$ accepts at least $2^{0.5n}$ distinct inputs.

    All that remains is to show that $H_{\bmcO}$ is evasive with high probability. For this, we will show that for any fixed size-$s$ oracle generator $G$ where $s \leq 2^{0.24n}$, the probability that $G^{\bmcO}$ $(\eps \coloneqq 1/s, k \coloneqq s^{1.01})$-hits $H_{\bmcO}$ is at most $\exp(-\Omega(s^{1.01})$. For this, we first apply \Cref{prop:hitting-to-many}:
    \red{
    For any $\mcO$ for which $G^{\mcO}$ hits $H_{\mcO}$, if we take $m \coloneqq \ceil{2k/\eps}$ unique samples from $G^{\mcO}$, with probability at least $1/2$, they will contain at least $k$ unique elements of $H_{\mcO}$.
    Thus,
    $$\Prx_{\bmcO,\text{ sampling }}[m\text{ samples from }G^{\bmcO}\text{ contain } \ge k\text{ unique elements of }H_{\bmcO}] \ge \lfrac 12 \cdot \Prx_{\bmcO}[G^{\bmcO}\text{ hits }H_{\bmcO}].$$
    }
    
    Next, define $T_{G}(\bmcO)$ to be the $q \coloneqq O(m \cdot (s + n \log n))$ randomized query algorithm which does the following:
    \begin{enumerate}
        \item Draw $m$ independent random seeds, $\br_1, \ldots, \br_m$, for the generator $G$.
        \item Generate the samples $\bx_i = G^{\bmcO}(\br_i)$ for each $i \in [m]$. Each such sample requires at most $s$ queries to $\bmcO$.
        \item For each $\bx_i$, evaluate $C^{\bmcO}(\bx_i)$ using at most $O(n \log n)$ queries.
        \item If there are at least $k$ distinct accepting inputs of $C^{\bmcO}$ among these $m$ samples, output them. Otherwise, output arbitrarily.
    \end{enumerate}
     We note that, as required by \Cref{lem:oracle-LB}, our choice of parameters satisfies $k \geq 8\delta q$ because
     \begin{equation*}
         8\delta q = O(2^{-0.5n} \cdot m \cdot (s + n \log n)) = O(2^{-0.5n} \cdot ks \cdot (s + n\log n)),
     \end{equation*}
     which is at most $k$ for our choice of $s \leq 2^{0.24n}$ and large enough $n$. Hence, for any fixed choice of $G$, the probability that $T_{G}(\bmcO)$ successfully finds $k=s^{1.01}$ distinct accepting inputs to $C^{\bmcO}$ is at most $\exp(-\Omega(s^{1.01}))$. Union bounding over all $\exp(O(s \log s))$ size-$s$ circuits $G$, the probability there is a size-$s$ oracle generator $G^{\bmcO}$ that $(1/s, s^{1.01})$ hits $H_{\bmcO}$ is at most $\exp(O(s \log s)) \cdot \exp(-\Omega(s^{1.01}))$. In particular, for large enough $s$ (which is implied by our requirement that $n$ is large enough) this probability is at most $2^{-s}$. This failure probability is small enough to union bound over all choices of $n \leq s \leq 2^{0.24n}$.
\end{proof}


\subsection{Proof of \Cref{thm:computational-body}}
\label{subsec:proof of computational seperation}

\textbf{The concept class:} For each $n \in \N$, let $\mcF_n$ be a PRFF secure against non-uniform adversaries (\Cref{def:prff}) and $\set{H_n \subseteq \zo^n}_{n \in \N}$ be a sequence meeting the requirements of \Cref{conjecture:hard-to-sample}. We set the concept class to be $(\mcF_n)_{H_n}$ (recall \Cref{eq:concept-class-restriction} for this notation). The fact that $\mcC_n$ is a subclass of polynomial-size circuits follows from membership in $H_n$ being computable by polynomial-size circuits, every function in the PRFF being computable by a polynomial-size circuit, and that the logical AND of two polynomial-size circuits is itself a polynomial-size circuit.

\pparagraph{Easiness in samplable PAC:} Consider any $s(n) \leq \poly(n)$ and $\eps(n) \geq 1/\poly(n)$. Let $m(n) \coloneqq O(\max\{s(n), 1/\eps(n)\})$. Then, for some constant $c$ and large enough $n$, $H_n$ $(O(\eps(n)), m(n)^c)$-evades all size-$s(n)$-distributions. Hence, by \Cref{lem:evasiveness implies learnability intro}, we can efficiently learn $\mcC_n$ to accuracy $\eps(n)$ on all size-$s(n)$ distributions in time $O(n\cdot m(n)/\eps(n)) = \poly(n)$.

\pparagraph{Hardness in standard PAC:} As Valiant observed when defining the PAC model~\cite{Val84}, the very definition of pseudorandom functions families (PRFFs) (\Cref{def:prff}) implies that they are hard to learn under $\mathrm{Unif}(\zo^n)$. The hardness of learning $\mcC_n$ follows directly from a generalization of Valiant's observation, that PRFFs are in fact hard to learn under $\mathrm{Unif}(H)$ for any set $H\sse \zo^n$ with superpolynomially many points (in which case the quantity $\alpha$ below is small).
\begin{lemma}[Breaking PRFF security using efficient learners]
    \label{lem:PRFF-to-hardnes-of-learning}
    For any function family $\mcF \coloneqq \set{f_s: \zo^{n} \to \zo}_{s \in \zo^n}$ and distribution $\mcD$, if there is an efficient learner for $\mcF$ using $m$ samples on the distribution $\mcD$ that has expected accuracy $1 - \eps$, then there is a polynomial-time nonuniform adversary $A$ for which
    \begin{equation*}
        \begin{split}
        \left|
            \begin{aligned}
                    &\quad\Prx_{\bs\sim \zo^n}[A\text{ outputs }1\text{ when given oracle access to }f_{\bs} ]\\
                    &\quad \qquad -\Prx_{\boldf:\zo^n\to\zo}[A\text{ outputs }1\text{ when given oracle access to }\boldf ]\quad
                \end{aligned}
        \right| \geq \frac{1}{2} - \eps - \frac{\alpha}{2},
        \end{split}
    \end{equation*}
    where $\alpha \coloneqq \Prx_{\bx^{(1)}, \ldots, \bx^{(m+1)} \iid \mcD}\bracket*{\bx^{(m+1)} \in \set{\bx^{(1)}, \ldots, \bx^{(m)}}}$ and $\boldf:\zo^n\to\zo$ is a fully random  function.
\end{lemma}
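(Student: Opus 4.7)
The plan is to generalize Valiant's classical argument that PAC-learnability implies distinguishability from random. The adversary $A$ is constructed as a standard ``learn-then-predict'' reduction: given oracle access to an unknown function $g$ (either $f_{\bs}$ or a fully random $\boldf$), it draws $m+1$ i.i.d.\ points $\bx^{(1)},\ldots,\bx^{(m+1)} \sim \mcD$, queries the oracle to obtain labels, feeds the first $m$ labeled examples $(\bx^{(i)}, g(\bx^{(i)}))$ to the hypothesized learner to obtain a hypothesis $\bh$, and finally outputs $1$ iff $\bh(\bx^{(m+1)}) = g(\bx^{(m+1)})$. Since the learner is efficient and makes polynomially many queries, $A$ is polynomial-time (with the nonuniform advice encoding any nonuniformity of the learner, and with any learner randomness hard-coded to a best fixing).

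The analysis splits into the two cases distinguished by the oracle. In the ``PRFF'' case where $g = f_{\bs}$ for a uniformly random $\bs$, the $m$ training samples together with the test point $\bx^{(m+1)}$ are distributed exactly as $m$ PAC samples plus a fresh test draw from $\mcD$. By the assumed expected accuracy of the learner, the probability (over $\bs$, the samples, and any learner randomness) that $\bh(\bx^{(m+1)}) = f_{\bs}(\bx^{(m+1)})$ is at least $1 - \eps$, so $A$ outputs $1$ with probability at least $1 - \eps$.

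In the ``random'' case where $g = \boldf$, I will condition on whether the test point collides with the training set. Let $E$ be the event $\bx^{(m+1)} \in \{\bx^{(1)},\ldots,\bx^{(m)}\}$; by definition $\Pr[E] = \alpha$. Conditioned on $\neg E$, the value $\boldf(\bx^{(m+1)})$ is a fresh uniform bit independent of the transcript $(\bx^{(1)},\boldf(\bx^{(1)})),\ldots,(\bx^{(m)},\boldf(\bx^{(m)}))$ and of $\bx^{(m+1)}$, hence independent of $\bh(\bx^{(m+1)})$, so the match probability is exactly $1/2$. Conditioned on $E$, I can trivially upper-bound the match probability by $1$. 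Combining, $A$ outputs $1$ with probability at most $(1-\alpha)\cdot\tfrac{1}{2} + \alpha = \tfrac{1}{2} + \tfrac{\alpha}{2}$. Subtracting the two cases yields the claimed advantage $\tfrac{1}{2} - \eps - \tfrac{\alpha}{2}$.

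There is no real obstacle here; the only subtlety is being careful that the $m+1$ points drawn by $A$ are exactly the i.i.d.\ draws referenced in the definition of $\alpha$, and that, in the random-oracle branch, the independence argument conditions on $\neg E$ before invoking the uniformity of $\boldf(\bx^{(m+1)})$. The ``expected accuracy $1-\eps$'' hypothesis will be taken to mean
\[
\Ex\bigl[\,\Prx_{\by \sim \mcD}[\bh(\by) = f_{\bs}(\by)]\,\bigr] \ge 1 - \eps,
\]
where the outer expectation is over $\bs$, the training sample, and the learner's internal randomness; this is exactly what is needed to equate the inner PAC guarantee with $\Pr[\bh(\bx^{(m+1)}) = f_{\bs}(\bx^{(m+1)})]$ after averaging.
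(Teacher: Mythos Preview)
Your case analysis is correct and matches the paper's, but there is a gap in the construction of $A$: you have it \emph{draw} $\bx^{(1)},\ldots,\bx^{(m+1)}$ from $\mcD$ at runtime. The lemma places no samplability assumption on $\mcD$, and in the paper's application $\mcD = \Unif(H_n)$ for an evasive set $H_n$, which by design has no polynomial-size sampler. So the $A$ you describe is not polynomial-time, and your justification ``since the learner is efficient \ldots\ $A$ is polynomial-time'' overlooks the cost of sampling from $\mcD$.

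The paper uses the nonuniformity differently. Your own analysis already shows that the distinguishing gap is at least $\tfrac{1}{2}-\eps-\tfrac{\alpha}{2}$ \emph{in expectation} over the random draws $\bx^{(1)},\ldots,\bx^{(m+1)}$ (together with $\bs$, $\boldf$, and the learner's coins). By averaging, some fixed tuple $(x^{(1)},\ldots,x^{(m+1)})$ achieves at least this gap; the adversary hard-codes that tuple as its advice, queries the oracle only at those $m+1$ fixed points, and runs the learner. No sampling from $\mcD$ is needed. Your remark that the advice ``encodes any nonuniformity of the learner'' and a best fixing of the learner's coins misidentifies where the nonuniformity is actually required---it is the sample points themselves that must be hard-coded.
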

\begin{proof}
    Let $L$ be the promised learner. By the expected accuracy of $L$, if we were to draw a seed $\bs \sim \Unif(\zo^n)$ and, for each $i \in [m+1]$, draw $\bx^{(i)} \iid \mcD$ and set $\by^{(i)} = f_{\bs}(\bx^{(i)})$, then
    \begin{equation*}
        \Prx_{\bh \leftarrow L((\bx^{(1)}, \by^{(1)}), \ldots, (\bx^{(m)}, \by^{(m)}))}[\bh(\bx_{m+1}) = \by_{m+1}] \geq 1-\eps.
    \end{equation*}
    On the other hand, if instead we set the label to $\bz^{(i)} \coloneqq \boldf(\bx^{(i)})$ for a fully random function $\boldf$, then as long as $\bx_{m+1}$ is not contained in $(\bx^{(1)}, \ldots, \bx^{(m)})$, then $\by_{m+1}$ will be independent of the sample $((\bx^{(1)}, \bz^{(1)}, \ldots, (\bx^{(m)}, \bz^{(m)}))$. This case occurs with probability at least $1 - \alpha$, and if it occurs, the learner has expected accuracy $1/2$. Therefore,
    \begin{equation*}
        \Prx_{\bh \leftarrow L((\bx^{(1)}, \bz^{(1)}), \ldots, (\bx^{(m)}, \bz^{(m)}))}[\bh(\bx_{m+1}) = \bz_{m+1}] \leq \alpha + \frac{1}{2} \cdot \paren*{1 - \alpha}= \frac{1}{2} + \frac{\alpha}{2}.
    \end{equation*}
    Let us say the \emph{advantage} of $(x^{(1)}, \ldots, x^{(m+1)})$ to be the expected difference of the above two quantities in expectation over the random seed $\bs$, fully random function $\boldf$, and any internal randomness of the learner. Then, comparing the two inequalities, we have that,
    \begin{equation*}
        \Ex_{\bx^{(1)}, \ldots, \bx^{(m+1)} \iid \Unif(H)}\bracket*{\text{advantage of }(\bx^{(1)}, \ldots, \bx^{(m+1)})} \geq 1-\eps -\paren*{\frac{1}{2} + \frac{\alpha}{2}} = \frac{1}{2} - \eps - \frac{ \alpha}{2}.
    \end{equation*}
    Hence, there exists an explicit choice of $(x^{(1)}, \ldots, x^{(m+1)})$ with advantage at least $\frac{1}{2} - \eps - \frac{\alpha}{2}$. Our nonuniform adversary will use its advice to hard code these choices. It runs the test suggested by the above analysis: It queries its oracle to attain labels $y^{(1)}, \ldots, y^{(m+1)}$, runs $L$ on the sample $(x^{(1)}, y^{(1)}), \ldots, (x^{(m)}, y^{(m)})$ to attain a hypothesis $h$, and then returns $\Ind[h(x^{(m+1)}) = y^{(m+1)}]$. By the prior analysis, this adversary had advantage at least $\frac{1}{2} - \eps - \frac{ \alpha}{2}$.
\end{proof}
To prove hardness in standard PAC, we observe that by \Cref{lem:PRFF-to-hardnes-of-learning}, if $\mcC$ were learnable over the distribution $\Unif(H_n)$, then the PRFF family would not be secure. This uses (1) that restricted to points within $H_n$, our concept class is equivalent to $\mcF$ and (2) that $H_n$ contains superpolynomially many points and the quantity $\alpha$ in \Cref{lem:PRFF-to-hardnes-of-learning} can be upper bounded by $\frac{m}{|H|}$. \qedhere

\begin{remark}[\Cref{thm:computational-body} relative to random oracle]
\Cref{thm:computational-body} has two assumptions, \Cref{conjecture:hard-to-sample} and the existence of one-way functions. Both assumptions are true relative to a random oracle: \Cref{conjecture:hard-to-sample} by \Cref{thm:hard-to-sample-relative-to-random-oracle} and existence of one-way functions because of the standard fact that random oracles are themselves one-way (with high probability). Combining these, we can make \Cref{thm:computational-body} unconditional relative to a random oracle, yielding a computational separation between standard PAC and sample PAC:
\end{remark}
\begin{theorem}[Computational separation of samplabe PAC and standard PAC relative to a random oracle]
    \label{thm:computational-sep-with-random-oracle}
    For every $n \in \N$, with $1 - o(1)$ probability over a random oracle $\bmcO$, there is a concept class $\mcC_{\bmcO}$ with the following properties.
    \begin{enumerate}
        \item Every concept in $\mcC_{\bmcO}$ is computed by a  size-$O(n^2)$ oracle circuit $A^{\bmcO}$.
        \item $\mcC_{\bmcO}$ is easy to learn in samplable PAC: For every $s \geq n, \eps > 0$, there is a time $\poly(s/\eps)$ learner (that doesn't even use the oracle $\bmcO$) that learns $\mcC_{\bmcO}$ to accuracy $1-\eps$ on all size-$s$ oracle generators $G^{\bmcO}$.     
        \item $\mcC_{\bmcO}$ is hard to learn in standard PAC: There is distribution under which learning $\mcC_{\bmcO}$ to constant accuracy requires exponential time, even against algorithms that have access to $\bmcO$.
    \end{enumerate}
\end{theorem}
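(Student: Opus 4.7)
The plan is to instantiate the template of~\Cref{thm:computational-body} relative to $\bmcO$ by deriving both required primitives---an explicit evasive set and a pseudorandom function family secure against oracle-aided adversaries---from the same random oracle. To keep these two uses of $\bmcO$ independent, I would partition the oracle via tag bits: write $\bmcO = (\bmcO_0,\bmcO_1)$ where the two halves correspond to whether the leading tag bit is $0$ or $1$. The first half $\bmcO_0$ is used to define the explicit evasive set $H_{\bmcO_0}\subseteq \zo^n$ given by~\Cref{thm:hard-to-sample-relative-to-random-oracle}, computed by a size-$O(n^2)$ oracle circuit $C^{\bmcO_0}$. The second half is used to define the PRFF directly: $f_s(x) \coloneqq \bmcO_1(s,x)$ for seed $s\in \zo^n$. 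Each function $f_s$ is computed by a size-$O(n)$ oracle circuit (just routing $(s,x)$ into one oracle call), and the concept class is
\[ \mcC_{\bmcO} \coloneqq \{ (f_s)_{H_{\bmcO_0}} \colon s \in \zo^n \}. \]
Each concept is realized by AND-ing the size-$O(n^2)$ membership circuit $C^{\bmcO_0}$ with the size-$O(n)$ PRFF circuit, giving item~(i).

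For item~(ii), the samplable-PAC upper bound, I would invoke~\Cref{lem:evasiveness implies learnability intro} directly on $H_{\bmcO_0}$. By~\Cref{thm:hard-to-sample-relative-to-random-oracle}, with probability $1-o(1)$ over $\bmcO$ the set $H_{\bmcO_0}$ $(1/s,s^{1.01})$-evades every size-$s$ oracle generator $G^{\bmcO}$, for all $n\le s\le 2^{O(n)}$. The memorization learner from the proof of~\Cref{lem:evasiveness implies learnability intro} only stores the labeled examples it receives and outputs a DNF hypothesis; in particular it never needs to query $\bmcO$, so its behavior is identical in the relativized setting. Plugging in parameters gives a time-$\poly(s/\eps)$ learner that achieves accuracy $1-\eps$ under every size-$s$ oracle generator, exactly as claimed.

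For item~(iii), the hard distribution is $\Unif(H_{\bmcO_0})$, which need not be samplable---this is precisely the point. I would prove hardness by relativizing~\Cref{lem:PRFF-to-hardnes-of-learning}: any time-$T$ learner for $\mcC_{\bmcO}$ under $\Unif(H_{\bmcO_0})$ that uses $m$ samples and achieves accuracy $1-\eps$ yields a nonuniform oracle distinguisher $A^{\bmcO}$ of size $\poly(T)$ breaking the PRFF $\{f_s\}$ with advantage at least $\tfrac{1}{2}-\eps-\tfrac{\alpha}{2}$, where $\alpha\le m/|H_{\bmcO_0}|$. Since $|H_{\bmcO_0}|\ge 2^{\Omega(n)}$ by~\Cref{thm:hard-to-sample-relative-to-random-oracle}, $\alpha$ is negligible for any $m$ of interest. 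On the other hand, a standard hybrid/birthday-style argument shows that with probability $1-o(1)$ over $\bmcO$, no oracle adversary of size $2^{o(n)}$ distinguishes $\bmcO_1(\bs,\cdot)$ from a uniformly random function with non-negligible advantage---this is just the fact that a random oracle is an exponentially-secure PRFF against oracle adversaries who do not know the seed. Combining gives a $2^{\Omega(n)}$ time lower bound for learning $\mcC_{\bmcO}$ under $\Unif(H_{\bmcO_0})$.

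The plan concludes by a union bound: the event that $H_{\bmcO_0}$ satisfies the conclusion of~\Cref{thm:hard-to-sample-relative-to-random-oracle} and the event that $\bmcO_1$ yields a secure oracle PRFF each hold with probability $1-o(1)$ over the random $\bmcO$, so both hold simultaneously with probability $1-o(1)$, yielding the theorem. The main obstacle I anticipate is the second event---formalizing oracle-relative PRFF security with the right quantitative dependence so that the superpolynomial bound of~\Cref{thm:computational-body} can be upgraded to the $2^{\Omega(n)}$ bound claimed here (cf.~\Cref{remark:poly-vs-exp}); one must carefully argue that the distinguisher constructed by the reduction of~\Cref{lem:PRFF-to-hardnes-of-learning} still has size and query count well below $2^n$ even after accounting for the nonuniform advice and the oracle calls used to simulate $\mcC_{\bmcO}$.
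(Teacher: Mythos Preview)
Your proposal is correct and follows essentially the same template as the paper: instantiate the two assumptions of \Cref{thm:computational-body} relative to a random oracle (evasive sets via \Cref{thm:hard-to-sample-relative-to-random-oracle}, PRFFs from the oracle itself) and rerun the proof. The paper's own argument is just the one-paragraph remark preceding the theorem, so your write-up is in fact more detailed than what appears there.

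The one place you deviate slightly is in how you obtain the PRFF: the paper invokes the standard fact that a random oracle is one-way and implicitly appeals to the OWF$\Rightarrow$PRFF chain, whereas you define $f_s(x)=\bmcO_1(s,x)$ directly. Your route is cleaner and, as you note, is what actually delivers the exponential (rather than merely superpolynomial) hardness in item~(iii), in line with \Cref{remark:poly-vs-exp}. Your tag-bit partition of $\bmcO$ is a reasonable hygiene step, though not strictly necessary: the evasiveness argument in \Cref{thm:hard-to-sample-relative-to-random-oracle} already lets the generator query the full oracle, and PRFF security against $\bmcO$-aided adversaries holds because the hidden seed $\bs$ must be guessed before any informative query to $\bmcO_1(\bs,\cdot)$ can be made.
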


\subsection{Proof of the computational analogue of~\Cref{thm:separations within samplable PAC intro}}
\label{sec:computational-within-samplable}

\begin{theorem}[Computational analogue of \Cref{thm:separations within samplable PAC intro}]
    \label{thm:computational analogue of separations within samplable PAC formal}
    For every $\delta>0$ and $t\ge n$, there exists a concept class $\mathcal{C}$ of $\poly(t)$-size circuits such that
    \begin{enumerate}
        \item Assuming the existence of one way functions, there is a size $(tn \log (t)/\delta)$-distribution over which no polynomial-time algorithm can learn $\mathcal{C}$ to constant error.
        \item For any distribution generated by a circuit of size $t\le s$ and $\eps  \geq 4\delta$, there is an efficient algorithm that learns $\mathcal{C}$ to error $\eps$ using $O(t\log (t) /\eps^2)$ many samples.
    \end{enumerate}
\end{theorem}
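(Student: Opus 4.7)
The plan is to follow the template of \Cref{thm:computational-body}, substituting the explicit pseudorandom construction from \Cref{lem:PRG evasive formal} for the conjectural evasive set and building the hard distribution by rejection sampling against an efficient membership circuit. First, I would apply \Cref{lem:PRG evasive formal} with the given $\delta$ and $t$ to obtain a $\delta$-dense set $H \sse \zo^n$ with a $\poly(t)$-size membership circuit that $(\eps, O((s\log s)/\eps))$-evades all size-$s$ distributions for every $s \le t$ and $\eps \ge 4\delta$. Let $\mcF = \{f_s\}$ be a PRFF family secure against non-uniform adversaries (which exists assuming one-way functions by \Cref{def:prff}) and define $\mcC \coloneqq \mcF_H$. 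Each concept in $\mcC$ is the conjunction of a $\poly(n)$-size PRFF circuit with the $\poly(t)$-size membership circuit for $H$, so $\mcC$ is a class of $\poly(t)$-size circuits as required.

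The easy direction (part 2) is then immediate from evasiveness: by \Cref{lem:evasiveness implies learnability intro}, for any size-$s$ distribution with $s \le t$ there is a polynomial-time algorithm learning $\mcC$ to error $O(\eps)$ using $O((s\log s)/\eps^2) \le O((t\log t)/\eps^2)$ samples. For the hard direction (part 1), I would construct a distribution $\mcD^\star$ by rejection sampling: its generator draws $k = \Theta(\log(n)/\delta)$ independent uniform $n$-bit strings, uses the membership circuit to identify the first one in $H$, and outputs that string (with a fixed default on failure). Its circuit has size $O(k \cdot \poly(t)) = O(\poly(t)/\delta)$, matching the claimed $O(tn\log(t)/\delta)$ up to polylog factors; conditional on success the output is exactly $\Unif(H)$, and the failure probability is at most $(1-\delta)^k \le 1/\poly(n)$.

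Hardness of learning $\mcC$ over $\mcD^\star$ then follows by invoking \Cref{lem:PRFF-to-hardnes-of-learning}: since $|H| \ge \delta \cdot 2^n$ is superpolynomial in the regime where the theorem is non-vacuous, the collision quantity $\alpha$ from that lemma is negligible (the uniform portion contributes $O(m^2/|H|)$ and the failure default contributes $O(m^2/\poly(n)^2)$), so a polynomial-time learner for $\mcC$ over $\mcD^\star$ achieving any constant error below $1/2$ would yield a non-uniform adversary with non-negligible advantage against $\mcF$, contradicting PRFF security.

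The main obstacle will be that $\mcD^\star$ is only approximately $\Unif(H)$. I would handle this by having the PRFF adversary relabel every $\bx \notin H$ with $0$ before passing the sample to the learner, so that the labels exactly match $(f_s)_H$; this preprocessing is efficient because membership in $H$ is checkable by a $\poly(t)$-size circuit, and it preserves the correctness of the reduction. Any polylog overhead incurred by rejection sampling and relabeling is absorbed into the $O(tn\log(t)/\delta)$ bound (or can be eliminated by rescaling $t$ by a polylog factor).
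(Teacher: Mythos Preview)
Your proposal is correct and follows essentially the same approach as the paper: use the explicit partially-evasive set from \Cref{lem:PRG evasive formal}, restrict a PRFF to it, invoke \Cref{lem:evasiveness implies learnability intro} for the easy direction, and build the hard distribution by rejection sampling against the membership circuit, then appeal to \Cref{lem:PRFF-to-hardnes-of-learning}. The paper's version is marginally simpler in that it uses only $O(1/\delta)$ rejection trials with a hardcoded default $x^\star \in H$ on failure (so the failure probability is merely a small constant $c \le 0.1$, giving $\alpha \le m/|H| + c$, which already suffices for constant accuracy); this avoids both your extra $\log n$ factor in the generator size and your relabeling step, though your variant works just as well.
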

The proof of \Cref{thm:computational analogue of separations within samplable PAC formal} is essentially the same of \Cref{thm:computational-body}, so we just sketch the differences. The main difference is rather than using sets $H_n$ given by \Cref{conjecture:hard-to-sample}, it uses sets $H_n$ constructed in \Cref{lem:PRG evasive formal}. This concept class still contains only polynomial-sized circuits, because the sets constructed in \Cref{lem:PRG evasive formal} are explicit. It is also easy to learn over distributions generated by size $t \leq s$ circuits because the sets are evasive and so we can once again apply \Cref{lem:evasiveness implies learnability intro}.

For the hardness of learning over size $S \coloneqq (tn \log t/\delta)$-distributions, we need to show there is a size-$S$ distribution that approximately samples $\Unif(H)$. For this, we use rejection sampling: The sampler draws $O(1/\delta)$ many uniform elements of $\zo^n$ and returns the first falling within $H$ (each of these checks can be done by a circuit of size $O(tn \log t)$ by the explicitness of $H$). If none fall within $H$ (which occurs with probability $\leq 0.1$), then we just output an arbitrary $x^{\star} \in H$ that was memorized. This gives a size-$S$ distribution $\mcD$ outputs a uniform element of $H$ with probability $1-c$ and otherwise $x^{\star}$ for $c \leq 0.1$. We once again apply \Cref{lem:PRFF-to-hardnes-of-learning} (this time with the bound $\alpha \leq \frac{|H|}{m} +c$) which implies the hardness of learning over $\mcD$ to constant accuracy.

\section{Online learning against efficient adversaries}
\label{sec:online}

In this section we discuss the online analogue of samplable PAC---online learning against an {\sl efficient} adversary.
First, we formally define the online learning task and the adversary.
\begin{definition}[Online learning and mistakes]
An online learner
$A$
for a concept class $\mcC$ is an algorithm that 
receives an unlabeled input $x^{(t)}$ from an
adversary $\Adv$ at each round $t$ and outputs a label $y^{(t)}$.
The adversary then outputs a label $f(x^{(t)})$ such that the set of labeled examples $\{(x^{(u)}, f(x^{(u)})): u \le t\}$
is consistent with some function $f \in \mcC$.

The \emph{mistake bound} of $A$ with respect to $\Adv$ is
\[\mistakes(A, \Adv) = \sum_{t = 1}^\infty \Ind[y^{(t)} \ne f(x^{(t)})].\]

The mistake bound of the concept class $\mcC$ is 
\[\max_{\Adv}\min_A \{ \mistakes(A, \Adv)\}.\]
\end{definition}

Our aim is to show a separation between an inefficient adversary and an efficient one in terms of mistakes it can
produce in a learner.
\begin{definition}[Efficient adversary]
A size-$s$ adversary
for a concept class $\mcC$ is a size-$s$ circuit that runs every round.
The adversary maintains a state $\alpha$ and updates it each round; 
the state $\alpha^{(t)}$ is part of the input for round $t$,
and the state $\alpha^{(t+1)}$ is part of the output.
It also takes as input a label $y^{(t-1)} \in \zo$ representing the learner's last label, and outputs the true 
label $f(x^{(t-1)})$ and a new unlabeled input $x^{(t)}$.\footnote{For completeness, we will say that in every round the adversary acts first. In round one, the adversary takes no input and outputs $x^{(1)}$, and the learner takes only $x^{(1)}$ as input. The initial state for both is $\vec{0}$.}
The set of labeled examples $\{(x^{(u)}, f(x^{(u)}):u < t\}$ must be consistent with some $f \in \mcC$. 
\end{definition}

We also define an efficient learner, parameterized by the size of its memory.
\begin{definition}[Efficient online learner]
A size-$s$ learner for a concept class $\mcC$ is an algorithm that maintains an $s$-bit state $\alpha$ and runs in $\poly(s,n)$ time per round. 
It takes as input the adversary's label of the previous input, $f(x^{t-1})$, and 
a new unlabeled input $x^{(t)}$, and outputs a label $y^{(t)}$.
\end{definition}

To adhere to standard notions of efficiency in online learning, we represent the 
learner as a time-$s$ Turing machine. In contrast, the adversary is represented as a circuit, as it is the online analogue of the size-$s$ distribution in the PAC setting. Upper bounds in this setting are harder to come by than if we forced the adversary to be uniform or allowed the learner to be nonuniform, which only strengthens our results.



\subsection{Online analogue of~\Cref{thm:statistical intro}}

\begin{theorem}[Online analogue of \Cref{thm:statistical intro}]
\label{thm:statistical-online}
There exists a concept class $\mcC$ over $\zo^n$ such that:

\begin{itemize}
\item[(i)] 
There exists an adversary $\Adv$ such that for any learner $A$,
\[\mistakes(A, \Adv) \ge 2^n/2.\]
\item[(ii)] For any $s \ge n$ and adversary $\Adv_s$ of size $s$ for $\mcC$, there exists a size-$O(ns \log s)$ 
learner $A_s$ such that
\[\mistakes(A_s, \Adv_s) = O(s \log s).\]
\end{itemize}
\end{theorem}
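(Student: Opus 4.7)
The plan is to mirror the statistical separation of \Cref{thm:statistical formal}, using the same ingredients: the density-$1/2$ evasive set $H \sse \zo^n$ from \Cref{lem:non-explicit intro} and the concept class $\mcC \coloneqq \mcA_H$ of all Boolean functions that vanish off $H$.

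For (i), every labeling of $H$ is realized by some $f \in \mcA_H$, so $\mcA_H$ has Littlestone dimension $|H|$. An adversary that hard-codes an enumeration of $H$ and, in each round, presents the next element of $H$ labeled opposite to the learner's guess, forces $|H| \ge 2^n/2$ mistakes. This adversary has size $\Theta(n \cdot 2^n)$, which is unconstrained in part~(i).

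For (ii), the learner $A_s$ (which is allowed to depend on $\Adv_s$) plays ``predict $0$ by default, and memorize on any mistake.'' Because $f_H$ vanishes off $H$ and a memorized $(x, f(x))$ pair is never mispredicted again, the only mistakes occur on the first visit to an $x \in H$ with $f(x) = 1$. Hence the total mistake count equals the number of distinct such $x$ ever emitted by the adversary, and the learner's state size is $O(n \cdot \mistakes)$; it therefore suffices to prove $\mistakes(A_s, \Adv_s) = O(s\log s)$. To do so, I would turn the deterministic online interaction into a samplable distribution and invoke the evasiveness of $H$. Since both $\Adv_s$ and the memorizing learner are deterministic finite-state machines, the joint trajectory is eventually periodic; let $T$ be the round of the final new mistake. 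Define the distribution $\mcD$ that simulates the interaction for $T$ rounds and outputs $x^{(t)}$ for a uniformly random $t \in [T]$. Each of the $M \coloneqq \mistakes(A_s, \Adv_s)$ distinct $H$-points that cause mistakes contributes mass at least $1/T$ to $\mcD$, so for any set $H^\star$ of size at most $k$, the mass of $\mcD$ on $H \setminus H^\star$ is at least $(M - k)/T$. Applying the $(\eps,k)$-evasiveness of $H$ from \Cref{lem:non-explicit intro} with $k = O((s'\log s')/\eps)$, where $s'$ is the generator size of $\mcD$, yields $M < k + \eps T$, and optimizing $\eps$ will give the desired $O(s\log s)$ bound provided that $s' = \poly(s)$.

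The main obstacle is precisely this size requirement on $\mcD$: a naive unrolling costs $\widetilde{O}((s + nM)\, T)$ gates, and $T$ can a priori be as large as the joint state space of size $2^{s + O(nM)}$. The plan is to exploit that the learner's memory grows by exactly one entry per mistake, so the full trajectory decomposes into at most $M$ transient stretches (between consecutive mistakes) plus one final periodic stretch on which no new mistakes occur. A compact sampler that picks a random segment and then a random round within it should keep $s' = \poly(s, n, M)$; after plugging in $M = \Theta(s\log s)$, this closes the argument and matches the stated $O(ns\log s)$ size of the learner.
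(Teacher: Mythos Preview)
Your proof of part~(i) is fine and matches the paper. The gap is entirely in part~(ii): the reduction from the online adversary to a samplable distribution does not close.

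First, a parameter issue: applying \Cref{lem:non-explicit intro} with density $\delta = 1/2$ forces $\eps \ge 4\delta = 2$, which is vacuous. Any $\delta$ large enough to give $|H| \ge 2^n/2$ will kill the evasiveness guarantee, so you cannot simply reuse the set from \Cref{lem:non-explicit intro}.

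More fundamentally, even granting a set that is both $1/2$-dense and evasive, the argument does not go through. You derive $M < k + \eps T$ with $k = O((s'\log s')/\eps)$, and you need both $s'$ and $T$ to be controlled. For the sampler that ``picks a random segment and then a random round within it,'' each segment can have length up to $2^{\Theta(s)}$ (the adversary's state space), and there is no way to simulate a uniformly random step within such a segment by a $\poly(s)$-size circuit: iterating the transition $t$ times costs $\Theta(ts)$ gates, and repeated squaring of a size-$s$ circuit blows up to size $\Theta(2^j s)$ after $j$ squarings. So $s'$ is not $\poly(s)$. And even if $s'$ were $\poly(s)$, optimizing $\eps$ in $M < O(s'\log s')/\eps + \eps T$ only yields $M = O(\sqrt{s'\log s' \cdot T})$, which is useless when $T$ is exponential. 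In short, evasiveness against \emph{distributions} does not transfer to evasiveness against \emph{sequential} adversaries via the sampler you describe.

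The paper avoids this entirely by \emph{not} reusing \Cref{lem:non-explicit intro}. It runs a fresh probabilistic argument over a uniformly random $\bH$ (each point included independently with probability $1/2$), tailored to the online setting. The key observation is that since both the adversary $\Adv$ and the default-zero learner are deterministic, fixing $\Adv$ fixes the entire interaction transcript, and in particular fixes the set $T_{\Adv}$ of points the adversary ever labels~$1$---\emph{before} $\bH$ is sampled. The adversary is valid for $\mcA_{\bH}$ iff $T_{\Adv} \subseteq \bH$, which has probability exactly $2^{-|T_{\Adv}|}$. A union bound over the $s^{O(s+n)}$ size-$s$ adversary circuits then shows that with high probability no valid size-$s$ adversary has $|T_{\Adv}| \ge c(s+n)\log s$; a further union bound over all $s$ (up to $n^2\cdot 2^n$, beyond which the learner can memorize everything) finishes the argument. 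This bypasses distributions altogether and gives the bound directly.
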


Throughout this section and others, we will refer to this simple deterministic learner.
It outputs 0 by default unless it has seen the input before, and can store 
$O(s/n)$ previously-seen distinct examples in its $s$-bit memory.
It is the online analogue of the PAC learner from \Cref{lem:evasiveness implies learnability intro}.
\begin{definition}[Default-zero learner of state size $s$]
The learner's state $\alpha$ is a queue of $O(s/n)$ inputs.
On input $f(x^{(t-1)}), x^{(t)}$, it does the following:
\begin{enumerate}
    \item \textbf{Update:} If $f(x^{(t-1)}) = 1$ and $x^{(t-1)}$ is not in the queue, add $x^{(t-1)}$ to the queue and remove the oldest entry if the queue is full.
    \item \textbf{Output:} If $x^{(t)}$ is in the queue, output $y^{(t)} = 1$. Otherwise, output $y^{(t)} = 0$.
\end{enumerate}
\end{definition}
\begin{proof}[Proof of \Cref{thm:statistical-online}]

First we prove that a random $\bH$ satisfies $(i)$ with probability 1/2.
Let $\bH \sse \zo^n$ be a subset generated by including each $x$ independently with probability $1/2$. 
Consider $\mcF_{\bH}$, the set of 
all functions restricted to $\bH$ (as in \Cref{subsec:overview}).
The concept class $\mcC$ referred to in \Cref{thm:statistical-online} will be $\mcF_H$ for one such 
$H$, which will be shown to exist by the probabilistic method.
For any $H$, the VC dimension of $\mcA_H$ is $|H|$, as $\mcA_H$ shatters $H$.
We recall the well-known fact that the Littlestone dimension is at least the VC dimension, and that the Littlestone dimension characterizes mistake bound:
\[\max_{\Adv}\min_A\{\mistakes(A,\Adv)\} = \mathrm{LDim}(\mcA_H) \ge \VCdim(\mcA_H) = |H|.\]
We have $\Prx_{\bH}[|\bH| \ge 2^n/2] \geq 1/2$ by symmetry of the distribution $\mathrm{Bin}(2^n, 1/2)$, which is the distribution of $|\bH|$.

Now we prove that a random $H$ satisfies (ii) with probability 4/5.
Let $s$ be the size bound of the adversary.
Let the learner $A$ be the default-zero learner of size $s$.
The learner is deterministic, so fixing the adversary $\Adv$ 
fixes the sequence of possible labeled examples $(x^{(t)}, f(x^{(t)}))$
that the adversary can output when running with the learner $A$.
We will associate with this adversary a set $T_{\Adv}$, where $x \in T_{\Adv}$ if the adversary 
ever outputs $(x, 1)$. 
The adversary is \emph{valid} for $\mcA_H$ iff $T_{\Adv} \subseteq H$.

Then over the randomness of $\bH$, we have 
\[\Prx_{\bH}[\Adv\text{ is valid for $\mcA_{\bH}$}] = 2^{-|T_{\Adv}|}.\]
There are $s^{O(n+s)}$ Boolean circuits of size $s$, so we have 
\[\Prx_{\bH}\Brac{\text{$\exists$ valid size-$s$ adversary for $\mcA_{\bH}$ such that $|T_{\Adv}| \ge k$}} \le 2^{-k} \cdot s^{O(n+s)}.\]
Then there is some constant $c$ such that for $k \ge c(s+n) \log (s) $, we have 
\[\Prx_{\bH}\Brac{\text{$\exists$ valid size-$s$ adversary for $\mcA_{\bH}$ such that $|T_{\Adv}| \ge k$}} \le 15 n^{-2} \cdot 2^{-n}.\]
We now union bound over all choices of $s \le n^2\cdot 2^n$; it is not necessary to handle adversaries
larger than this because the learner can be as large as the adversary,
and the amount of memory required for the learner to store all $2^n$ points in memory is $O(n \cdot 2^n)$.
We then have 
\[\Prx_{\bH}\Brac{\text{$\exists$ valid size-$(n \cdot 2^n)$ adversary for $\mcA_{\bH}$ such that $|T_{\Adv}| \ge k$}} < \lfrac 15.\]
Thus for at least $4/5$ of possible choices of $H$, all valid size-$s$ adversaries
label at most $k(s) = O((s+n)\log s)$ entries with 1.
A default-0 learner with $kn$ space, then, has space to store all such entries and thus 
only makes a mistake the first time each of them is seen. Thus the mistake bound is $O((s+n)\log s)$.

To conclude the proof, since more than $4/5$ of possible 
choices of $H$ satisfy condition (ii) and $1/2$ satisfy condition (i),
there is a choice of $H$ that satisfies both, and the theorem holds
for the corresponding $\mcF_H$.
\end{proof}

\subsection{Online analogue of~\Cref{thm:separations within samplable PAC intro}}
\begin{theorem}[Online analogue of \Cref{thm:separations within samplable PAC intro}]
\label{thm:thm3-online}
For every $s \ge n$, there exists a concept class $\mcC_s$ of  such that:
\begin{itemize}
\item[(i)] There exists an adversary $\Adv_S$ of size $S = s \log s \cdot \poly(n)$ such that for every learner $A$,
\[\mistakes(A, \Adv) = 2^n/4.\]
\item[(ii)] For any adversary $\Adv_s$ of size $s$, there exists a size-$O(ns \log s)$ learner $A_s$ such that
\[\mistakes(A_s, \Adv_s) \le O(s \log s).\]
\end{itemize}
\end{theorem}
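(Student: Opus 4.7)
The proof will mirror \Cref{thm:statistical-online} with two modifications: (a) replace the fully random set $\bH$ by a \emph{pseudorandomly constructed} $\bH_s$ so that $H_s$-membership is computable by a small circuit, and (b) replace the unbounded adversary in part (i) by a size-$S$ adversary that enumerates $H_s$ using this membership circuit. Concretely, for each $s \ge n$, I would take $\bH_s$ to be a $\delta$-biased, $d$-wise independent random subset of $\zo^n$, constructed via the PRG from \Cref{fact:explicit prgs for rvs}, with $\delta = 1/4$ and $d = c(s+n)\log s$ for a sufficiently large constant $c$. This yields a membership circuit for $\bH_s$ of size $O(sn\log^2 s)$, and by a Cantelli-style bound as in \Cref{cor:cantelli}, $|\bH_s| \ge 2^n/4$ with probability at least $1/5$. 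The concept class will be $\mcC_s \coloneqq \mcA_{H_s}$.

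For part (i), the size-$S$ adversary maintains an $n$-bit counter $i$ in its state together with a hardcoded $H_s$-membership circuit. At each round it advances $i$ to the next value for which membership accepts, presents $x = i$, observes the learner's prediction $y$, and responds with the true label $1 - y$. Since each $x \in H_s$ is presented at most once, the labels are automatically consistent with some function in $\mcA_{H_s}$, and every such round forces a mistake, for a total of $|H_s| \ge 2^n/4$ mistakes. The adversary's circuit is dominated by the $H_s$-membership test, giving $S = O(sn\log^2 s) = s\log s \cdot \poly(n)$, as required.

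For part (ii), I would take $A_s$ to be the default-zero learner with queue size $k = c(s+n)\log s$ (so state size $O(kn) = O(sn\log s)$). As in the proof of \Cref{thm:statistical-online}, for any size-$s$ adversary $\Adv$ the set $T_{\Adv}$ of inputs ever labeled $1$ when interacting with this fixed learner is a deterministic subset of $\zo^n$ depending only on $\Adv$'s circuit (not on $\bH_s$, since the circuit itself is fixed and does not take $\bH_s$ as input). Validity of $\Adv$ for $\mcA_{\bH_s}$ is precisely the event $T_{\Adv} \subseteq \bH_s$, and by $d$-wise independence with $d \ge k$, for any fixed $T$ of size $k$ we have $\Prx_{\bH_s}[T \subseteq \bH_s] = \delta^k = 4^{-k}$. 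Union bounding over the $s^{O(n+s)}$ many size-$s$ circuits (and truncating each $T_{\Adv}$ to an arbitrary size-$k$ subset when $|T_{\Adv}| > k$), the probability that some valid size-$s$ adversary has $|T_{\Adv}| \ge k$ is at most $s^{O(n+s)} \cdot 4^{-k} = o(1)$ for large enough $c$. Combined with the largeness bound for $\bH_s$, the probabilistic method yields an $H_s$ satisfying both properties simultaneously; for such $H_s$, every valid size-$s$ adversary has $|T_{\Adv}| < k$, so the queue never overflows, each element of $T_{\Adv}$ causes at most one mistake, and the total mistake bound is $|T_{\Adv}| = O(s\log s)$. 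The main technical subtlety, exactly as in \Cref{lem:PRG evasive formal}, is the parameter balancing: $(\delta, d)$ must be chosen so that the union-bound argument in part (ii) goes through while keeping the membership circuit small enough for the adversary in part (i) to fit within the required budget $S$.
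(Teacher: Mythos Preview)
Your proposal follows essentially the same route as the paper: a PRG-generated $d$-wise independent set $H_s$, the concept class $\mcA_{H_s}$, the default-zero learner for part (ii), and the same union bound over size-$s$ adversary circuits using $\Pr[T_{\Adv}\subseteq \bH_s]\le \delta^{\min(|T_{\Adv}|,d)}$. The parameters differ cosmetically (you use bias $1/4$, the paper uses $1/2$), but the argument for (ii) is identical.

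There is one gap in your adversary for part (i). You have it ``advance $i$ to the next value for which membership accepts'' within a single round, and then claim the circuit is dominated by a single $H_s$-membership test. But locating the next element of $H_s$ may require testing many candidates in one circuit evaluation --- the gaps between consecutive elements of $H_s$ are not bounded by a constant --- so the size bound $O(sn\log^2 s)$ does not follow without an additional argument controlling the maximum gap. The paper sidesteps this entirely: its adversary simply presents $x^{(t)}=t$ for every $t\in[2^n]$ in order, outputs label $0$ when $x^{(t)}\notin H_s$ and $\lnot\,y^{(t)}$ when $x^{(t)}\in H_s$. This still forces $|H_s|\ge 2^n/4$ mistakes and needs exactly one membership test per round, so the size bound is immediate. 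Your version is salvageable (e.g., show the maximum gap is $O(n)$ with high probability under $d$-wise independence with $d\ge\Omega(n)$, which the $\poly(n)$ slack in $S$ absorbs), but the paper's enumeration is the cleaner construction.
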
	

%

\begin{proof}
As in \Cref{thm:separations within samplable PAC intro}, 
the set $\bH_s$ will be generated pseudorandomly from a $1/2$-biased $d$-wise independent PRG
(see \Cref{fact:explicit prgs for rvs}), for $d = \Theta(s \log s)$.
The concept class will be the class $\mcA_{H_s}$ for some $H_s$ in the support of $\bH_s$; we will build an
efficient adversary for it.

We cite from \Cref{fact:explicit prgs for rvs} that the size of the circuit computing the generator for $\bH_s$ is $S = O(n s \log s)$. 
We define $\Adv_S$ to do the following:
\begin{itemize}
\item For $t \le 2^n$, let $x^{(t)}$ be the binary expansion of $t$ (the adversary keeps a counter and stops after $2^n$ rounds).
\item On round $t$, read the input $y^{(t)}$ from the learner. Let
\[f(x^{(t)}) = \begin{cases} 0 &\text{if\ } x^{(t)} \not\in H_s \\
\lnot\, y^{(t)} &\text{if\ } x^{(t)} \in H_s
\end{cases}.\]
\item Output $f(x^{(t)})$ and $x^{(t+1)}$ to the learner.
\end{itemize}
The circuit size of the $n$-bit counter is $\poly(n)$, so the total size of the adversary is $s \log s \cdot \poly(n)$.
Clearly the mistake bound of any learner relative to this adversary is $\ge |H|$, since the adversary forces
a mistake on every member of $H$.
To show that $|H|$ is sufficiently large we cite \Cref{cor:cantelli} with 
$\delta = 1/2$, which gives that $|\bH_s| \ge 2^n/4$ with probability strictly greater than 1/5.
Therefore (i) holds with probability $> 1/5$ over the distribution of $\bH_s$.

The proof that (ii) holds with probability $> 4/5$ is nearly identical to that of (ii) in 
\Cref{thm:statistical-online}.
The only difference is that for $d$-wise independent $\bH_s$, we have 
\[\Prx_{\bH_s}[\Adv\text{ is valid for }\mcA_{\bH_s}] \le 2^{-\min(|T_{\Adv}|, d)}\]
instead of 
\[\Prx_{\bH_s}[\Adv\text{ is valid for }\mcF_{\bH_s}] \le 2^{-|T_{\Adv}|}.\]

Since $d = \Theta(s \log s)$, it is still the case that there is some constant $c$ such that for 
$k \ge c(s+n)\log s$,
\[\Prx_{\bH_s}\Brac{\text{$\exists$ valid size-$s$ adversary for $\mcA_{\bH_s}$ such that $|T_{\Adv}| \ge k$}} > \lfrac 45 n^{-2} \cdot 2^{-n}.\]
The rest of the proof proceeds identically to the proof of \Cref{thm:statistical-online}.
\end{proof}

\subsection{Online analogues of~\Cref{key assumption intro} and~\Cref{thm:computational intro}}
In this section, we prove a conditional computational separation between online learning with an efficient adversary versus an inefficient one.
We introduce a variation of the mistake bound, which we call the mistake rate.
\begin{definition}[Mistake rate]
The \emph{mistake rate} of $A$ with respect to an adversary $\Adv$ and a convergence time $k$ is defined as
\[\mrate_k(A, \Adv) = \max_{T \ge k} \Brac{\frac 1T \cdot \sum_{t = 1}^T \Ind[y^{(t)} \ne f(x^{(t)})]}.\]
\end{definition}
The mistake rate is a nonstandard way to measure the error of an online algorithm, but it is the natural
online analogue of an $(\eps, k)$-hitting distribution,
as the convergence time $k$ is determined by the number of points an efficient algorithm can memorize. 
Informally, we show that the adversary can force at most $\eps k$ mistakes due to memorization, and other than those $\eps k$ mistakes, will only be able to force an $\eps$-fraction of rounds to have a mistake.

To do so, we give an online analogue of \Cref{key assumption intro}. In place of size-$s$ distributions, we care about the sequence of points a size-$s$ online adversary can produce. 

\begin{conjecture}[Online analogue of \Cref{key assumption intro}]
\label{conj:online-assumption}
There exist sets $\{H_n \subseteq \zo^n\}_{n \in \N}$ satisfying the following.
\begin{enumerate}
\item Explicit: For every $n \in \N$, membership in $H_n$ is computed by a circuit of size $\poly(n)$.
\item Evasive: There is some constant $c$ such that for every polynomial $p(n)$ and sufficiently large $n$, 
the first $p(n)^{c+1}$ rounds produced by a size-$p(n)$ 
adversary with 
a default-zero learner of size $p(n)^c$
contain at most $p(n)^{c}$ distinct elements of $H_n$.
\item Large: $H_n$ has superpolynomial size.
\end{enumerate}
\end{conjecture}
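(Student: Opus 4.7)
The plan is to deduce the online conjecture from the PAC conjecture~(\Cref{conjecture:hard-to-sample}), using the same explicit family $\{H_n\}$ it guarantees. Explicitness and largeness transfer directly, so the only work lies in the evasiveness requirement. The core idea is that the deterministic interaction of a size-$p$ adversary with the default-zero learner of size $p^c$ defines, over the first $T \coloneqq p^{c+1}$ rounds, a samplable distribution of polynomial circuit size, which a PAC-evasive $H_n$ must already miss.

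Fix a polynomial $p(n)$ and a size-$p$ adversary $\Adv$ running against the default-zero learner $A$ of size $p^c$. Because $\Adv$ and $A$ are deterministic circuits with updating states, there is a circuit $\mcG$ of size $S = O(T \cdot (p + p^c)) \le p^{2c+1}$ that, on input $t \in [T]$, simulates the first $t$ rounds of the interaction and outputs $x^{(t)}$; thus $\mcD \coloneqq \Unif(x^{(1)},\ldots,x^{(T)})$ is a size-$S$ distribution. Applying \Cref{conjecture:hard-to-sample} with its constant $c_0$, there exists $H^\star$ with $|H^\star|\le S^{c_0}$ and $\mcD(H_n\setminus H^\star) < 1/S$. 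Since $\mcD$ is uniform over $T\le S$ indices, the number of $t\in[T]$ with $x^{(t)}\in H_n\setminus H^\star$ is at most $T\cdot \mcD(H_n\setminus H^\star) < T/S \le 1$, hence zero. Every round whose input lies in $H_n$ therefore lies in $H^\star$, so the number of \emph{distinct} $H_n$-elements across the first $T$ rounds is at most $|H^\star|\le S^{c_0}\le p^{(2c+1)c_0}$. Choosing $c$ to satisfy $(2c+1)c_0 \le c$ bounds this by $p^c$, yielding the conjecture.

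The main obstacle is parameter slack: the inequality $(2c+1)c_0 \le c$ admits a solution only when $c_0 < 1/2$, whereas \Cref{conjecture:hard-to-sample} only asserts the existence of some constant $c_0$, and the optimistic $c_0 = 1$ anticipated by the remark following it is still too weak for this reduction. To close this gap I would pursue a direct pseudorandom construction in the spirit of \Cref{lem:PRG evasive formal} tailored to the online setting: draw $H_n$ via a $2^{-r}$-biased $d$-wise independent PRG with $d\approx p^{c+1}$, and union-bound over all size-$p$ adversaries (rather than over generic size-$S$ circuits, which is where the quadratic loss in the exponent originates). Because the adversary--learner simulator's $T$-round input sequence is far more constrained than an arbitrary size-$S$ distribution -- it is generated by $T$ iterated applications of the same size-$p$ next-state function -- a bounded-independence Chernoff bound~(\Cref{fact:SSS}) applied to this iterated structure should plausibly give the $p^c$ bound on distinct hits directly. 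This is where I would invest effort.
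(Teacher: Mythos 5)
You are attempting to prove a statement that the paper itself leaves as a conjecture, and necessarily so: by \Cref{lem:relate-2}, \Cref{conj:online-assumption} implies \Cref{conjecture:hard-to-sample}, which by \Cref{claim:hard-to-prove} implies $\P \neq \NP$. The paper's support for the conjecture consists of a random-oracle construction (\Cref{lem:online-random-oracle}, built on \Cref{lem:oracle-LB}) and the implication just cited---which runs in the \emph{opposite} direction from your reduction: the paper derives the offline conjecture from the online one (a sampler converts to an adversary with only $O(s\log s)$ overhead, \Cref{prop:imp2}), and it never claims the converse. Your attempted converse fails for the parameter reason you yourself flag, and the failure is structural rather than bookkeeping. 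Flattening $T = p^{c+1}$ rounds of the adversary--learner interaction into a generator costs at least the learner's state per round, so $S \approx T\cdot(p+p^{c}) \approx p^{2c+1}$, and \Cref{conjecture:hard-to-sample} applied at size $S$ only forbids hitting beyond a memorized set of size $S^{c_0} = p^{(2c+1)c_0}$; requiring this to be at most $p^{c}$ forces $c_0 \le c/(2c+1) < 1/2$ for some admissible $c$, while even the most optimistic parameter the paper contemplates (memorizing $\approx s/n$ points, i.e.\ $k \approx S/n \approx p^{2c+1}/n$) is still far larger than $p^{c}$. No choice of the online constant $c$ repairs this, because the offline conjecture's memorization allowance scales with the sampler's size, and the sampler here is unavoidably at least (number of rounds) times (learner state) in size, which already dwarfs the target bound on distinct hits.

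Your fallback plan---drawing $H_n$ from a $d$-wise independent PRG with $d\approx p^{c+1}$ and union bounding over size-$p$ adversaries---cannot deliver the conjecture either, for the same reason \Cref{lem:PRG evasive formal} only yields a \emph{partially} evasive set: the membership circuit of an $H_n$ built this way has size growing with $d$, hence with $p$, whereas the conjecture demands one fixed $\poly(n)$-size membership circuit that is evasive against \emph{every} polynomial $p$ simultaneously. What that route does give is the online analogue of the partially-evasive construction, which the paper already proves as \Cref{thm:thm3-online}; it is strictly weaker than \Cref{conj:online-assumption}, and an unconditional proof of the full conjecture by counting would collide with the $\P\neq\NP$ barrier above. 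The realistic targets are the ones the paper pursues: the random-oracle version (\Cref{lem:online-random-oracle}), or a conditional derivation from an offline assumption genuinely stronger than \Cref{conjecture:hard-to-sample} (e.g.\ evasiveness whose memorization budget does not grow with the sampler's size), which you would need to formulate and justify separately.
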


For intuition, think of the adversary-learner pair as a circuit that generates a sequence of points.
In each round, the circuit updates a state and outputs some $x \in H_n$.
The conjecture claims that if the circuit has size $s$, then aside from being able to memorize $\approx s/n$ points, it should not be able to hit $H_n$ at a rate better than $1/s$ over a long sequence.

\begin{theorem}[Online analogue of \Cref{thm:computational intro}]
\label{thm:online-computational}
Assuming \Cref{conj:online-assumption} and that one-way functions exist,
there exists a concept class $\mcC$ of polynomial-size circuits such that:
\begin{itemize}
\item[(i)] There is an adversary $\Adv$ such that for all polynomials $p(n), q(n)$, for every learner $A$ running in time $p(n)$,
		\[\mrate_{q(n)}(A, \Adv) \ge 0.01.\]
\item[(ii)] For any adversary $\Adv_s$ of time $s = \poly(n)$ and $\eps > 0$, there exists a time-$(n \cdot \poly(s/\eps))$ learner $A_s$ such that
\[\mrate_{k}(A_s, \Adv_s) \le \eps.\]
for some $k = \poly(s/\eps)$.
\end{itemize}
\end{theorem}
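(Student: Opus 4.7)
The plan is to adapt the proof of Theorem~\ref{thm:computational-body} to the online setting, substituting Conjecture~\ref{conj:online-assumption} for Conjecture~\ref{conjecture:hard-to-sample} and the default-zero online learner for the PAC memorization algorithm. Set $\mcC = (\mcF_n)_{H_n}$, where $\mcF_n$ is a PRFF secure against non-uniform adversaries and $\{H_n\}_{n \in \N}$ is a family satisfying Conjecture~\ref{conj:online-assumption}. Since membership in $H_n$ and each $f_s \in \mcF_n$ are computable by polynomial-size circuits, $\mcC$ consists of polynomial-size circuits.

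For the upper bound (ii), I would take $A_s$ to be the default-zero learner of size $L = \poly(s/\eps)$, chosen large enough that $L \geq p(n)^c$ for $p(n) \coloneqq \max(s, 1/\eps)$, where $c$ is the constant from Conjecture~\ref{conj:online-assumption}. The padding of $\Adv_s$ to a size-$p(n)$ adversary is trivial. Conjecture~\ref{conj:online-assumption} then guarantees that in the first $T_0 \coloneqq p(n)^{c+1}$ rounds, at most $p(n)^c$ distinct elements of $H_n$ are presented. The default-zero learner is correct on every point outside $H_n$ (the target labels it~$0$) and on every element of $H_n$ it has already memorized; hence its total mistakes in the first $T_0$ rounds are at most $p(n)^c$. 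Choosing $k = p(n)^c/\eps = \poly(s/\eps)$ yields $\mrate_k(A_s, \Adv_s) \leq \eps$ on the range $k \leq T \leq T_0$, and one extends this uniformly to all $T \geq k$ by reapplying the conjecture at a larger polynomial $p'(n) \geq p(n)$.

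For the lower bound (i), let $\Adv$ be the inefficient adversary that first chooses a uniformly random PRFF seed $\bs$ and, in each round $t$, independently samples $\bx^{(t)} \sim \mathrm{Unif}(H_n)$ and reveals the label $f_{\bs}(\bx^{(t)})$; this is consistent with $(f_{\bs})_{H_n} \in \mcC$. Suppose for contradiction that some time-$p(n)$ learner $A$ achieves $\mrate_{q(n)}(A, \Adv) < 0.01$ for some polynomial~$q$. Setting $T = q(n)$, the expected fraction of mistakes in $T$ rounds is less than $0.01$. Since the inputs are i.i.d., the standard online-to-batch conversion---draw $\bt \sim \Unif([T])$ and return the learner's state $h_{\bt}$---produces, in polynomial time, a hypothesis with expected error $< 0.01$ under $\mathrm{Unif}(H_n)$. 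Lemma~\ref{lem:PRFF-to-hardnes-of-learning} with $m = T$ and $\alpha \leq m/|H_n| = o(1)$ (by superpolynomial largeness of $H_n$) then yields a non-uniform polynomial-time distinguisher with constant advantage against $\mcF_n$, contradicting PRFF security.

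I expect the main obstacle to be the parameter matching in the upper bound: Conjecture~\ref{conj:online-assumption} directly bounds only the first $p(n)^{c+1}$ rounds, whereas $\mrate_k$ requires a uniform bound over every $T \geq k$. Arguing uniformly---either by invoking the conjecture at a sequence of polynomial values of $p$ or by observing that the rate of distinct $H_n$-hits is sublinear in $T$ and vanishes as $T$ grows---is the main technical step. Everything else (online-to-batch and the invocation of Lemma~\ref{lem:PRFF-to-hardnes-of-learning}) is essentially a transcription of the argument behind Theorem~\ref{thm:computational-body}.
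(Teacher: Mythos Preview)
Your construction of the concept class and your argument for (i) are essentially the paper's: the paper also reduces to Lemma~\ref{lem:PRFF-to-hardnes-of-learning} via an online-to-PAC conversion (it cites Littlestone's reduction, Fact~\ref{fact:littlestone}, as a black box, whereas you spell out the standard i.i.d.\ online-to-batch argument directly). That part is fine.

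The gap is in (ii), precisely where you flag the ``main obstacle.'' Your proposed fix---reapplying the conjecture at a larger polynomial $p'(n)$---does not work. Conjecture~\ref{conj:online-assumption} ties the learner's memory size to the adversary's size: with a size-$p'(n)$ adversary the learner must be the default-zero learner of size $p'(n)^c$. But $A_s$ must be a \emph{fixed} learner once $s,\eps$ are fixed; you cannot let its memory grow with $T$. And even if you padded the adversary, the conjecture is an asymptotic statement in $n$ for each polynomial $p$, not a statement you can instantiate at an arbitrary value of $T$ for fixed $n$. Your alternative suggestion (that the rate of distinct $H_n$-hits is sublinear and vanishes as $T$ grows) is exactly what needs to be proved, and the conjecture as stated says nothing about rounds beyond $p(n)^{c+1}$.

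The paper's solution is a block argument that you are missing. It partitions the rounds into consecutive blocks of length $s^{c+1}$ and modifies the default-zero learner to \emph{empty its memory} at the start of each block. The key observation is a hardcoding step: whatever sequence a size-$s$ adversary produces in block $i$ (interacting with a freshly-reset learner), a size-$O(s)$ adversary can produce in the \emph{first} block by hardcoding the original adversary's state at the start of block $i$. Hence the conjecture applies to every block, each block contains at most $s^c$ distinct elements of $H_n$, and so the mistake fraction in every block is at most $s^c/s^{c+1} = 1/s \le \eps/2$. This gives the uniform bound over all $T \ge k$ with $k = s^{c+1}$.
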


The proof of this theorem makes use of a reduction from online learning to PAC learning.
The result of \cite{Lit89} relates the mistake bound to the sample complexity of PAC learning.
The relationship between the mistake rate, convergence time, and sample complexity,
as well as the relationship between the time complexities of online and PAC learning, are implicit in the proof.
\begin{fact}[Mistake rate learning implies PAC learning (implicit in \cite{Lit89})]
\label{fact:littlestone}
Let $\mcC$ be online-learnable with mistake rate $\eps$ at convergence time $k$ by a learner that runs in time $s$.
Then $\mcC$ is PAC-learnable with expected error $\le \tfrac 38 + 5\eps$ in time $O(sk + \frac{1}{\eps}\log(k))$ and sample complexity $O(k + \frac{1}{\eps}\log(k))$.
\end{fact}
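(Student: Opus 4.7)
The plan is to convert the online learner $A$ with mistake rate $\eps$ at convergence time $k$ into a PAC learner via the classical Littlestone-style reduction, adapted to the mistake-rate guarantee rather than a mistake bound. Given a target $f \in \mcC$ and distribution $\mcD$, first draw $T = \Theta(k)$ labeled examples $(\bx^{(t)}, f(\bx^{(t)}))$ i.i.d.\ from $\mcD$ and feed them sequentially to $A$. Let $h_t$ denote the implicit hypothesis that $A$ uses to predict on the $t$-th input, so $y^{(t)} = h_t(\bx^{(t)})$. Crucially, $h_t$ depends only on the first $t-1$ samples, so $\bx^{(t)}$ is independent of $h_t$, and the mistake indicator in round $t$ is a Bernoulli with mean $\error_{\mcD}(h_t) \coloneqq \Prx_{\bx \sim \mcD}[h_t(\bx) \neq f(\bx)]$.

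Since $T \geq k$, the mistake-rate guarantee yields $\sum_{t=1}^T \Ind[h_t(\bx^{(t)}) \neq f(\bx^{(t)})] \leq \eps T$ along every sample path. Taking expectations gives $\frac{1}{T}\sum_t \error_{\mcD}(h_t) \leq \eps$: on average, the hypotheses produced during the run have true error at most $\eps$. To extract a single good hypothesis, draw a fresh validation set of $V = \Theta(\log(T)/\eps)$ samples from $\mcD$ and select $h_{t^\star}$ minimizing the empirical error on the validation set. A Chernoff plus union bound over the $T$ candidates shows that, with some constant probability at least $5/8$, all empirical errors approximate the true errors within additive $\eps$; combined with the average-error bound and Markov's inequality (which guarantees at least a constant fraction of the $h_t$'s have true error $O(\eps)$), in this event the output hypothesis has true error $O(\eps)$, while in the complementary event the error is bounded trivially by $1$. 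Calibrating the constants then yields expected error $\leq \tfrac{3}{8} + 5\eps$.

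The sample complexity is $T + V = O(k + \tfrac{1}{\eps}\log k)$, and the runtime is $O(sk)$ for the training phase plus $O(\tfrac{1}{\eps}\log k)$ for acquiring and processing the validation set, assuming hypothesis evaluations fold into the online learner's per-round work, giving the claimed $O(sk + \tfrac{1}{\eps}\log k)$ bound.

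The main obstacle will be pinning down the precise constants in the error bound: one must carefully balance the validation failure probability (contributing the $\tfrac{3}{8}$) against the slack in the Markov/Chernoff application (contributing the $5\eps$), so the various constant factors compose cleanly. A subtler runtime issue is that naively evaluating all $T$ candidate hypotheses on the validation set would inflate the runtime by a factor of $T$; avoiding this requires either that the online learner maintain and evaluate its candidate hypotheses in amortized fashion, or that one replace the argmin over validation with a cheaper scheme such as outputting a uniformly random $h_{\bt}$, which directly gives $\Ex[\error_{\mcD}(h_{\bt})] \leq \eps$ by linearity and already suffices for the claimed expected-error bound.
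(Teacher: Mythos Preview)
The paper does not actually supply a proof of this statement: it is recorded as a \emph{Fact} attributed to Littlestone~\cite{Lit89} and invoked as a black box in the proof of \Cref{thm:online-computational}. So there is no in-paper argument to compare against.

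Your sketch is the standard online-to-batch conversion and is the correct idea: feed $T=\Theta(k)$ i.i.d.\ labeled samples to the online learner, use independence of $\bx^{(t)}$ from $h_t$ to identify the expected mistake fraction with $\tfrac{1}{T}\sum_t \error_{\mcD}(h_t)$, invoke the mistake-rate guarantee (which applies since any fixed sample sequence is a valid adversary) to bound this average by~$\eps$, and then extract one good hypothesis either by validation or by random selection. This is precisely the reduction implicit in Littlestone's work, specialized here from mistake bound to mistake rate. Your own observation that outputting a uniformly random $h_{\bt}$ already gives $\Ex[\error_{\mcD}(h_{\bt})]\le \eps$ is the cleanest route and sidesteps both the constant-balancing and the runtime blowup you flag for the validation scheme; indeed it yields a stronger bound than the stated $\tfrac{3}{8}+5\eps$, which is more than enough for the paper's application (breaking PRFF security requires only expected error bounded away from $\tfrac{1}{2}$). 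The specific constants $\tfrac{3}{8}$ and $5$ are not derived anywhere in the paper, so you should not expect to recover them exactly from first principles.
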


\begin{proof}[Proof of \Cref{thm:online-computational}]
Let $\{H_n\}_{n \in \N}$ be the sets guaranteed to exist by the assumption of \Cref{conj:online-assumption}.
As in the proof of \Cref{thm:computational-body}, let $\mcF = \{f_s:\zo^{n} \to \zo\}_{s \in \zo^n}$ be a 
cryptographic PRFF with negligible security parameter $\gamma$, and let the concept class $\mcC_n$ 
consist of all 
\[c_s(x) = f_s(x) \wedge \Ind[x \in H_n] \]
for $s \in \zo^n$. 
We observe that each $c_s$ is a polynomial-size circuit; this follows from the fact that $f_s$ is a polynomial-size
circuit and the assumption that membership in $H_n$ is decided by a polynomial-size circuit as well.

\paragraph{Hardness of learning:} We will prove item (i) by reduction to \Cref{lem:PRFF-to-hardnes-of-learning}.
Suppose there is a time-$s$ online learner $A$ for $\mcC_n$ 
such that for any adversary,
$\mrate_k(A,\Adv) < \eps$. 
We will build a $\poly(s,k)$-time PAC learner for $\mcC_n$,
which will break PRFF security.
By \Cref{fact:littlestone}, there is a PAC learner for $\mcC_n$ with expected error $\le \tfrac 38 + 5\eps$ and $\poly(s,k)$ time and samples.
Then \Cref{lem:PRFF-to-hardnes-of-learning} gives a polynomial-time nonuniform adversary (of the PRF $f_s$) with advantage
$\frac{1}{8} - 5\eps - \frac{\poly(s,k)}{2|H_n|}$.
Substituting 0.01 for $\eps$ and applying the assumption that $|H_n|$ is superpolynomial
contradicts the assumption that $\gamma$ is negligible,
thus showing that such a learner does not exist.

\paragraph{Learnability with efficient adversaries:} To prove item (ii), 
if $s < 2/\eps$, we pad the adversary with inoperative gates so that its size becomes
$2/\eps$; if $s > 2/\eps$, we set $\eps$ to $2/s$. 
Then, break the sequence of rounds into blocks of length $s^{c+1}$, where $c$ is the constant given
by \Cref{conj:online-assumption}.
We will make a slight modification to the default-zero learner described in \Cref{thm:statistical-online}:
at the start of each block, it will empty its memory.
The learner will have space to store $s^{c}$ elements, and thus a space bound of $O(s^{c} \cdot n)$.
Observe that if there is a size-$s$ adversary that produces a sequence $x^{(1)},\ldots,x^{(s^c)}$ in a block, then there is another adversary of size $O(s)$ that produces the same sequence in the first block,
as the state of the first adversary at the start of the block can be hardcoded.
Then by \Cref{conj:online-assumption}, no block can contain more than $s^{c}$ distinct elements of $H_n$.
Since all functions in the concept class take value zero outside of $H_n$, and each of the $s^{c}$ 
elements is stored after its label is revealed, the learner can only make a mistake the first time each 
of these $s^{c}$ elements is shown. Thus, the fraction of rounds in this block on which the learner 
makes a mistake is at most $\eps/2$.
Since every block contains at most $\eps/2$ fraction of mistakes, after the first block is
finished, the running mistake rate is at most $\eps$.
Thus we have
\[\mrate_{s^c}(A, \Adv) \le \eps.\]
Since we set $s$ to the larger of $s$ and $2/\eps$, the size bound of the learner is
$O(n \cdot \poly(s/\eps))$ and the convergence time is $\poly(s/\eps)$.
\end{proof}
 
\subsection{Relating evasive-set conjectures}
In this section, we relate \Cref{conj:online-assumption} to the main evasive-set conjecture (\Cref{conjecture:hard-to-sample}) and the hardness of finding many satisfying assignments.
\begin{enumerate}
    \item We show that our online analogue of the evasive-set conjecture, \Cref{conj:online-assumption}, is true relative to a random oracle.
    The proof is analogous to the proof of \Cref{thm:hard-to-sample-relative-to-random-oracle}.
    \item We also show that  \Cref{conj:online-assumption} implies the offline variant \Cref{conjecture:hard-to-sample}.
\end{enumerate}


\begin{lemma}[\Cref{conj:online-assumption} is true relative to a random oracle]
\label{lem:online-random-oracle}
    For every sufficiently large $n$, with probability at least $1 -o(1)$ over a uniform random oracle $\bmcO$, there is a set $H_{\bmcO}$ satisfying:
    \begin{enumerate}
        \item Explicit: There is an oracle circuit $C$ of size $O(n^2)$ s.t. $C^{\bmcO}$ computes membership in $H_{\bmcO}$.
        \item Evasive: For every polynomial $p(n)$, oracle adversary $\Adv$ of size $p(n)$, and sufficiently large $n$, the first $p(n)^{2.01}$ rounds produced by $\Adv^{\bmcO}
        $ with a default-zero learner of size $p(n)^{1.01}$ contain at most $p(n)^{1.01}$ distinct elements of $H_n$.
        \item Large: $H_{\bmcO}$ accepts $2^{\Omega(n)}$ many unique inputs.
    \end{enumerate}
\end{lemma}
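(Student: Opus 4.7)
The plan is to mirror the proof of~\Cref{thm:hard-to-sample-relative-to-random-oracle} almost verbatim, with the role of the ``size-$s$ generator'' replaced by the composite machine consisting of a size-$p(n)$ oracle adversary interacting with the (deterministic, oracle-free) default-zero learner of size $p(n)^{1.01}$. Concretely, I set $\delta \coloneqq 2^{-0.5n}$ and let $H_{\bmcO}$ be the set of accepting inputs of the oracle circuit $C^{\bmcO}$ promised by~\Cref{lem:oracle-LB}. Explicitness then holds by construction since $C$ has size $O(n \log(1/\delta)) = O(n^{2})$, and largeness ($|H_{\bmcO}|\geq 2^{0.5n}$) holds with failure probability $\exp(-\Omega(2^{0.5n}))$ by~\Cref{lem:oracle-LB}(2).

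For evasiveness, fix $s = p(n)$ and a size-$s$ oracle adversary $\Adv$. I will package the first $T \coloneqq s^{2.01}$ rounds of the $\Adv^{\bmcO}$/default-zero-learner interaction as a single randomized query algorithm $T_{\Adv}$ in the sense of~\Cref{lem:oracle-LB}(1). Each round, the adversary contributes at most $s$ queries to $\bmcO$; the default-zero learner contributes zero because it makes no oracle calls. After simulating all $T$ rounds, $T_{\Adv}$ evaluates $C^{\bmcO}$ on each of the produced inputs at a cost of $O(n \log n)$ queries per evaluation, and outputs the set of produced inputs that lie in $H_{\bmcO}$. The total query count is thus
\[ q \;=\; O\big(T \cdot (s + n \log n)\big) \;=\; O\big(s^{3.01} + s^{2.01}\, n \log n\big). \]

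Next I invoke~\Cref{lem:oracle-LB}(1) with $k \coloneqq s^{1.01}$. The hypothesis $q \leq k/(16\delta)$ becomes $q \leq s^{1.01}\cdot 2^{0.5n}/16$, which holds comfortably whenever $s \leq 2^{0.24 n}$ and $n$ is large. The lemma then gives that, for this fixed $\Adv$, the probability over $\bmcO$ that $T_{\Adv}$ outputs $k = s^{1.01}$ distinct accepting inputs of $C^{\bmcO}$ is at most $\exp(-\Omega(s^{1.01}))$. Union-bounding over the $s^{O(s)} = \exp(O(s \log s))$ size-$s$ oracle circuits, the probability that \emph{some} size-$s$ adversary produces $s^{1.01}$ distinct elements of $H_{\bmcO}$ within its first $s^{2.01}$ rounds (with a default-zero learner of size $s^{1.01}$) is at most $\exp(O(s \log s))\cdot \exp(-\Omega(s^{1.01})) \leq \exp(-\Omega(s^{1.01}))$ for large enough $s$. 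A final union bound over $n \leq s \leq 2^{0.24 n}$ is then easily absorbed, and this single good event suffices: for any fixed polynomial $p$, one has $p(n) \leq 2^{0.24 n}$ for all sufficiently large $n$, so evasiveness holds for every polynomial $p$ at all sufficiently large $n$.

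The main obstacle is a careful but routine bookkeeping issue: verifying that the query budget from a single round of the adversary--learner interaction really is only $s$ (because the default-zero learner is oracle-free and deterministic), and that the additional $O(n \log n)$ check-queries per produced input do not blow $q$ past the threshold $k/(16\delta)$ in~\Cref{lem:oracle-LB}(1). Once this is in place, the rest of the proof is a direct analogue of~\Cref{thm:hard-to-sample-relative-to-random-oracle}, with the ``swap the quantifiers'' union bound over circuits of size~$s \leq 2^{0.24n}$ working for essentially the same numerical reason (the failure bound $\exp(-\Omega(s^{1.01}))$ dominates the circuit count $\exp(O(s\log s))$).
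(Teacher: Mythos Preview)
Your proposal is correct and takes essentially the same approach as the paper: both set $\delta = 2^{-0.5n}$, let $H_{\bmcO}$ be the accepting set of the circuit from \Cref{lem:oracle-LB}, package the adversary--learner interaction as a query algorithm, and invoke \Cref{lem:oracle-LB}(1) with $k = s^{1.01}$ followed by a union bound over size-$s$ circuits. Your version is in fact slightly more careful than the paper's in two respects---you explicitly account for the $O(n\log n)$ membership-check queries per produced input, and you explicitly perform the final union bound over $n \le s \le 2^{0.24n}$---but these are minor bookkeeping refinements of the same argument.
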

\begin{proof}
We will let $H_{\bmcO}$ be the set of accepting assignments to $C_{\bmcO}$ as defined in the proof of \Cref{lem:oracle-LB} with $\delta \coloneqq 2^{-0.5n}$.
Clearly this satisfies the explicitness condition.
By \Cref{lem:oracle-LB}, with probability at least $1 - \exp(-\Omega(2^{0.5n}))$,
$H_{\bmcO}$ has size at least $2^{0.5n}$, thus satisfying the largeness condition as well.
We will now argue that with high probability over $\bmcO$, $H_{\bmcO}$ also satisfies the evasiveness condition.

Fix an oracle adversary $\Adv$ of size $s \coloneqq p(n)$. 
Consider the algorithm that simulates the run of $\Adv$ with the default-zero learner of size $s^{1.01}$
and outputs all the points produced by $\Adv$. 
If $\Adv$ produces more than $s^{1.01}$ distinct elements of $H_{\bmcO}$ in the first $s^{2.01}$ rounds with the default-zero learner, then
this simulation is a query algorithm that outputs $s^{1.01}$ distinct elements of $H_{\bmcO}$ and makes $s^{3.01}$ oracle queries (since in each round, the adversary makes at most $s$ queries and the learner makes none. 
We apply \Cref{lem:oracle-LB} with $q = s^{3.01}$, $k = s^{1.01}$, and $\delta = 2^{-0.5n}$,
which shows that the success probability of this algorithm is at most $\exp(-\Omega(s^{1.01}))$ over the randomness of $\bmcO$.
Thus with probability at least $1 - \exp(-\Omega(s^{1.01}))$, this adversary does \emph{not} produce $s^{1.01}$ distinct elements of $H_{\bmcO}$.

We now union bound over the $\exp(O(s \log s))$ possible adversaries of size $s$.
Then, with probability at least $1 - \exp(O(s \log s - s^{1.01}) = 1 - o(1)$ over the randomness of $\bmcO$, 
no size-$s$ adversaries produce $s^{1.01}$ distinct elements of $H_{\bmcO}$.
This concludes the proof of the evasiveness condition.
\end{proof}

Now we relate the online and offline evasive-set conjectures.
\begin{lemma}
\label{lem:relate-2}
\Cref{conj:online-assumption} implies \Cref{conjecture:hard-to-sample}.
\end{lemma}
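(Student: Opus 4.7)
The plan is to take the sets $\{H_n\}_{n \in \N}$ guaranteed by \Cref{conj:online-assumption} and verify that they satisfy \Cref{conjecture:hard-to-sample}; explicitness and largeness transfer immediately, so the task reduces to proving offline evasiveness. I argue by contrapositive: any size-$s$ distribution that $(1/s, s^{c^*})$-hits $H_n$ will be converted into a polynomial-sized online adversary whose output stream contradicts \Cref{conj:online-assumption}. Throughout, $c$ denotes the online conjecture's constant and $c^*$ the offline constant, to be fixed in terms of $c$ at the end.

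The core of the reduction relies on \Cref{prop:hitting-to-many}: given a generator $\mcG:\zo^\ell \to \zo^n$ of size $\le s$ (so $\ell \le s$) whose induced distribution $(1/s, s^{c^*})$-hits $H_n$, drawing $m \coloneqq 2s^{c^*+1}$ i.i.d.~seeds and evaluating $\mcG$ on them yields at least $s^{c^*}$ distinct elements of $H_n$ with probability at least $1/2$. By the probabilistic method, I fix seeds $r_1^\star, \ldots, r_m^\star \in \zo^\ell$ whose images under $\mcG$ contain at least $s^{c^*}$ distinct elements of $H_n$.

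I then hardcode these seeds into an online adversary $\Adv$ whose state tracks a round counter $t$: on round $t \le m$, $\Adv$ looks up $r_t^\star$ from the hardcoded table and outputs $\mcG(r_t^\star)$, with an arbitrary consistent label (e.g., the constant $0$, assuming such a function lies in the target concept class); after round $m$ it outputs arbitrarily. Because the outputs are scripted, the default-zero learner's replies are ignored, and the counts of distinct $H_n$-elements are determined entirely by the hardcoded seeds. The circuit size of $\Adv$ is dominated by the generator ($O(s)$), the seed table ($O(m\ell)$ bits), and a multiplexer picking the $t$-th seed ($O(m\ell)$ gates), yielding $p \coloneqq |\Adv| = O(m\ell + s) = O(s^{c^*+2})$, which is polynomial in $n$.

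It remains to verify the violation of \Cref{conj:online-assumption}: within the first $m = 2s^{c^*+1} \le p^{c+1}$ rounds, $\Adv$ produces at least $s^{c^*}$ distinct elements of $H_n$, whereas the conjecture permits only $p^c = O(s^{c(c^*+2)})$. A contradiction arises whenever $s^{c^*} > s^{c(c^*+2)}$, equivalently $c^*(1-c) > 2c$; I therefore take $c^*$ to be any constant exceeding $2c/(1-c)$. The main subtlety, and the step that pins down the relation between the two conjectures' constants, is exactly this parameter matching: the hardcoded-seed construction blows up the adversary by an $O(m)$ factor, forcing $c < 1$ in \Cref{conj:online-assumption} for the reduction to close. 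If needed, the seed table can be shrunk by feeding pseudorandom indices into $\mcG$ via a limited-independence generator in the spirit of \Cref{fact:explicit prgs for rvs}, at the cost of extending \Cref{fact:SSS} from sum-concentration to distinct-image counts; this is the main technical obstacle if one wishes to push the implication to larger values of $c$.
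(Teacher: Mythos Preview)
Your high-level strategy matches the paper's: argue by contrapositive and convert a size-$s$ hitting distribution into an online adversary. The gap is exactly where you flag it. Hardcoding all $m = 2s^{c^*+1}$ seeds blows the adversary up to size $\Theta(ms)$, and your parameter-matching inequality $c^*(1-c) > 2c$ is satisfiable only when the online conjecture's constant satisfies $c < 1$. But \Cref{conj:online-assumption} merely asserts the existence of \emph{some} constant $c$, with no upper bound, so what you have actually proved is ``\Cref{conj:online-assumption} with $c<1$ implies \Cref{conjecture:hard-to-sample},'' which is strictly weaker than the lemma. You present the derandomization in your last paragraph as an optional improvement; it is in fact required.

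The paper closes the gap exactly along the lines you suggest, and this is the main technical content of the proof. Rather than hardcoding $m$ i.i.d.\ seeds, the adversary stores a single $O(\ell)$-bit seed $r$ for a pairwise-independent generator $\mcG$ (from \Cref{fact:explicit prgs for rvs}) and on round $t$ outputs $C(\mcG(r)^{(t)})$; together with a $\log m$-bit counter this keeps the adversary at size $O(s\log s + \log m)$. The supporting lemma is not an extension of \Cref{fact:SSS} but a second-moment argument (\Cref{prop:hitting-to-expected-pairwise}): for pairwise-independent samples marginally distributed as $\mcD$, the expected number of distinct elements of $H$ among $m$ samples is at least half of what it would be under full independence, hence at least $k/4$ by \Cref{prop:hitting-to-many}. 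A good seed is then fixed by averaging. With the adversary at size $O(s\log s)$ (padded to $s' = 2s^2$), the parameter matching goes through for every constant $c$.
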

We first show the following supporting claims:
\begin{proposition}
    \label{prop:hitting-to-expected-pairwise}
    Let $\mcD$ be a distribution that $(\eps, k)$-hits a set $H$. Then, for $m \coloneqq \floor{2k/\eps}$ and any pairwise independent random variables $\bx^{(1)}, \ldots, \bx^{(m)}$ that are marginally from $\mcD$, meaning for all $i \neq j$ the marginal distribution of $(\bx_i, \bx_j)$ is that of $\mcD^2$,
    \begin{equation*}
        \Ex\bracket[\bigg]{\, \sum_{x \in H} \Ind[x \in \set{\bx^{(1)}, \ldots, \bx^{(m)}}]} \geq \frac{k}{4}.
    \end{equation*}
\end{proposition}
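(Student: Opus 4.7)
The plan is to combine a second-moment (Paley--Zygmund) bound applied element-by-element with the $(\eps, k)$-hitting condition, splitting $H$ into ``heavy'' and ``light'' elements. The key point is that pairwise independence is already enough for a one-sided second moment estimate on each indicator.

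First I would set $\bN_x \coloneqq |\set{i \in [m] : \bx^{(i)} = x}|$ for each $x \in H$, so that the sum in question equals $\sum_{x \in H} \Ind[\bN_x \ge 1]$. Using marginal-from-$\mcD$ and pairwise independence, $\E[\bN_x] = m\mcD(x)$ and $\E[\bN_x^2] = m\mcD(x) + m(m-1)\mcD(x)^2$. The Paley--Zygmund inequality $\Pr[\bN_x \ge 1] \ge \E[\bN_x]^2 / \E[\bN_x^2]$ then simplifies (after dividing numerator and denominator by $m\mcD(x)$) to
\[
\Pr[\bN_x \ge 1] \;\ge\; \frac{m\mcD(x)}{1 + (m-1)\mcD(x)} \;\ge\; \tfrac{1}{2}\min\{1,\, m\mcD(x)\}.
\]

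Second, I would split $H$ according to whether $m\mcD(x) \ge 1$. Let $T \coloneqq \set{x \in H : m\mcD(x) \ge 1}$. If $|T| \ge k$, then contributing a $\tfrac{1}{2}$ from each element of $T$ already gives $\E[\sum_x \Ind[\bN_x \ge 1]] \ge k/2$. Otherwise $|T| < k$, and I would pad $T$ to a set $H^\star \supseteq T$ of size exactly $k$; every $x \in H \setminus H^\star$ then lies outside $T$ and hence satisfies $m\mcD(x) < 1$, so $\min\{1, m\mcD(x)\} = m\mcD(x)$. The $(\eps, k)$-hitting hypothesis forces $\mcD(H \setminus H^\star) \ge \eps$, which yields
\[
\E\bracket[\Big]{\,\sum_{x \in H} \Ind[\bN_x \ge 1]} \;\ge\; \tfrac{1}{2}\sum_{x \in H \setminus H^\star} m\mcD(x) \;\ge\; \tfrac{m\eps}{2} \;\ge\; \tfrac{(2k/\eps - 1)\eps}{2} \;=\; k - \tfrac{\eps}{2}.
\]
In either case the bound is at least $k/4$ (in fact at least $k/2$ for $k \ge 1$, so the constant $1/4$ in the statement leaves slack).

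There is no real obstacle here: the only subtle point is checking that Paley--Zygmund goes through under mere pairwise independence (it does, since $\E[\bN_x^2]$ only involves pairs), and remembering that $(\eps, k)$-hitting is monotone in $|H^\star|$ so that padding $T$ up to size $k$ is harmless. The rest is bookkeeping with $m = \lfloor 2k/\eps \rfloor$.
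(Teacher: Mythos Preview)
Your proof is correct and shares the paper's key step: the second-moment (Paley--Zygmund) bound $\Pr[x \in \{\bx^{(1)},\ldots,\bx^{(m)}\}] \ge \tfrac12 \min\{1, m\mcD(x)\}$ for each $x$, which only needs pairwise independence. The difference is in how you aggregate over $x \in H$. The paper argues indirectly: it observes that for fully independent samples $\by^{(1)},\ldots,\by^{(m)}$ one has $\Pr[x \in \{\by^{(1)},\ldots,\by^{(m)}\}] \le \min\{1, m\mcD(x)\}$, so the pairwise expectation is at least half the independent one, and then invokes the separate coupon-collector-style Proposition~\ref{prop:hitting-to-many} to lower-bound the independent expectation by $k/2$. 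Your heavy/light split, padding $T$ to a size-$k$ set $H^\star$ and applying the $(\eps,k)$-hitting condition directly to $H\setminus H^\star$, is more self-contained and avoids the detour through the fully independent case (indeed, as you note, it even yields $k/2$ rather than $k/4$). The paper's route has the mild structural advantage of factoring through a reusable lemma about i.i.d.\ sampling; yours is the shorter path to this particular statement.
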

\begin{proof}
    Let $\bx^{(1)}, \ldots, \bx^{(m)}$ be pairwise independent and $\by^{(1)}, \ldots, \by^{(m)}$ be fully independent, both with the same marginal distribution of $\mcD$ on all $m$ coordinates. We will first show that for any $x$
    \begin{equation*}
        \Prx\bracket*{x \in \set{\bx^{(1)}, \ldots, \bx^{(m)}}} \geq \frac{1}{2} \Prx\bracket*{x \in \set{\by^{(1)}, \ldots, \by^{(m)}}}.
    \end{equation*}
    For each $i \in [m]$ let $\bz_i \coloneqq \Ind[\bx_i = x]$ and $\bZ = \sum_{i \in [m]} \bz_i$. Then, the $\bz_i$ are pairwise independent and each have expectation $p$. Therefore, denoting $p \coloneqq \mcD(x)$,
    \begin{equation*}
        \Ex[\bZ] = pm \quad\quad\text{and}\quad\quad\Var[\bZ]  \leq pm.
    \end{equation*}
    Applying the second moment method,
    \begin{align*}
         \Pr\bracket[\Big]{x \in \set[\Big]{\bx^{(1)}, \ldots, \bx^{(m)}}} &= \Pr[\bZ > 0] \\
         &\geq \frac{\Ex[\bZ]^2}{\Ex[\bZ^2]} \\
         &\geq \frac{p^2m^2}{p^2m^2 + pm} \\
         &= \frac{pm}{pm + 1}\\
         & \geq \min\set*{\frac{1}{2}, \frac{pm}{2}}.
    \end{align*}
    In contrast, for fully independent samples,
    \begin{equation*}
        \Pr\bracket[\Big]{x \in \set[\Big]{\by^{(1)}, \ldots, \by^{(m)}}} = 1 - (1 - p)^m \leq \min\set{1, pm}.
    \end{equation*}
    Therefore by linearity of expectation, we have
    \begin{equation*}
        \Ex\bracket*{\sum_{x \in H} \Ind[x \in \set{\bx^{(1)}, \ldots, \bx^{(m)}}]} \geq \frac{1}{2} \Ex\bracket*{\sum_{x \in H} \Ind[x \in \set{\by^{(1)}, \ldots, \by^{(m)}}]}.
    \end{equation*}
    From \Cref{prop:hitting-to-many}, it follows that 
    \[\Ex_{\by^{(1)},\ldots,\by^{(m)} \sim \mcD} \bracket[\bigg]{\,\sum_{x \in H} \Ind[x \in \set{\by^{(1)}, \ldots, \by^{(m)}}} \ge k/2;\]
    thus the desired result follows.
\end{proof}

\begin{proposition}
\label{prop:imp2}
If $H$ is $(\eps, k)$-hit by a size-$s$ distribution $\mcD$, then it is $(\eps/8, m = \floor{\frac{2k}{\eps}})$-hit by a size-$O(s \log s + \log m)$-adversary.
\end{proposition}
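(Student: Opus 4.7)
The plan is to derandomize the sampling procedure of \Cref{prop:hitting-to-expected-pairwise} via a pairwise-independent pseudorandom generator, so that a small-size adversary can deterministically ``play out'' $m$ approximately-independent samples from $\mcD$ online. Let $\mcG:\zo^\ell \to \zo^n$ be the size-$s$ generator for $\mcD$; in particular $\ell \le s$. I would first invoke \Cref{fact:explicit prgs for rvs} with $d=2$ to obtain an explicit pairwise-independent marginally-uniform PRG $\mathcal{P}:\zo^{\ell'} \to (\zo^\ell)^m$ of seed length $\ell' = O(\ell) = O(s)$, whose indexed evaluation $(r,i) \mapsto \mathcal{P}(r)^{(i)}$ is computed by a circuit of size $O(\ell' \log \ell' + \log m) = O(s\log s + \log m)$.

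For any fixed seed $r \in \zo^{\ell'}$, the outputs $\mcG(\mathcal{P}(r)^{(1)}), \ldots, \mcG(\mathcal{P}(r)^{(m)})$ are $m$ pairwise-independent samples from $\mcD$. Applying \Cref{prop:hitting-to-expected-pairwise} with $m = \lfloor 2k/\eps \rfloor$ yields
\[
\Ex_{\br \sim \zo^{\ell'}} \Bigg[\, \sum_{x \in H} \Ind\!\big[\,x \in \{\mcG(\mathcal{P}(\br)^{(1)}), \ldots, \mcG(\mathcal{P}(\br)^{(m)})\}\,\big]\Bigg] \;\ge\; k/4.
\]
By the probabilistic method, I fix some $r^\star \in \zo^{\ell'}$ for which the deterministic sequence $x^{(i)} \coloneqq \mcG(\mathcal{P}(r^\star)^{(i)})$, $i \in [m]$, contains at least $k/4$ distinct elements of $H$.

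The adversary hardcodes $r^\star$, maintains a round counter $i \in [m]$ in its state ($O(\log m)$ bits), and in round $i$ outputs $x^{(i)}$ together with the label $\Ind[x^{(i)} \in H]$ for the target $f_H \in \mcA_H$ whenever a label is required (as in the intended use inside \Cref{lem:relate-2}, where $H$ is explicit). The circuit decomposes as $O(\log m)$ for the counter and its increment, $O(s\log s + \log m)$ for evaluating $\mathcal{P}(r^\star)^{(\cdot)}$ with $r^\star$ hardcoded, and $s$ for $\mcG$, summing to $O(s\log s + \log m)$ overall. In $m$ rounds this sequence exposes at least $k/4$ distinct elements of $H$; the empirical distribution of the sequence therefore places mass at least $(k/4)/m = \eps/8$ on $H$, and equivalently the default-zero learner of queue size $m$ makes at least an $\eps/8$ fraction of mistakes (its queue never fills since $k/4 \le m$). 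Either perspective gives the $(\eps/8, m)$-hit.

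The main obstacle is ensuring the indexed PRG fits within the claimed $O(s\log s + \log m)$ size bound rather than the $\Omega(s \cdot m)$ one would get from naively unrolling $m$ copies of $\mcG$ into a single circuit. For pairwise independence this is achievable with the standard linear construction $\mathcal{P}(a,b)^{(i)} = a \cdot i + b$ over $\mathbb{F}_{2^{\max(\ell,\log m)}}$, where indexed evaluation is one field multiplication and one addition, matching the target size once the index $i$ is taken from the counter. A minor sanity check is that hardcoding $r^\star$ into the adversary circuit is permitted because an adversary is a nonuniform circuit and may therefore carry arbitrary hardwired constants.
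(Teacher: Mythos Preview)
Your approach is essentially identical to the paper's: use a pairwise-independent PRG to derandomize $m$ draws from $\mcD$, invoke \Cref{prop:hitting-to-expected-pairwise} to get an expected $k/4$ distinct elements of $H$, fix a good seed by averaging, and have the adversary hardcode that seed together with a round counter.

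One small point where you diverge: you have the adversary output the label $\Ind[x^{(i)}\in H]$, whereas the paper simply outputs $f(x^{(t-1)})=0$ in every round. The all-zeros labeling is consistent with the all-zeros function in $\mcA_H$, so the adversary remains valid, and since the sequence $x^{(1)},\ldots,x^{(m)}$ is fully determined by the counter and hardcoded seed (it does not depend on the learner's responses), the count of distinct elements of $H$ is unaffected by the labels. Your choice forces you to assume $H$ is explicit to keep the adversary small; the paper's choice avoids this entirely, which matters because the proposition as stated makes no explicitness assumption on $H$. Switching your labels to the constant $0$ closes this, and your mistake-rate paragraph then becomes unnecessary (the relevant ``hit'' notion here is the number of distinct elements of $H$ in the sequence, not mistakes).
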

We proceed in a manner similar to the Let $C:\zo^{\ell} \to \zo^n$ be the size-$s$ generating circuit for $\mcD$. Let $\mcG: \zo^{O(\ell)} \to \zo^{s m}$ be the pairwise-independent generator of size $O(\ell \log \ell)$ given by \Cref{fact:explicit prgs for rvs}.
Our adversary will be hardcoded with a seed $r \in \zo^{O(\ell)}$ and keeps a counter of the round number.
On each round $t$, it does the following: 
\begin{itemize}
\item Increment the counter $t$.
\item Let $x^{(t)} = C(\mcG(r)^{(t)})$ and $f(x^{(t-1)}) = 0$.
\item Send $x^{(t)}$ and $f(x^{(t-1)})$ to the learner.
\end{itemize}
Since the size of $C$ is at most $s$, we can upper bound $\ell \leq s$ in which case the size of $\mcG$ is at most $O(s \log s)$. We also have that $|r| \leq O(s)$ and that the counter has size $O(\log m)$, so the total size of the adversary is $O(s \log s + \log m)$, as desired.

 Now we argue that there is some seed $r$ for which it produces many members of $H$.
Consider the distribution of 
\[\mcG(\br) \coloneqq \ba^{(1)},\ldots,\ba^{(m)}\] 
where $\br$ is uniform over $\zo^{O(\ell)}$.
Then $C(\ba^{(1)}),\ldots,C(\ba^{(m)})$ are pairwise independent and marginally distributed according to $\mcD$. 

By \Cref{prop:hitting-to-expected-pairwise}, we have
\begin{equation*}
\Ex_{\br \sim \zo^{s}}\bracket[\bigg]{\, \sum_{x \in H} \Ind\bracket*{x \in S(\br)}} \geq \frac{k}{4} \quad\quad\text{where}\quad S(r) \coloneqq \set*{C_{\phi}\paren[\big]{\mcG(r)^{(1)}}, \ldots, C_{\phi}\paren[\big]{\mcG(r)^{(m)}}} .
\end{equation*}
Therefore, there exists some $r$ for which the sequence of $x^{(t)}$'s produced by the learner contains at least $k/4$ members of $H$. This sequence has length $m = \floor{2k/\eps}$, and so at least $\frac{k/4}{\floor{2k/\eps}} \geq \eps/8$ fractions of the point in the sequence are unique members of $H$.

\begin{proof}[Proof of \Cref{lem:relate-2}]
We claim that the sets $H_n$ conjectured to exist by \Cref{conj:online-assumption} are the same sets that witness \Cref{conjecture:hard-to-sample}.

Suppose some $H$ does not witness \Cref{conjecture:hard-to-sample}: then for all constants $c$, there is some $s = \poly(n)$ such that $H$ is $(1/s, s^c)$-hit by some distribution generated by a size-$s$ circuit $C$.
We claim that for $c' = c/2$, there is some $s' = \poly(n)$ such that $H$ is $(1/s',(s')^{c'+1})$-hit by a size-$s'$ adversary.
By the bijection between $c$ and $c'$, it follows that for all $c'$, there is some $s' = \poly(n)$ such that $H$ is $(1/s',(s')^{c'+1})$-hit by a size-$s'$ adversary, and so $H$ does not witness \Cref{conj:online-assumption}.

Consider such an $H$ that does not witness \Cref{conjecture:hard-to-sample}.
By \Cref{prop:imp2}, $H$ is $(1/8s, 2s^{c+1})$-hit by a size-$O(s \log s)$-adversary, so this adversary produces $s^c/4$ distinct elements of $H$ in the first $2s^{c+1}$ rounds.
For ease of analysis we pad the adversary with inoperative gates so that its size is $s' = 2s^2$.
Then the number of elements produced is $(s')^{c/2}$, and the number of rounds is 
\[2s^{c+1} =8s^{(c+1)/2} \le (s')^{c/2+1},\]
as we can assume $n \geq 4$ in which case $s \ge 4$ and $s' \geq 64$.
Then $H$ is $(1/s', (s')^{c/2+1})$-hit by this adversary.
By the definition of evasion of adversaries, this set fails to witness \Cref{conj:online-assumption}.

Thus, since every set that witnesses \Cref{conj:online-assumption} also witnesses \Cref{conjecture:hard-to-sample}, the lemma follows.
\end{proof}

\section*{Acknowledgments}

We thank the anonymous reviewers for helpful comments and feedback.  

 Guy, Caleb, Carmen, and Li-Yang are supported by NSF awards 1942123, 2211237, 2224246, a Sloan Research Fellowship, and a Google Research Scholar Award. Guy is also supported by a Jane Street Graduate Research Fellowship and Carmen by an NSF GRFP. Jane is supported by NSF awards 2006664 and 310818 and an NSF GRFP.

\bibliography{ref}
\bibliographystyle{alpha}

\end{document}